\renewcommand{ \baselinestretch}{1.3}
\newtheorem{theorem}{Theorem}
\newtheorem{definition}{Definition}
\newtheorem{lemma}{Lemma}
\newtheorem{remark}{Remark}
\newtheorem{proposition}{Proposition}
\newcommand{\Perp}{\perp\!\!\!\perp}
\newcommand{\qedl}{\hfill $\blacksquare$}
\newcommand{\qed}{\hfill $\blacksquare$}
\newcommand{\qedi}{\hfill $\square$}
\def\var{\textrm{var}}
\def\plim{\textrm{plim}}
\def\E{\mathbb{E}}
\newenvironment{proof}[1][Proof]{\begin{trivlist}
\item[\hskip \labelsep {\bfseries #1}]}{\end{trivlist}}
\begin{document}
\title{Realised quantile-based estimation of the integrated variance\thanks{Podolskij gratefully acknowledges financial support from CREATES funded by the Danish National Research Foundation. We would like to thank Peter Bank, \'{A}lvaro Cartea, Fulvio Corsi, Dick van Dijk, Dobrislav Dobrev, Bruce Lehman, Haikady Nagaraja, Kevin Sheppard (discussant), Neil Shephard and the seminar participants at the London-Oxford Financial Econometrics Study Group at Imperial College, Humboldt-Universit\"{a}t zu Berlin, ECARES of Universit\'{e} Libre de Bruxelles, LSE Department of Statistics, Federal Reverse Board, Washington, Amsterdam Business School, Tinbergen Institute, Rotterdam, the Workshop on Mathematical Finance for Young Researchers, Quantitative Products Laboratory, Berlin, and the SITE Summer Workshop 2008, Stanford for valuable comments and discussions.}}

\author{Kim Christensen\thanks{CREATES, School of Economics and Management, Aarhus University, Building 1322, Bartholins All\'{e} 10, 8000 Aarhus, Denmark. E-mail \texttt{kchristensen@creates.au.dk}.}
\and
Roel Oomen\thanks{Deutsche Bank AG, Winchester House, 1 Great Winchester Street, London EC2N 2DB, UK and affiliated with the Department of Quantitative Economics, the University of Amsterdam, The Netherlands. E-mail: \texttt{roel.ca.oomen@gmail.com}. Phone: +44 (0) 207 54 56989.}
\and
Mark Podolskij\thanks{Podolskij is with ETH Z\"{u}rich, Department of Mathematics, R\"{a}mistrasse 101, CH-8092 Z\"{u}rich, Switzerland and affiliated with CREATES, University of Aarhus, Denmark. E-mail: \texttt{mark.podolskij@math.ethz.ch}.}}

\date{January, 2010}

\maketitle

\begin{abstract}
In this paper, we propose a new jump robust quantile-based realised variance measure of ex-post return variation that can be computed using potentially noisy data. The estimator is consistent for the integrated variance and we present feasible central limit theorems which show that it converges at the best attainable rate and has excellent efficiency. Asymptotically, the quantile-based realised variance is immune to finite activity jumps and outliers in the price series, while in modified form the estimator is applicable with market microstructure noise and therefore operational on high-frequency data. Simulations show that it has superior robustness properties in finite sample, while an empirical application illustrates its use on equity data.

\bigskip \noindent \textbf{Keywords}: Finite activity jumps; Market microstructure noise; Order statistics; Outliers; Realised variance.

\medskip \noindent \textbf{JEL Classification}: C10; C80.
\end{abstract}

\thispagestyle{empty}

\newpage

\section{Introduction} \setcounter{page}{1}

In recent years, our understanding of asset price dynamics has been significantly enhanced by the increasing availability of intra-day financial tick data in conjunction with the rapid development and harnessing of the necessary econometric tools. Realised variance, defined as the sum of squared intra-period returns, has been a key driver in this literature \citep*[e.g.][]{andersen-bollerslev-diebold-labys:01a, barndorff-nielsen-shephard:02a} as it provides a simple yet highly efficient way to consistently estimate the quadratic variation of a price process. However, when faced with the realities of high-frequency data, the use of realised variance is limited in two important ways. First, realised variance is overly sensitive to an inherent feature of high-frequency data, namely market microstructure noise. In fact, it destroys the consistency of the estimator. Second, realised variance is an estimator of the total variation -- the sum of diffusive and jump variation -- and consequently cannot distinguish between these two fundamentally different sources of risk. As emphasized by \citet*{ait-sahalia:04a}, the ability to disentangle jumps from volatility is key for the valuation of derivatives \citep*[e.g.][]{merton:76a, duffie-pan-singleton:00a}, risk measurement and management \citep*[e.g.][]{duffie-pan:01a}, as well as asset allocation \citep*[e.g.][]{jarrow-rosenfeld:84a, liu-longstaff-pan:03a}. \citet*{andersen-bollerslev-diebold:07a} and \citet*{bollerslev-kretschmer-pigorsch-tauchen:09a} also point to its importance for the empirical modeling of asset price dynamics and forecasting of volatility.

In response, two largely separate strands of literature have emerged. One on jump-robust realised variance measures, most notably the influential bi-power variation of \citet*{barndorff-nielsen-shephard:04b} and the threshold estimators of \citet*{jacod:08a,mancini:04b,mancini:09a}. The other on noise-robust realised variance measures, where the main approaches are based on subsampling \citep*[see][]{zhang:06a, zhang-mykland-ait-sahalia:05a}, kernel-based autocovariance adjustments \citep*[see][]{barndorff-nielsen-hansen-lunde-shephard:08a, zhou:96a}, and pre-averaging methods \citep*[see][]{jacod-li-mykland-podolskij-vetter:09a, podolskij-vetter:09a}.\footnote{Other methods include sparse sampling \citep*[e.g.][]{andersen-bollerslev-diebold-labys:00a, bandi-russell:08a},
pre-filtering \citep*{andersen-bollerslev-diebold-ebens:01a, bollen-inder:02a, hansen-large-lunde:08a}, model-based corrections \citep*[e.g.][]{corsi-zumbach-muller-dacorogna:01a}, and wavelet-based methods of \citet*{fan-wang:07a}. Variations and extensions of the realised kernel and subsampling approach can be found in \cite{sun:06a} and \citet*{nolte-voev:07a}, respectively.} Yet, the jump-robust realised variance measures are typically not robust to noise and the noise-robust realised variance measures are typically not robust to jumps. The contribution of this paper is to integrate both issues: we develop a new quantile-based realised variance (QRV) measure, which constitutes the first estimator of integrated variance that, at the same time, is highly efficient and simultaneously robust to jumps and microstructure noise. We present a complete asymptotic theory for this class of estimators, with central limit theorems that hold up in the presence of finite activity jumps. In addition to noise and jump robustness, another appealing feature of our estimator is that it is robust to outliers in the price process. Outliers are unavoidable in high-frequency data (due to, for instance, delayed trade reporting, recording errors, decimal misplacement, etc.) and are often hard to filter out systematically due to their random and irregular nature while the vast quantities of data make ``visual inspection'' impractical. In the paper, we illustrate that this feature of our estimator can be crucial in practical applications.

The construction of the QRV is based on the fundamental relationship between quantiles and the spread of the normal distribution. A stylized example provides the key intuition. For i.i.d. Gaussian data with mean zero and variance $\sigma^2$, we know that the 95\% quantile equals $1.645\sigma$. Thus, with a measurement of a sample quantile, volatility can be estimated by inverting this relationship. Of course, other quantiles can be added to improve the efficiency of the estimator and as long as the selected quantiles are sufficiently far away from the extreme tails we have robustness to outliers. Estimators of this kind have a long history in the statistics literature: they can be traced back to \citet*{pearson:20a} and are further studied by \citet*{mosteller:46a}, \citet*{eisenberger-posner:65a}, and \citet*{david:70a}.

The scope of this paper is, however, more ambitious than the above illustration in that we aim to develop an estimator of the integrated variance under very weak conditions on the underlying process, need to deal with microstructure noise, and consider limits where data are sampled at an increasing rate over a fixed time window. More formally, let $\{X_{i/N}\}_{i=0}^N$ denote a time-series of logarithmic asset price observations and define returns as $\Delta_{i}^{N} X = X_{i / N} - X_{(i - 1) / N}$. The basic idea is to split the sample into $n$ smaller blocks each containing $m$ returns, $D_{i}^{m} X = \left( \Delta_{k}^{N} X \right)_{(i - 1) m + 1 \leq k \leq im}$ for $i = 1, \ldots, n$ with $N = mn$, and then study the behavior of the return quantiles on these blocks as $n$ grows large. In particular, we consider a ($k \times 1$) vector of quantiles $\overline{ \lambda} = (\lambda_{1}, \ldots, \lambda_{k})'$ and define the QRV as:
\begin{equation*}
QRV_N(m,\overline{\lambda},\alpha) \equiv \alpha' QRV_N(m,\overline{\lambda}),
\end{equation*}
where $\alpha$ is a $(k \times 1)$ vector of quantile weights, and $QRV_N(m,\overline{\lambda})$ is a $(k \times 1)$ vector with $j$th entry equal to
\begin{equation*}
QRV_{N} \left( m,\lambda_{j} \right) =  \frac{m}{N} \sum_{i = 1}^{N / m} \frac{q_{i} (m,\lambda_{j})}{\nu_1 \left( m, \lambda_{j} \right)}, \qquad \text{for~} \lambda_{j} \in (1/2,1).
\end{equation*}
Here
\begin{equation*}
q_{i}(m,\lambda) = g^{2}_{ \lambda m} \left( \sqrt{N} D_i^m X \right) + g^{2}_{m - \lambda m + 1} \left( \sqrt{N} D_i^m X \right),
\end{equation*}
is the realised symmetric squared $\lambda$-quantile, $\nu_1 \left( m, \lambda_{j} \right)$ is a normalizing constant that measures the expectation of $q_{i}(m,\lambda)$ under a standard normal, and the function $g_{k} \left( x \right) = x_{ \left( k \right)}$ extracts the $k$th order statistic of $x = \left( x_{1}, \ldots, x_{m} \right)$.

It is quite intuitive that QRV as defined above provides consistent and jump-robust estimates of the integrated variance as $n\to\infty$. As the number of blocks grows, they span an increasingly short interval so that in the limit and under weak assumptions on the price process, each block contains at most one jump and volatility within the block is locally constant. In this scenario, the term $q_{i}(m,\lambda_{j})/\nu_1 \left( m, \lambda_{j} \right)$ constitutes an estimator of the (scaled) return variance over the $i^{th}$ block, which is robust to jumps by the assumption that $\lambda_{max} = \max \{ \overline{ \lambda} \} < 1$. Summing across blocks then naturally yields a consistent estimator for the integrated variance. From the above, it is also clear that the QRV estimator can be formulated based on overlapping blocks. In the paper, we show that such a subsampled version of QRV further improves the efficiency of the estimator. In terms of asymptotic theory, we derive feasible central limit theorems for QRV, which show that for fixed $m$ and in the absence of noise, our estimator converges to the integrated variance at rate $N^{-1 / 2}$. With microstructure noise, our (modified) estimator converges at rate $N^{-1 / 4}$. In both cases, this is known to be the fastest possible rate. We carry on to show that for suitable choice of parameters the asymptotic variance of the QRV is close to the maximum likelihood bound in either situation. Moreover, all our consistency and central limit theorems are shown to be robust to the presence of finite activity jumps.

Implementation of the QRV requires the choice of some ``tuning'' parameters, such as the number of blocks $n$ or block length $m$, the quantiles $\overline{\lambda}$, and the quantile weights $\alpha$. For given block length and quantiles, we show how to select the quantile weights optimally to minimize the asymptotic variance of QRV. In the special case where $m \to \infty$, we can express these optimal weights in a simple way. This proves convenient for the implementation of the QRV, as it turns out we can use these asymptotic weights even for finite $m$ with hardly any loss of efficiency. Similarly, the choice of $m$ and $\overline{\lambda}$ can be based on asymptotic efficiency considerations, but we also emphasize that the robustness of QRV in finite sample is controlled by the joint choice of these parameters as the $(1 - \lambda_{max})m - 1$ largest negative and positive returns are discarded in each block of data. In the paper, we provide detailed guidance on how to choose $m$ and $\overline{\lambda}$ based on both asymptotic theory and practical considerations.

There are a number of recent papers related to our work. \citet*{podolskij-vetter:09a} combine pre-averaging with the bi-power variation measure to also obtain a jump- and noise-robust volatility estimator. In comparison to QRV, their estimator is inefficient and because it is based on bi-power variation it is not robust to outliers and the corresponding central limit theorems are not robust to jumps. \citet*{andersen-dobrev-schaumburg:08a} develop two jump-robust measures named MinRV and MedRV, but they rule out microstructure noise. Interestingly, we show that their estimators can be nested in our class of estimators and by extension we provide an asymptotic theory for the MinRV and MedRV in the presence of microstructure noise. Further, the idea to get rid of jumps through quantiles is somewhat similar in spirit to threshold estimators of the integrated variance \citep*[see, e.g.,][]{mancini:04a,jacod:08a}, where return observations larger than a pre-determined threshold are removed before computing realised variance. Because these estimators use a global threshold to pre-truncate the data there is a risk of retaining jumps when the threshold is set too high or removing an excessive amount of observations when the threshold does not fully encapsulate the returns in high volatility episodes. In both cases the estimator will be biased. In contrast, QRV sets a local threshold that adapts naturally to the magnitude of the returns observations in each block through the choice of quantiles. As such, it is able to be robust to both small and large jumps even in the presence of time-varying volatility. Finally, we note that QRV is also related to the literature on L-statistics \citep*[e.g.][]{vaart:98a}, and it bears some resemblance to the classical idea of trimmed mean estimation.

The remainder of this paper is organized as follows. In Section \ref{sec:QRVintro}, we introduce the QRV estimator and present the associated asymptotic theory. We also conduct an extensive simulation study to gauge its finite sample performance and discuss how this compares to other related estimators. In Section \ref{sec:QRV*}, we develop a modified version of QRV that remains consistent and asympotically efficient with microstructure noise. Section \ref{sec:QRVempirical} contains an empirical application and Section \ref{sec:conclusion} concludes. Proofs and some additional discussion of alternative formulations of QRV can be found in the appendix.

\section{Quantile-based realised variance measurement} \label{sec:QRVintro}

Let $X = \left( X_{t} \right)_{t \geq 0}$ denote the log-price process, defined on a filtered probability space $\bigl( \Omega, \mathcal{F}, \left( \mathcal{F}_{t} \right)_{t \geq 0}, \mathbb{P} \bigr)$ and adapted to the filtration $\left( \mathcal{F}_{t} \right)_{t \geq 0}$. The theory of financial economics states that if $X$ evolves in a frictionless market, then it has to be of semimartingale form \citep*[see][]{back:91a}. In this paper we start from the assumption that $X$ is a pure Brownian semimartingale, i.e. a continuous sample path process of the form:
\begin{equation}
\label{Eqn:X} X_{t} = X_{0} + \int_{0}^{t} a_{u} \text{d}u + \int_{0}^{t} \sigma_{u} \text{d}W_{u}, \quad t \geq 0,
\end{equation}
where $a = \left( a_{t} \right)_{t \geq 0}$ is a predictable locally bounded drift function, $\sigma = \left( \sigma_{t} \right)_{t \geq 0}$ is an adapted c\`{a}dl\`{a}g volatility process and $W = \left( W_{t} \right)_{t \geq 0}$ a standard Brownian motion.

To prove our CLTs, we will work under some stronger assumptions on $\sigma$.\\[-0.25cm]

\noindent \textbf{Assumption (V)} \emph{$\sigma$ does not vanish (V$_1$) and it satisfies the equation:
\begin{equation*}
\tag{V$_2$} \sigma_{t} = \sigma_{0} + \int_{0}^{t} a_{u}^{ \prime} \text{\upshape{d}}u + \int_{0}^{t} \sigma_{u}^{ \prime} \text{\upshape{d}}W_{u} + \int_{0}^{t} v_{u}^{ \prime}
\text{\upshape{d}}B_{u}^{ \prime}, \quad t \geq 0,
\end{equation*}
where $a^{ \prime} = \left( a_{t}^{ \prime} \right)_{t \geq 0}$, $\sigma^{ \prime} = \left( \sigma_{t}^{ \prime} \right)_{t \geq 0}$ and $v^{ \prime} = \left( v_{t}^{ \prime} \right)_{t \geq 0}$ are adapted c\`{a}dl\`{a}g, $B^{ \prime} = \left( B_{t}^{ \prime} \right)_{t \geq 0}$ is a Brownian motion, and $W \Perp B^{ \prime}$ (here $A \Perp B$ means that A and B are stochastically independent)}.\\[-0.25cm]

\noindent This means that $\sigma$ has its own Brownian semimartingale structure. Note the appearance of $W$ in $\sigma$, which allows for leverage effects. If $X$ is a unique strong solution of a stochastic differential equation then, under some smoothness assumptions on the volatility function $\sigma = \sigma(t, X_{t})$, assumption (V$_{2}$) (with $v_{s}' = 0$ for all $s$) is a consequence of Ito's formula. Thus, assumption (V$_{2}$) is fulfilled for many financial models and, even though it is not a necessary condition, it simplifies the proofs considerably. A more general treatment, including the case where $\sigma$ jumps, can be found in \cite{barndorff-nielsen-graversen-jacod-podolskij-shephard:06a}. We rule out these technical details here, as they are not important to our exposition.

In what follows, we also make use of the concept of stable convergence in law.
\begin{definition}
A sequence of random variables, $\left( Z_{n} \right)_{n \in \mathbb{N}}$, converges stably in law with limit $Z$, defined on an appropriate extension of $\bigl( \Omega, \mathcal{F}, \left( \mathcal{F}_{t} \right)_{t \geq 0}, \mathbb{P} \bigr)$, if and only if for every $\mathcal{F}$-measurable, bounded random variable $Y$ and any bounded, continuous function $g$, the convergence $\lim_{n \to \infty} \E \left[Y g \left(Z_{n} \right) \right] = \E \left[ Y g \left(Z \right) \right]$ holds. We write $Z_{n} \overset{d_{s}}{ \to} Z$, if $\left( Z_{n} \right)_{n \in \mathbb{N}}$ converges stably in law to $Z$.
\end{definition}

\noindent Stable convergence implies weak convergence, or convergence in law, which can be defined as above by taking $Y = 1$, see \cite{renyi:63a} or \cite{aldous-eagleson:78a} for more details about the properties of stably converging sequences. The extension of this concept to stable convergence of processes is discussed in \citet*[][pp. 512--518]{jacod-shiryaev:03a}. In our context, we need the stable convergence to transform the infeasible mixed Gaussian central limit theorems (CLT) proved below into feasible ones that can be implemented in practice.

Central to the theory of semimartingales is the quadratic variation process, defined as:
\begin{equation*}
\left[ X \right]_{t} = \underset{n \to \infty}{\plim} \sum_{i = 1}^{n} \left( X_{t_{i}} - X_{t_{i - 1}} \right)^{2},
\end{equation*}
for any sequence of partitions $0 = t_{0} < t_{1} < \ldots < t_{n} = t$ such that $\sup_i\left\{t_{i} - t_{i - 1} \right\} \to 0$ as $n \to \infty$ \citep*[see, e.g.][pp.
66]{protter:04a}. For the process in Eq. \eqref{Eqn:X}, the quadratic variation is equal to integrated variance (IV hereafter), i.e.
\begin{equation*}
\left[ X \right]_{t} = \int_{0}^{t} \sigma_{u}^{2} \text{d}u.
\end{equation*}
As in \citet*{andersen-bollerslev:98a}, and many of the subsequent papers in this area, here the object of econometric interest is the IV.

\subsection{The estimator and its properties} \label{sec:QRV}

From now on, we will work on the unit time interval without loss of generality, i.e. $t\in[0, 1]$. We assume that $X$ is observed at equidistant points $t_i = i / N$, for $i = 0, \ldots, N$. The increments of $X$ -- the continuously compounded returns -- are denoted as:
\begin{equation}
\Delta_{i}^{N} X = X_{i/N} - X_{(i-1)/N},
\end{equation}
for $i = 1, \ldots, N$. We further assume that $N = nm$, where $m$, $n$ are natural numbers. Specifically, we consider $n$ contiguous subintervals or blocks $[(i - 1)/n, i/n]$, each containing $m$ returns, i.e.
\begin{equation}
D_{i}^{m}X = \left( \Delta_{k}^{N} X \right)_{(i - 1) m + 1 \leq k \leq im},
\end{equation}
for $i = 1, \ldots, n$. The idea of building subintervals is quite natural in the current setting and the mathematical intuition for this approach is discussed in \cite{mykland:10a}. In the asymptotic analysis below, we concentrate on the case where $m$ is fixed and $n \to \infty$, but also briefly comment on the case where $n$ is fixed and $m \to \infty$, and $n, m \to \infty$.

Define the function $g_{k}: \mathbb{R}^{m} \to \mathbb{R}$ such that
\begin{equation}
\label{Eqn:g} g_{k} \left( x \right) = x_{ \left( k \right)},
\end{equation}
where $x_{(k)}$ is the $k$th order statistic of $x = \left( x_{1}, \ldots, x_{m} \right)$. Also define the realised (symmetric) squared $\lambda$-quantile on the $i^{th}$ block as
\begin{equation}
\label{Eqn:q} q_{i}(m,\lambda) = g^{2}_{ \lambda m} \left( \sqrt{N} D_i^m X \right) + g^{2}_{m - \lambda m + 1} \left( \sqrt{N} D_i^m X \right),
\end{equation}
where $\lambda m$ is a natural number. Note that the function $q_i(m,\lambda)$ is even in $X$, so its value does not change if we replace $X$ by $-X$. Also note that $\sqrt{N} D_i^m X$ has been normalized to be $O_{p} (1)$.

We are now in a position to introduce the quantile-based realised variance (QRV, hereafter):
\begin{equation} \label{Eqn:QRVvector}
QRV_N(m,\overline{\lambda},\alpha) \equiv \alpha' QRV_N(m,\overline{\lambda}),
\end{equation}
where $\overline{\lambda} = (\lambda_{1}, \ldots, \lambda_{k})'$ is a $(k\times 1)$ vector of quantiles with $\lambda_j \in(1/2,1)$ for $j=1,\ldots,k$, $\alpha$ is a $(k\times 1)$ non-negative weighting vector with $|\alpha|_1 = 1$, and $QRV_N(m,\overline{\lambda})$ is a $(k\times 1)$ vector with $j$th entry equal to
\begin{equation}
\label{Eqn:QRV} QRV_{N} \left( m,\lambda_{j} \right) = \frac{m}{N} \sum_{i = 1}^{N / m} \frac{q_{i} (m,\lambda_{j})}{\nu_1 \left( m, \lambda_{j} \right)}.
\end{equation}
The scaling factor $\nu$ in Eq. \eqref{Eqn:QRV} is given by:
\begin{equation}
\label{Eqn:nu(r,m)} \nu_{r} \left( m, \lambda \right) = \mathbb{E} \left[ \Bigl( | U_{( \lambda m)} |^{2} + | U_{( m - \lambda m + 1)} |^{2} \Bigr)^{r} \right],
\end{equation}
for $r > 0$, where $U_{(\lambda m)}$ is the $(\lambda m)$-th order statistic of an independent standard normal sample $\{ U_{i} \}_{i = 1}^{m}$.

We show below that QRV constitutes a consistent and highly efficient estimator of the IV under very weak conditions on $X$. Moreover, due to its reliance on return quantiles (with $\lambda < 1$), the QRV is asymptotically robust to finite activity jumps. This holds even when $k = 1$ and the estimator is constructed using only a single pair of quantiles. Still, combining multiple pairs of quantiles as in Eq. \eqref{Eqn:QRVvector} improves the efficiency of the QRV, and we can explicitly characterize the optimal quantile weights that minimize its asymptotic variance.

Two further remarks are in order. First, while it is possible to use asymmetric quantiles in Eq. \eqref{Eqn:q}, this is suboptimal in the current setting due to the symmetry of the normal distribution, and we therefore do not consider this case. Moreover, when $q_i(m, \lambda)$ is based on asymmetric quantiles, the resulting CLT becomes infeasible, because the function $q_i(m,\lambda)$ is not even in $X$ anymore \citep*[see][for further discussion]{kinnebrock-podolskij:08a}. Second, to compute QRV the scaling factor $\nu_{r} \left( m, \lambda \right)$ is needed. These values can be obtained to any desired degree of accuracy by straightforward simulation or numerical integration using the joint density function of the order statistics:
\begin{equation*}
\label{Eqn:BivarPDF} f_{U_{ \left( m - \lambda m + 1 \right)}, U_{ \left( \lambda m \right)}} \left( x, y \right) = 1_{ \{ x < y \}}\frac{m!}{( m - \lambda m )!}\frac{\left( \Phi (y) - \Phi (x) \right)^{2 \lambda m - m - 2} \left( (1-\Phi (y))\Phi(x) \right)^{m - \lambda m}}{( 2 \lambda m - m - 2)!( m - \lambda m )!} \phi (x) \phi (y),
\end{equation*}
for $\lambda \in \left( 1 / 2, 1 \right)$, where $\phi$ and $\Phi$ are the standard normal density and distribution functions.

We now present the main limit results of the QRV.

\begin{theorem}
\label{Thm:QRVconsistency} For the process $X$ in Eq. \eqref{Eqn:X}, and $N = mn$ with $m$ fixed, as $N \to \infty$
\begin{equation*}
QRV_N(m,\overline{\lambda},\alpha) \overset{p}{ \to} IV,
\end{equation*}
where $IV = \int_{0}^{1} \sigma_{u}^{2} \text{\upshape{d}}u$.
\end{theorem}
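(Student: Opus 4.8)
The plan is to exploit the block structure and a localisation argument to reduce the problem to the case of a constant-volatility Brownian motion on each block, for which the behaviour of symmetric order statistics is classical. First I would note that since $\alpha' \mathbf{1} = 1$ it suffices to show $QRV_N(m,\lambda_j) \overset{p}{\to} IV$ for each single quantile $\lambda_j$, so fix $\lambda = \lambda_j$ and write $\nu_1 = \nu_1(m,\lambda)$. By a standard localisation (see \cite{barndorff-nielsen-graversen-jacod-podolskij-shephard:06a}) I may assume the drift $a$ and the volatility path $\sigma$ are bounded, and that $\sigma$ is bounded away from $0$; the drift term contributes returns of order $N^{-1}$, so $\sqrt{N}\,\Delta_i^N X$ differs from $\sqrt{N}\int \sigma_u\,dW_u$ over the same interval by a term that is $O_p(N^{-1/2})$ uniformly, and since $g_k$ is Lipschitz in the sup-norm this perturbation is asymptotically negligible in $\frac{m}{N}\sum_i q_i/\nu_1$. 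Hence I would work with the driftless process from the outset.

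Next I would freeze the volatility within each block. On block $i$, $[(i-1)/n,\,i/n]$, write $\sigma_u = \sigma_{(i-1)/n} + (\sigma_u - \sigma_{(i-1)/n})$; the c\`adl\`ag property of $\sigma$ and a maximal inequality give that $\sup_{u \in \text{block } i}|\sigma_u - \sigma_{(i-1)/n}|$ is small on average as $n\to\infty$. Replacing $\sigma_u$ by its left endpoint value turns the $m$ returns on block $i$ into $\sigma_{(i-1)/n} N^{-1/2}(\xi_{(i-1)m+1},\dots,\xi_{im})$ where the $\xi$'s are i.i.d.\ standard normal (increments of $W$ scaled by $\sqrt{N}$), so that $\sqrt{N}D_i^m X \approx \sigma_{(i-1)/n}(\xi_{(i-1)m+1},\dots,\xi_{im})$. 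Since $g_k$ is positively homogeneous of degree one, $q_i(m,\lambda) \approx \sigma^2_{(i-1)/n}\bigl(|\xi_{((i-1)m+\lambda m)}|^2 + |\xi_{((i-1)m+m-\lambda m+1)}|^2\bigr)$ in distribution, and by the definition of $\nu_1$ in Eq.~\eqref{Eqn:nu(r,m)} the conditional expectation of $q_i(m,\lambda)/\nu_1$ given $\mathcal{F}_{(i-1)/n}$ is $\approx \sigma^2_{(i-1)/n}$. Then $\frac{m}{N}\sum_{i=1}^{n} \mathbb{E}[q_i/\nu_1 \mid \mathcal{F}_{(i-1)/n}] = \frac{1}{n}\sum_{i=1}^n \sigma^2_{(i-1)/n} + o_p(1)$, which is a Riemann sum converging to $\int_0^1 \sigma_u^2\,du = IV$ by right-continuity of $\sigma$.

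It remains to control the fluctuation of $\frac{m}{N}\sum_i q_i/\nu_1$ around its conditional mean. Here I would use that the blocks are, after freezing $\sigma$, conditionally independent, so the sum is (approximately) a sum of $n$ conditionally independent terms each of order $1/n$ with uniformly bounded conditional second moment (finiteness of $\nu_2(m,\lambda)$, which holds for $\lambda \in (1/2,1)$ since $m$ is fixed and Gaussian order statistics have all moments); a conditional $L^2$ estimate then shows the centred sum has variance $O(1/n) \to 0$. Making the several ``$\approx$'' steps above rigorous simultaneously — in particular, showing that the joint error from freezing the volatility, dropping the drift, and replacing block returns by exact Gaussians is $o_p(1)$ after summation and division by $\nu_1$ — is the main technical obstacle; the cleanest route is to bound $\mathbb{E}\,\bigl|\frac{m}{N}\sum_i q_i(m,\lambda) - \frac{m}{N}\sum_i \widetilde q_i(m,\lambda)\bigr|$, with $\widetilde q_i$ the frozen-volatility Gaussian analogue, using the Lipschitz property of $g_k$ together with the Cauchy--Schwarz inequality and the $L^2$ bounds on $\sup|\sigma_u - \sigma_{(i-1)/n}|$ over each block. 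Since finite activity jumps add at most one jump per block for $n$ large and $\lambda_{\max} < 1$ discards the largest-in-magnitude return on each such block, the same argument shows the limit is unchanged by jumps, though that robustness is not needed for the statement as written here.
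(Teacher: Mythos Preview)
Your proposal is correct and follows essentially the same route as the paper: reduce to a single quantile, localise so that $a$ and $\sigma$ are bounded, freeze $\sigma$ at the left endpoint of each block to obtain the Gaussian approximation $\beta_i^n$, show the Riemann sum of conditional expectations converges to $IV$, and kill the centred fluctuations by an $L^2$ bound using finiteness of $\nu_2(m,\lambda)$. The only noteworthy difference is in how the approximation error $QRV_N - U_n$ is controlled: the paper does not use the Lipschitz property of $g_k$ together with Cauchy--Schwarz as you propose, but instead introduces the uniform modulus of continuity $m_A(\epsilon)=\sup\{|g_{\lambda m}^2(x)-g_{\lambda m}^2(y)|:\|x-y\|\le\epsilon,\ \|x\|\le A\}$ and a three-way split (small difference, large $\|\sqrt{N}D_i^mX\|$, large $\|\beta_i^n\|$), driven by the same $L^2$ estimate $\frac{1}{n}\sum_i\mathbb{E}\|\sqrt{N}D_i^mX-\beta_i^n\|^2\to0$ (their Lemma~\ref{Lem:lemsim}); your route via $|g_k^2(x)-g_k^2(y)|\le\|x-y\|(\|x\|+\|y\|)$ and Cauchy--Schwarz is arguably more direct and equally valid. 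Note also that the paper does not assume $\sigma$ bounded away from zero for the consistency result --- condition~(V$_1$) is only invoked for the CLT --- so that part of your localisation is unnecessary here.
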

\begin{proof}
see Appendix \ref{app:proofs}\qedl
\end{proof}

\begin{theorem}
\label{Thm:QRVcentrallimit} For the process $X$ in Eq. \eqref{Eqn:X}, with condition (V) satisfied and $N = mn$ with $m$ fixed, as $N \to \infty$
\begin{equation*}
\sqrt{N} \left( QRV_N(m, \overline{ \lambda}, \alpha) -  IV \right) \overset{d_{s}}{\to} \sqrt{\theta(m, \overline{ \lambda}, \alpha)} \int_{0}^{1} \sigma_{u}^{2} \text{\upshape{d}}W_{u}',
\end{equation*}
where
\begin{equation}
\label{Eqn:efficiency}
\theta(m, \overline{ \lambda}, \alpha) = \alpha' \Theta(m, \overline{ \lambda}) \alpha,
\end{equation}
and the $k \times k$ matrix $\Theta(m, \overline{ \lambda} ) = \left( \Theta(m, \overline{ \lambda} )_{sl} \right)_{1 \leq s, l \leq k}$ is given by
\begin{equation*}
\Theta(m, \overline{\lambda})_{ij} = m \frac{ \nu_1(m, \lambda_{i}, \lambda_{j} ) - \nu_1 (m, \lambda_{i} ) \nu_1(m, \lambda_{j} )}{ \nu_1(m,\lambda_{i}) \nu_1 (m, \lambda_{j} )},
\end{equation*}
with
\begin{equation} \label{Eqn:nuij}
\nu_1 (m,\lambda_{i}, \lambda_{j}) = \E [ ( | U_{ \left( m\lambda_i \right)} |^{2} + | U_{ \left( m - m\lambda_i + 1 \right)} |^{2} ) ( | U_{ \left( m\lambda_j \right)} |^{2} + | U_{ \left( m - m\lambda_j + 1 \right)} |^{2} ) ],
\end{equation}
and where $W'$ is another Brownian motion defined on an extension of $\bigl( \Omega, \mathcal{F}, \left( \mathcal{F}_{t} \right)_{t \geq 0}, \mathbb{P} \bigr)$ with $W' \Perp \mathcal{F}$. Because $\sigma$ is independent of $W'$, this implies stable convergence to a mixed normal distribution:
\begin{equation*}
\sqrt{N} \left( QRV_N(m, \overline{ \lambda}, \alpha) -  IV \right) \overset{d_{s}}{ \to} MN \left( 0, \theta(m, \overline{ \lambda}, \alpha) IQ \right),
\end{equation*}
where $IQ = \int_{0}^{1} \sigma_{u}^{4} \text{\upshape{d}}u$ is the integrated quarticity.
\end{theorem}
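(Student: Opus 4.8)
The plan is to recast $QRV_N$ as a normalised functional of the return increments over blocks of fixed size $m$ and then invoke the standard limit machinery for such functionals \citep*{barndorff-nielsen-graversen-jacod-podolskij-shephard:06a, jacod:08a, kinnebrock-podolskij:08a}. Write
\[
QRV_N(m,\overline{\lambda},\alpha) = \frac{1}{n}\sum_{i=1}^{n} f\!\left(\sqrt{N}\, D_i^m X\right), \qquad f(x) = \sum_{j=1}^{k}\frac{\alpha_j}{\nu_1(m,\lambda_j)}\Bigl(g_{\lambda_j m}^2(x)+g_{m-\lambda_j m+1}^2(x)\Bigr),
\]
a function $f:\mathbb{R}^m\to\mathbb{R}$ that is continuous, even ($f(-x)=f(x)$, since $g_k(-x)=-g_{m-k+1}(x)$), positively homogeneous of degree two ($f(cx)=c^2 f(x)$ for $c\ge 0$), and of polynomial growth. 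For $\xi\sim N(0,I_m)$ the normalisation $|\alpha|_1=1$ gives $\mathbb{E}[f(\xi)]=1$, and \eqref{Eqn:nuij} yields
\[
\mathbb{E}[f(\xi)^2]-1 = \sum_{i,j}\alpha_i\alpha_j\,\frac{\nu_1(m,\lambda_i,\lambda_j)-\nu_1(m,\lambda_i)\nu_1(m,\lambda_j)}{\nu_1(m,\lambda_i)\nu_1(m,\lambda_j)} = \frac{1}{m}\,\theta(m,\overline{\lambda},\alpha),
\]
with $\theta$ as in \eqref{Eqn:efficiency}. These identities are exactly what will match the limiting variance at the end.

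After the usual localisation --- a stopping-time argument reducing to the case where $a$, $\sigma$, $\sigma^{-1}$ and the coefficients $a',\sigma',v'$ of condition (V) are bounded --- the first step is to replace $\sqrt{N}\,D_i^m X$ on the $i$-th block by $\sigma_{(i-1)/n}\,Z_i$, where $Z_i := \sqrt{N}\bigl(W_{((i-1)m+l)/N}-W_{((i-1)m+l-1)/N}\bigr)_{l=1,\dots,m}$ are i.i.d.\ $N(0,I_m)$ vectors with $Z_i$ independent of $\mathcal{F}_{(i-1)/n}$. Using the bound $|g_k(x)-g_k(y)|\le\|x-y\|_\infty$ together with the Burkholder--Davis--Gundy inequality and condition (V), one shows that the errors stemming from (i) the drift $\int a_u\,\mathrm{d}u$, (ii) the fluctuation of $\sigma$ within a block, and (iii) the Riemann-sum approximation $\tfrac1n\sum_{i}\sigma_{(i-1)/n}^2 \to IV$ each contribute $o_p(N^{-1/2})$ to $QRV_N$. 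By homogeneity $f(\sigma_{(i-1)/n}Z_i)=\sigma_{(i-1)/n}^2 f(Z_i)$, so it remains to prove a stable central limit theorem for
\[
\sqrt{N}\bigl(QRV_N-IV\bigr) = \sum_{i=1}^{n}\eta_i^n + o_p(1), \qquad \eta_i^n := \frac{\sqrt{N}}{n}\,\sigma_{(i-1)/n}^2\bigl(f(Z_i)-1\bigr),
\]
where $(\eta_i^n)$ is a martingale-difference array with respect to $(\mathcal{F}_{i/n})_{i}$.

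I would then apply the stable martingale central limit theorem \citep*{jacod-shiryaev:03a}. Three conditions are required. First, the conditional variances converge: $\sum_i\mathbb{E}[(\eta_i^n)^2\mid\mathcal{F}_{(i-1)/n}] = m\bigl(\mathbb{E}[f(\xi)^2]-1\bigr)\tfrac1n\sum_i\sigma_{(i-1)/n}^4 \overset{p}{\to} \theta(m,\overline{\lambda},\alpha)\,IQ$. Second, a Lyapunov bound: $f(Z_i)$ has moments of all orders and $\sigma$ is bounded, so $\sum_i\mathbb{E}[(\eta_i^n)^4] = O(n^{-1}) \to 0$. Third, the orthogonality (``nesting'') condition $\sum_i\mathbb{E}[\eta_i^n\,\Delta_i^n M \mid \mathcal{F}_{(i-1)/n}] \overset{p}{\to} 0$ for $M\in\{W,B'\}$ and every bounded martingale $M$ orthogonal to $(W,B')$. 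For $M=W$ this is exactly where the evenness of $f$ is decisive: since $\Delta_i^n W = N^{-1/2}\mathbf{1}'Z_i$, the conditional expectation is proportional to $\mathbb{E}\bigl[(f(Z_i)-1)\,\mathbf{1}'Z_i\bigr] = m\,\mathbb{E}[f(Z)Z_1] = 0$ by the substitution $Z\mapsto-Z$; this is why the leverage allowed in (V) does not appear in the limit, and why symmetric quantiles yield a feasible CLT. For $M=B'$ and for $M$ orthogonal to $(W,B')$, $\eta_i^n$ is measurable with respect to $\mathcal{F}_{(i-1)/n}$ and the $W$-increments over the $i$-th block, hence conditionally independent of $\Delta_i^n M$ given $\mathcal{F}_{(i-1)/n}$, and the term vanishes since $\mathbb{E}[\eta_i^n\mid\mathcal{F}_{(i-1)/n}]=0$. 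The theorem then gives $\sum_i\eta_i^n\overset{d_{s}}{\to}MN\bigl(0,\theta(m,\overline{\lambda},\alpha)\,IQ\bigr)$, a limit realised as $\sqrt{\theta(m,\overline{\lambda},\alpha)}\int_0^1\sigma_u^2\,\mathrm{d}W_u'$ with $W'\Perp\mathcal{F}$; combined with the approximation step this is the assertion.

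The main obstacle is step (ii): controlling the in-block variation of $\sigma$ at the sharp order $o_p(N^{-1/2})$ (rather than merely $o_p(1)$), which is precisely what forces one to exploit the semimartingale structure (V) of $\sigma$ through martingale moment estimates; everything else reduces to routine bookkeeping together with the symmetry argument that annihilates the leverage contribution.
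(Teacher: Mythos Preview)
Your plan is essentially the paper's: localise, freeze $\sigma$ at the left endpoint of each block, apply the Jacod--Shiryaev stable martingale CLT, and use evenness of $f$ to kill the cross term with $W$. The conditional-variance and Lyapunov checks, and your treatment of the nesting condition, match Part~I of the paper's proof (the paper invokes Clark's representation for martingales orthogonal to $W$, which is equivalent to your conditional-independence argument).

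The one place where your write-up understates the difficulty is step~(ii). The 1-Lipschitz bound $|g_k(x)-g_k(y)|\le\|x-y\|_\infty$ together with BDG and~(V) gives $\mathbb{E}\bigl[\|\sqrt{N}D_i^mX-\sigma_{(i-1)/n}Z_i\|^2\bigr]=O(n^{-1})$, which is enough to show that the \emph{centered} approximation error is $o_p(N^{-1/2})$ (Part~II of the paper). But for the \emph{bias} part, i.e.\ $\sqrt{N}\,\tfrac1n\sum_i\mathbb{E}\bigl[f(\sqrt{N}D_i^mX)-f(\sigma_{(i-1)/n}Z_i)\,\big|\,\mathcal{F}_{(i-1)/n}\bigr]$, the same bound only yields $O_p(1)$: each conditional expectation is $O(n^{-1/2})$, there are $n$ of them, and the outer factor is $\sqrt{N}/n=\sqrt{m/n}$. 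The paper's Part~III closes this gap by decomposing $\sqrt{N}D_i^mX-\sigma_{(i-1)/n}Z_i=\mu_i^n(1)+\mu_i^n(2)$, where $\mu_i^n(1)$ collects the genuinely higher-order terms (handled by the Lipschitz argument) and $\mu_i^n(2)$ carries the leading $O_p(n^{-1/2})$ contribution (a constant drift plus the It\^o integrals of the frozen coefficients $\sigma'_{(i-1)/n},v'_{(i-1)/n}$). For $\mu_i^n(2)$ one cannot simply bound; instead the paper uses the a.e.\ differentiability of $g_k$ (its Lemma~\ref{Lem:gdiff}) to Taylor-expand $g_{\lambda m}^2(\beta_i^n+\mu_i^n(2))-g_{\lambda m}^2(\beta_i^n)$ to first order, and then observes that the resulting expression $g_{\lambda m}(\beta_i^n)\,(\mu_i^n(2))_{(\lambda m)^*}+g_{m-\lambda m+1}(\beta_i^n)\,(\mu_i^n(2))_{(m-\lambda m+1)^*}$ is \emph{odd} in $(W,B')$, so its conditional expectation vanishes identically. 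This is a \emph{second} symmetry argument, distinct from the one you use in the CLT step, and it is what actually ``annihilates the leverage contribution'' at the level of the bias. Your closing paragraph correctly flags step~(ii) as the crux, but ``martingale moment estimates'' alone will not deliver $o_p(N^{-1/2})$ here; you need the Taylor expansion plus oddness.
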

\begin{proof}
see Appendix \ref{app:proofs} \qedl
\end{proof}

\noindent Theorem \ref{Thm:QRVconsistency} and \ref{Thm:QRVcentrallimit} show that QRV is a consistent estimator of the IV under very weak conditions on the process $X$ and that it converges at rate $N^{-1 / 2}$, the best attainable in this setting. For a given block size $m$ and quantiles $\overline{\lambda}$, the weighting vector $\alpha$ can be chosen optimally to minimize the asymptotic variance of QRV, i.e.
\begin{equation}
\label{Eqn:alpha*m} \alpha^{ \ast} = \frac{ \Theta^{-1} \left( m, \overline{ \lambda} \right) \iota}{ \iota^{ \prime} \Theta^{-1} \left( m, \overline{ \lambda} \right) \iota},
\end{equation}
where $\iota$ is a $(k\times 1)$ vector of ones. The asymptotic efficiency of QRV is characterized by the constant $\theta$ in Eq. \eqref{Eqn:efficiency}, which is equal to $\theta(m, \overline{ \lambda}, \alpha^\ast) = (\iota' \Theta^{-1}(m,\overline{\lambda}) \iota)^{-1}$ when optimal weights are used. In Section \ref{sec:QRVimplementation}, we shall see that $\theta$ takes on values between 3 and 4 when using a single (pair of) quantile around $0.9$ and rapidly approaches the parametric lower bound of 2 as multiple quantiles are used. Thus, QRV can attain the efficiency of the maximum likelihood estimator in the parametric no-jump version of this problem, while still retaining robustness to jumps, when they do appear in the price process. In fact, the consistency and CLT of the QRV are unaffected by the presence of finite activity jumps, as the following proposition highlights.

\begin{proposition} \label{prop:QRVjumprobust}
Theorem \ref{Thm:QRVconsistency} and \ref{Thm:QRVcentrallimit} remain valid for an extension of the process $X$ in Eq. \eqref{Eqn:X} that incorporates finite activity jumps, i.e.
\begin{equation}
\label{Eqn:Xjump} X_{t} = X_{0} + \int_{0}^{t} a_{u} \text{\upshape{d}}u + \int_{0}^{t} \sigma_{u} \text{\upshape{d}}W_{u} + \sum_{i = 1}^{q(t)} J_{i}, \quad t \geq 0,
\end{equation}
where $q = \left( q(t) \right)_{t \geq 0}$ is a finite activity counting process and $J = \left( J_{i} \right)_{i = 1}^{q(t)}$ are non-zero random variables representing the jumps in $X$.
\end{proposition}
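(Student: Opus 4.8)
The plan is to reduce the statement to the pure–continuous case already settled in Theorems \ref{Thm:QRVconsistency} and \ref{Thm:QRVcentrallimit}. Decompose $X = X^{c} + X^{J}$, where $X^{c}_{t} = X_{0} + \int_{0}^{t} a_{u}\,\text{d}u + \int_{0}^{t}\sigma_{u}\,\text{d}W_{u}$ is exactly of the form \eqref{Eqn:X} — with the same, unchanged drift and volatility, so that condition (V), when needed for the CLT, continues to hold — and $X^{J}_{t} = \sum_{i=1}^{q(t)} J_{i}$. Write $QRV_{N}^{c}$ for the estimator computed from the returns of $X^{c}$. Since Theorems \ref{Thm:QRVconsistency}--\ref{Thm:QRVcentrallimit} apply to $X^{c}$ as stated, it suffices to show that the QRV built from $X$ differs from $QRV_{N}^{c}$ by $o_{p}(1)$ for Theorem \ref{Thm:QRVconsistency} and by $o_{p}(N^{-1/2})$ for Theorem \ref{Thm:QRVcentrallimit}; the latter, together with the fact that stable convergence is preserved under $o_{p}(1)$ perturbations and that $X^{J}$ is $\mathcal{F}$-adapted (hence does not change the limit), then yields the two conclusions.

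The key step is a localization of the jumps across blocks. Because $q$ is a finite activity counting process, $q(1)<\infty$ a.s.\ and the finitely many jump times are a.s.\ separated, so there is an a.s.\ finite random index $N_{0}$ such that for all $N = mn \ge N_{0}$ every block $[(i-1)/n,i/n]$ contains at most one jump time. Fix a block containing a jump, say block $i_{0}$, with the jump sitting in return number $k_{0}$. Then $\sqrt{N}\,\Delta_{k_{0}}^{N}X = \sqrt{N}\,J_{i} + \sqrt{N}\,\Delta_{k_{0}}^{N}X^{c}$, where $\sqrt{N}\,\Delta_{k_{0}}^{N}X^{c} = O_{p}(1)$ whereas $|\sqrt{N}\,J_{i}| \to \infty$ a.s.\ (as $J_{i}\neq 0$). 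Hence on a sequence of events $\Omega_{N}$ with $P(\Omega_{N})\to 1$ the contaminated return has strictly the largest modulus among the $m$ returns of the block, so it is either the $m$-th or the first order statistic of $\sqrt{N}\,D_{i_{0}}^{m}X$, depending on the sign of $J_{i}$. Since $\lambda_{j}\in(1/2,1)$ and $\lambda_{j}m$ is an integer we have $2 \le \lambda_{j}m \le m-1$ and $2 \le m-\lambda_{j}m+1 \le m-1$, so \emph{neither} of the two order statistics entering $q_{i_{0}}(m,\lambda_{j})$ in \eqref{Eqn:q} equals the contaminated one: on $\Omega_{N}$ the quantity $q_{i_{0}}(m,\lambda_{j})$ coincides with the corresponding sum of two squared order statistics of the $m-1$ jump-free returns of the block, and is in particular $O_{p}(1)$ — and the same holds for its continuous-part counterpart $q_{i_{0}}^{c}(m,\lambda_{j})$.

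Now assemble. On $\Omega_{N}$ the two estimators can differ only through the at most $q(1)$ blocks that contain a jump, so
\begin{equation*}
QRV_{N}(m,\lambda_{j}) - QRV_{N}^{c}(m,\lambda_{j}) = \frac{m}{N}\sum_{i:\,\text{block } i \text{ contains a jump}} \frac{q_{i}(m,\lambda_{j}) - q_{i}^{c}(m,\lambda_{j})}{\nu_{1}(m,\lambda_{j})},
\end{equation*}
a sum indexed by the (at most $q(1) = O_{p}(1)$, a.s.\ finitely many) jump-blocks, each summand being $O_{p}(1)$ by the previous paragraph. Hence the right-hand side is $O_{p}(m/N) = o_{p}(1)$, which combined with $QRV_{N}^{c}(m,\overline{\lambda},\alpha) \overset{p}{\to} IV$ (Theorem \ref{Thm:QRVconsistency} applied to $X^{c}$) proves the consistency statement for $X$. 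Multiplying the same difference by $\sqrt{N}$ gives $O_{p}(m/\sqrt{N}) = o_{p}(1)$, so $\sqrt{N}\bigl(QRV_{N}(m,\overline{\lambda},\alpha) - IV\bigr)$ has the same stable limit as $\sqrt{N}\bigl(QRV_{N}^{c}(m,\overline{\lambda},\alpha) - IV\bigr)$, which is the mixed normal law of Theorem \ref{Thm:QRVcentrallimit}; since $P(\Omega_{N}) \to 1$, passing to $\Omega_{N}$ is harmless for both modes of convergence.

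The only genuinely delicate point is the uniform domination in the second paragraph: that, with probability tending to one, \emph{each} of the finitely many jump-contaminated returns eventually overwhelms all remaining returns in its block once multiplied by $\sqrt{N}$, so that it always lands at an extreme order-statistic position (first or $m$-th) which the quantiles $\lambda_{j}\in(1/2,1)$ never select. This is precisely where finite activity of the jumps is used — it delivers both the eventual separation of jump times into distinct blocks and the boundedness in probability, over the handful of jump-blocks, of the order statistics of the surviving continuous returns; with infinite activity neither property would be available, consistent with the restriction to the model \eqref{Eqn:Xjump}. Everything else is a routine combination of this localization with Theorems \ref{Thm:QRVconsistency} and \ref{Thm:QRVcentrallimit}.
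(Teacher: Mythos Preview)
Your proposal is correct and follows essentially the same route as the paper: decompose into the continuous part $X^{c}$, observe that only the finitely many jump-containing blocks can differ, and show each such block contributes $O_{p}(1)$ (hence $o_{p}(N^{-1/2})$ after the $m/N$ scaling) because the jump-contaminated return occupies an extreme order-statistic position that $\lambda_{j}\in(1/2,1)$ never selects. The only cosmetic difference is that the paper splits the estimator as $\sum_{i\in J_{n}^{c}}+\sum_{i\in J_{n}}$ and bounds $q_{i}(m,\lambda)/\sqrt{N}\to 0$ on jump blocks via an explicit two-case inequality (jump large vs.\ small relative to $\|\sqrt{N}D_{i}^{m}X^{c}\|$), whereas you compare directly with $QRV_{N}^{c}$ and pass to the event $\Omega_{N}$; both arguments are equivalent and yield the same $O_{p}(m/\sqrt{N})$ remainder.
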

The intuition for this result is clear: with $m$ fixed and $N \rightarrow \infty$, the number of observations in each block remains constant but the time interval it spans shrinks so that, in the limit, it contains at most one jump with probability one. Combined with the restriction $\lambda < 1$, which ensures we leave out at least the largest negative and positive return when constructing the QRV, this naturally implies that the estimator is asymptotically immune to finite activity jumps.

To conclude this section, we point out that our quantile-based approach can also be used to estimate the integrated quarticity in a similar fashion. In particular, define
\begin{equation}
\label{Eqn:QIQvector}
QRQ_N(m, \overline{\lambda}, \alpha) \equiv \alpha' QRQ_N(m, \overline{\lambda}),
\end{equation}
where $\overline{\lambda} = (\lambda_{1}, \ldots, \lambda_{k})'$ is a $(k\times 1)$ vector of quantiles with $\lambda_j \in(1/2,1)$ for $j=1,\ldots,k$, $\alpha$ is a $(k\times 1)$ non-negative weighting vector with $|\alpha|_1 = 1$, and $QRQ_N(m,\overline{\lambda})$ is a $(k\times 1)$ vector with $j$th entry equal to:
\begin{equation}
\label{Eqn:QIQ} QRQ_N\left(m, \lambda_{j} \right) = \frac{1}{\nu^{iq}(m,\lambda_{j})} \frac{m}{N} \sum_{i = 1}^{N/m} \left( g^{4}_{ \lambda_{j} m} \left( \sqrt{N} D_i^m X \right) + g^{4}_{m -\lambda_{j} m + 1} \left( \sqrt{N} D_i^m X \right) \right),
\end{equation}
with
\begin{equation*}
\nu^{iq} \left( m, \lambda \right) = \mathbb{E} \left[ | U_{( \lambda m)} |^{4} + | U_{( m - \lambda m + 1)} |^{4}  \right].
\end{equation*}
As in Theorem \ref{Thm:QRVconsistency} we have consistency for $m$ fixed and $N \to \infty$, i.e. $QRQ_N(m, \overline{\lambda}, \alpha) \overset{p}{ \to} IQ $. This estimator can now be used to formulate a feasible CLT for QRV as follows:
\begin{equation*}
\sqrt{N}\frac{ QRV_N(m, \overline{ \lambda}, \alpha_{iv}) - IV}{\sqrt{\theta(m, \overline{\lambda}, \alpha_{iv}) QRQ_N(m, \overline{\lambda}, \alpha_{iq}) }} \overset{d}{ \to} N ( 0, 1).
\end{equation*}

\subsection{A subsampling implementation of QRV}
The QRV estimator developed above is constructed using empirical return quantiles computed over non-overlapping contiguous intervals of data. In this section, we discuss a subsampled version of the QRV, which is still more efficient than its blocked counterpart. In particular, we define:
\begin{equation}
\label{Eqn:QRVsubvector}
QRV_N^{sub}(m,\overline{\lambda},\alpha) \equiv \alpha' QRV_N^{sub}(m,\overline{\lambda}),
\end{equation}
where $\alpha$ and $\overline{\lambda}$ are as above, and $QRV_{N}^{sub}(m,\overline{\lambda})$ is a $(k\times 1)$ vector with $j$th entry equal to:
\begin{equation} \label{Eqn:QRVsub}
QRV_{N}^{sub}(m,\lambda_{j}) = \frac{1}{N-m+1} \sum_{i = 1}^{N-m+1} \frac{q_{i}^{sub} (m,\lambda_{j})}{\nu_1(m, \lambda_{j})},
\end{equation}
and
\begin{equation*}
q_{i}^{sub}(m,\lambda) = g^{2}_{ \lambda m} \left( \sqrt{N} D_{i,m} X \right) + g^{2}_{m - \lambda m + 1} \left( \sqrt{N} D_{i,m} X \right),
\end{equation*}
where $D_{i,m}X = \left( \Delta_{k}^{N} X \right)_{i \leq k \leq i + m-1}$ for $i\geq 1$.

Because the estimator in Eq. \eqref{Eqn:QRVsubvector} is basically a subsampled version of the blocked QRV in Eq. \eqref{Eqn:QRVvector}, it is also consistent for the IV by Theorem \ref{Thm:QRVconsistency}. Moreover, by a triangular inequality argument it is clear that subsampling improves the asymptotic efficiency.

\begin{theorem}
\label{Thm:QRVsubcentrallimit} For the process $X$ in Eq. \eqref{Eqn:X} with condition (V) satisfied and $N = mn$ with $m$ fixed, as $N\to\infty$
\begin{equation} \label{Eqn:t3subCLT}
\sqrt{N} \left( QRV_N^{sub}(m,\overline{\lambda},\alpha) -  IV\right) \overset{d_{s}}{\to} MN \left( 0, \theta^{sub}(m,\overline{\lambda},\alpha) IQ \right),
\end{equation}
where
\begin{equation}
\label{Eqn:subSamVar}
\theta^{sub}(m,\overline{\lambda},\alpha) = \alpha'\Theta^{sub}(m,\overline{\lambda})\alpha,
\end{equation}
and the $k \times k$ matrix $\Theta^{sub}(m, \overline{ \lambda}) = \left( \Theta^{sub}(m, \overline{ \lambda})_{sl} \right)_{1 \leq s, l \leq k}$ is given by
\begin{equation*}
\Theta^{sub}(m,\overline{\lambda})_{ij} = \frac{1}{m} \Theta(m,\overline{\lambda})_{ij} +\frac{2}{ \nu_1(m,\lambda_{i}) \nu_1 (m, \lambda_{j} )} \sum_{k=1}^{m} \mbox{cov}\left(
 | U_{ \left( m\lambda_i \right)}^{(0)} |^{2} + | U_{ \left( m - m\lambda_i + 1 \right)}^{(0)} |^{2} ,  | U_{ \left( m\lambda_j \right)}^{(k)} |^{2} + | U_{ \left( m - m \lambda_j + 1 \right)}^{(k)} |^{2} \right),
\end{equation*}
where $U^{(0)}= \{ U_{i} \}_{i = 1}^{m}$, $U^{(k)}= \{ U_{i} \}_{i = 1+k}^{m+k}$ and $\{ U_{i} \}_{i = 1}^{m+k}$ is an independent standard normal sample. Furthermore, the convergence in Eq. \eqref{Eqn:t3subCLT} is robust to finite activity jumps, i.e. it also holds for the processes defined in Eq. \eqref{Eqn:Xjump}.
\end{theorem}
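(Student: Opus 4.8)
The plan is to derive the subsampled CLT from the blocked CLT of Theorem~\ref{Thm:QRVcentrallimit} by writing $QRV_N^{sub}$ as an average of $m$ shifted blocked estimators. For each shift $r\in\{1,\dots,m\}$, collecting the overlapping blocks $D_{i,m}X$ with $i\equiv r\ (\mathrm{mod}\ m)$ produces a family of contiguous, non-overlapping windows of length $m$; denote the corresponding blocked estimator (with the $\frac{m}{N}$ normalisation of \eqref{Eqn:QRV}) by $\widetilde{QRV}^{(r)}_N(m,\overline\lambda,\alpha)$. A rearrangement of the sum in \eqref{Eqn:QRVsub} shows that $\frac1m\sum_{r=1}^{m}\widetilde{QRV}^{(r)}_N=\frac1N\sum_{i=1}^{N-m+1}\tfrac{q_i^{sub}}{\nu_1}$ times $\alpha'$, which differs from $QRV_N^{sub}$ only through the factor $\frac{N}{N-m+1}=1+O(m/N)$; moreover each $\widetilde{QRV}^{(r)}_N$ coincides, up to dropping at most $m$ returns at the two ends (an $O(m/N)$ perturbation) and up to the $1-1/n$ rescaling of the block count, with a \emph{shifted} instance of the estimator covered by Theorem~\ref{Thm:QRVcentrallimit}. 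Since $m$ is fixed, all these discrepancies are $o_P(N^{-1/2})$ after multiplication by $\sqrt N$, so it suffices to prove the \emph{joint} stable convergence of $V_N:=\bigl(\sqrt N(\widetilde{QRV}^{(r)}_N-IV)\bigr)_{r=1}^{m}$ and then push it through the linear functional $v\mapsto\frac1m\iota'v$, stable convergence being preserved under continuous maps.

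For the joint convergence I would re-run the proof of Theorem~\ref{Thm:QRVcentrallimit} for the vector $V_N$: each marginal converges stably to $MN(0,\theta(m,\overline\lambda,\alpha)IQ)$ (that is the one-dimensional statement applied to shifted windows), and the new content is only that the $m$ estimators are measurable functionals of one and the same array of returns and hence fall under a single central limit theorem. The two ingredients are exactly those of Theorem~\ref{Thm:QRVcentrallimit}. First, under condition~(V) one localises $\sigma$ to be constant on each length-$m$ window, the replacement error being $o_P(N^{-1/2})$; this is an $L^1$/$L^2$ estimate, not a variance computation, so it is insensitive to the overlap of windows and carries over verbatim. Second, after this reduction the leading term of $V_N$ is $\bigl(\frac1{\sqrt N}\sum_{b\in\text{class }r}\sigma^2_{t_b}(\xi^{(r)}_b-1)\bigr)_{r=1}^m$, where $\xi^{(r)}_b$ is the $\alpha$-weighted normalised symmetric squared $\overline\lambda$-quantile of the $m$ standardised Gaussian returns in window $b$; this is a triangular array that is $m$-dependent conditionally on the volatility path. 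A CLT for such arrays (via the big-block/small-block decomposition, or a martingale CLT after an $m$-dependence–removing telescoping) delivers an $\mathcal F$-conditionally Gaussian limit, with the stability, the mixing by $\mathcal F$-measurable random variables, and the product form $\Psi\cdot IQ$ coming from the independent Brownian motion $W'$ exactly as before.

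It then remains to identify the $m\times m$ matrix $\Psi=(\psi_{rs})$ and to verify $\frac1{m^2}\iota'\Psi\iota=\theta^{sub}(m,\overline\lambda,\alpha)$. For $b$ in class $r$ and $b'$ in class $s$, the increments $\xi^{(r)}_b-1$ and $\xi^{(s)}_{b'}-1$ are uncorrelated whenever the windows $b,b'$ are disjoint; when they share $\ell$ returns the covariance is that of the normalised symmetric squared quantiles of two length-$m$ standard normal windows overlapping in $\ell$ coordinates — which, contracted with $\alpha$, is precisely the object appearing in the definition of $\Theta^{sub}(m,\overline\lambda)$ with $k=m-\ell$ (the term $k=0$ corresponding to $\ell=m$, i.e.\ the same window, which supplies $\frac1m\alpha'\Theta(m,\overline\lambda)\alpha$). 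Summing over overlapping window pairs gives $\psi_{rs}$; summing over $r,s\in\{1,\dots,m\}$ and dividing by $m^2$, the combinatorics of how many pairs realise each overlap $\ell$ collapses to $\frac1m\alpha'\Theta(m,\overline\lambda)\alpha+2\alpha'\bigl(\tfrac{1}{\nu_1(m,\lambda_i)\nu_1(m,\lambda_j)}\sum_{k=1}^{m}\mathrm{cov}(\cdot)\bigr)\alpha=\theta^{sub}(m,\overline\lambda,\alpha)$. Finally, jump robustness follows as in Proposition~\ref{prop:QRVjumprobust}: with $m$ fixed, almost surely eventually each jump lies in at most $m$ of the overlapping windows and is, within each such window, the unique jump; since $\lambda<1$ the largest positive and negative standardised returns of every window are discarded, so the order statistics entering $q^{sub}_i$ are unaffected, and the finitely many contaminated windows move $QRV_N^{sub}$ by $O(m/N)=o_P(N^{-1/2})$. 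The main obstacle is the combination of the second and third steps — setting up the conditionally $m$-dependent triangular-array CLT jointly across the $m$ shifts and checking that the window-overlap count reproduces the autocovariance sum in the statement; the localisation of $\sigma$ and the jump argument are only cosmetic adaptations of the proofs of Theorem~\ref{Thm:QRVcentrallimit} and Proposition~\ref{prop:QRVjumprobust}.
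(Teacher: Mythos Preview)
Your approach is correct, and the core technical device --- a big-block/small-block argument to handle the $m$-dependence of the localised summands, followed by a martingale CLT of Jacod--Shiryaev type --- is the same one the paper uses. The organising principle is different, however. You first write $QRV_N^{sub}$ as the average of $m$ shifted blocked estimators and then seek a joint stable CLT for the $m$-vector; the paper instead works directly on the subsampled sum, building big blocks of length $pm$ (separated by small blocks of length $m$) on the index $i$ of $q_i^{sub}$, obtaining for each fixed $p$ a limiting covariance $\Theta^{sub}(m,\overline\lambda)(p)$ that carries explicit $p/(p+1)$ and $(1-k/(pm))$ factors, and letting $p\to\infty$ at the end. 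Your framing has the merit of displaying the subsampled estimator as an average of shifted blocked ones, and your overlap combinatorics is right: summing the pairwise covariances over $(r,s)$ does collapse to $2\sum_{k=1}^{m-1}c_k$ (and $c_m=0$, so the upper limit $m$ in the statement is harmless). The paper's direct route is a bit leaner --- it freezes $\sigma$ once per big block, uniformly across all overlapping windows inside it, whereas in your setup the different shifts $r$ freeze $\sigma$ at slightly different time points; this discrepancy is $o_p(N^{-1/2})$ under condition (V) and so is not a genuine obstacle, but you would need to say a word about it when assembling the joint CLT.
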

\begin{proof}
see Appendix \ref{app:proofs} \qedl
\end{proof}
As before, asymptotically optimal weights can be assigned to the quantiles so as to minimize $\theta^{sub}(m,\overline{\lambda},\alpha)$. In Section \ref{sec:QRVimplementation}, we illustrate the efficiency improvement that results from subsampling.

\subsection{QRV with $m \to \infty$}

Up to this point, we have considered the case where $m$ is fixed and $n \to \infty$. In this limit, QRV is consistent under very weak assumptions on the log-price $X$. We now briefly discuss the case where $m \to \infty$. To get the corresponding asymptotic results in this limit, much stronger assumptions need to be imposed on $X$. In particular, when $n$ is fixed a sufficient condition for Theorem \ref{Thm:QRVconsistency} and \ref{Thm:QRVcentrallimit} to hold is that $\sigma$ is constant \citep*[this follows directly from classical order statistic results, see for instance][]{david:70a}. In this case, the asymptotic constants are as given in the following proposition.
\begin{proposition}
\label{prop:QRVmtoinfty} We have
\begin{eqnarray*}
\nu_1(\lambda) &\equiv& \lim_{m\rightarrow\infty}\nu_1(m,\lambda) = 2c_{ \lambda}^{2},\\
\nu_1(\lambda_i,\lambda_j)&\equiv&\lim_{m\rightarrow\infty}\nu_1(m,\lambda_i,\lambda_j) = 4c_{\lambda_i}^{2}c_{\lambda_j}^{2},\\
\Theta(\overline{\lambda})_{ij} &\equiv&\lim_{m\rightarrow\infty}\Theta(m,\overline{\lambda})_{ij} = \lim_{m\rightarrow\infty}\Theta^{sub}(m,\overline{\lambda})_{ij}
= 2\frac{ (1 - \lambda_j ) ( 2 \lambda_i - 1)}{ \phi \bigl( c_{ \lambda_i} \bigr) \phi \bigl( c_{
\lambda_j} \bigr)  c_{ \lambda_i} c_{ \lambda_j} },\\
\theta( \overline{ \lambda}, \alpha) &\equiv& \lim_{m\rightarrow\infty}\theta(m,\overline{\lambda},\alpha) = \lim_{m\rightarrow\infty}\theta^{sub}(m,\overline{\lambda},\alpha) = \alpha' \Theta(\overline{\lambda}) \alpha,
\end{eqnarray*}
with $\lambda_{i} \leq \lambda_{j}$, where $c_{ \alpha}$ and $\phi$ denote the $\alpha$-quantile and density function of the standard normal distribution.
\end{proposition}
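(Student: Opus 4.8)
The plan is to reduce everything to classical large-sample theory for the order statistics of an i.i.d.\ standard normal sample $\{U_i\}_{i=1}^m$, exactly as the remark preceding the statement suggests (see \citet*{david:70a}). The two ingredients I would invoke are: (i) for a level $\lambda\in(0,1)$ with $\lambda m\in\mathbb{N}$, the sample quantile satisfies $U_{(\lambda m)}\to c_\lambda$ almost surely, and any finite vector of $\sqrt m$-recentred sample quantiles is jointly asymptotically normal with $\lim_m m\,\mathrm{cov}(U_{(\lambda m)},U_{(\mu m)})=(\min(\lambda,\mu)-\lambda\mu)/(\phi(c_\lambda)\phi(c_\mu))$; and (ii) a uniform moment bound ensuring that for $\lambda$ bounded away from $0$ and $1$ the families $\{|U_{(\lambda m)}|^{2+\delta}\}_m$, and the relevant products, are uniformly integrable. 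I would use throughout the normal symmetries $c_{1-\lambda}=-c_\lambda$ and $\phi(c_{1-\lambda})=\phi(c_\lambda)$, and note that the index $m-\lambda m+1$ corresponds to the quantile level $1-\lambda+1/m\to 1-\lambda$.

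First I would dispatch the two $\nu_1$ limits. Because $\lambda,\lambda_i,\lambda_j\in(1/2,1)$, almost-sure convergence of the order statistics gives $|U_{(\lambda m)}|^2+|U_{(m-\lambda m+1)}|^2\to c_\lambda^2+c_{1-\lambda}^2=2c_\lambda^2$, and the product of two such terms $\to 4c_{\lambda_i}^2c_{\lambda_j}^2$; uniform integrability upgrades this to convergence of the expectations, yielding $\nu_1(\lambda)=2c_\lambda^2$ and $\nu_1(\lambda_i,\lambda_j)=4c_{\lambda_i}^2c_{\lambda_j}^2$.

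Next comes $\Theta(\overline\lambda)_{ij}$. Writing $Q_i=|U_{(\lambda_i m)}|^2+|U_{(m-\lambda_i m+1)}|^2$, a delta-method expansion of $x\mapsto x^2$ about $c_{\lambda_i}$ and $c_{1-\lambda_i}=-c_{\lambda_i}$ gives $Q_i-2c_{\lambda_i}^2=2c_{\lambda_i}\big[(U_{(\lambda_i m)}-c_{\lambda_i})-(U_{(m-\lambda_i m+1)}-c_{1-\lambda_i})\big]+o_p(m^{-1/2})$, which the moment bounds turn into an $L^2$ statement. Hence $m\,\mathrm{cov}(Q_i,Q_j)=m\,[\nu_1(m,\lambda_i,\lambda_j)-\nu_1(m,\lambda_i)\nu_1(m,\lambda_j)]$ converges to $4c_{\lambda_i}c_{\lambda_j}$ times the alternating sum of the four limiting quantile-covariances for the level pairs $(\lambda_i,\lambda_j),(\lambda_i,1-\lambda_j),(1-\lambda_i,\lambda_j),(1-\lambda_i,1-\lambda_j)$; resolving the minima with $1-\lambda_j<\tfrac12<\lambda_i\le\lambda_j$, these collapse to $2(1-\lambda_j)(2\lambda_i-1)/(\phi(c_{\lambda_i})\phi(c_{\lambda_j}))$. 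Dividing by $\nu_1(m,\lambda_i)\nu_1(m,\lambda_j)\to 4c_{\lambda_i}^2c_{\lambda_j}^2$ produces the stated limit. For $\Theta^{sub}(\overline\lambda)_{ij}$, the term $\tfrac1m\Theta(m,\overline\lambda)_{ij}$ vanishes by the previous step; for the lag sum, the $k$-shifted window shares $m-k$ observations with the base window, so the empirical-c.d.f.\ covariance picks up a factor $(m-k)/m$ and the same linearization gives $m\,\mathrm{cov}(Q_i^{(0)},Q_j^{(k)})\to(1-k/m)\cdot 8c_{\lambda_i}c_{\lambda_j}(1-\lambda_j)(2\lambda_i-1)/(\phi(c_{\lambda_i})\phi(c_{\lambda_j}))$. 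Then $\sum_{k=1}^m\mathrm{cov}(Q_i^{(0)},Q_j^{(k)})=\tfrac1m\sum_k m\,\mathrm{cov}(Q_i^{(0)},Q_j^{(k)})$ is a Riemann sum converging to $\int_0^1(1-t)\,dt=\tfrac12$ times that constant, and multiplying by $2/(\nu_1(m,\lambda_i)\nu_1(m,\lambda_j))\to 1/(2c_{\lambda_i}^2c_{\lambda_j}^2)$ reproduces $\Theta(\overline\lambda)_{ij}$ exactly. Finally $\theta(\overline\lambda,\alpha)=\lim_m\alpha'\Theta(m,\overline\lambda)\alpha=\alpha'\Theta(\overline\lambda)\alpha$ by continuity of the quadratic form, and identically for $\theta^{sub}$.

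The hard part is not the algebra but the two interchanges of a limit with an expectation: establishing the uniform integrability of the (products of) squared sample quantiles, needed both for the $\nu_1$ limits and to promote the delta-method expansion to $L^2$; and controlling the linearization remainder uniformly in the lag $k$ so that $\sum_{k=1}^m\mathrm{cov}(Q_i^{(0)},Q_j^{(k)})$ genuinely behaves like a Riemann sum, the delicate regime being $k$ close to $m$, where the overlap degenerates and the bivariate order-statistic law becomes singular.
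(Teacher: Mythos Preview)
Your proposal is correct and follows essentially the same route as the paper's proof: both obtain the $\nu_1$ limits from convergence of sample quantiles to population quantiles, both linearize the squared order statistics via the delta method, and both treat the subsampled covariance through the overlap factor $(1-k/m)$ and a Riemann-sum argument. The only noteworthy difference is the linearization device: the paper works explicitly with the Bahadur representation $U_{(\lambda m)}-c_\lambda=(\lambda-\hat F_m(c_\lambda))/\phi(c_\lambda)+O(m^{-3/4}\log m)$ a.s., then applies the CLT to the empirical distribution function and the $\Delta$-method, whereas you invoke the resulting asymptotic covariance of sample quantiles directly. The paper's choice has a small technical advantage you may want to borrow: the almost-sure $O(m^{-3/4}\log m)$ remainder, combined with standard Gaussian tail bounds for $\hat F_m$, gives the uniform-in-$k$ $L^2$ control of the linearization error essentially for free, which addresses precisely the two ``hard parts'' you flag at the end. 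Conversely, your write-up is more careful than the paper's in making explicit that uniform integrability is needed to pass from convergence in law to convergence of the $\nu_1$ moments.
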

\begin{proof}
see Appendix \ref{app:proofs}  \qedl
\end{proof}
These results are interesting for a number of reasons. Firstly, when $m \to \infty$ the asymptotic covariance matrix $\Theta$ can be expressed in closed form, and it is identical for the blocking and subsampling version of the QRV. This, in turn, allows for fast and easy calculation of optimal quantile weights $\alpha^\ast$. In Section \ref{sec:QRVimplementation}, we demonstrate that the use of these weights leads to only a very limited efficiency loss -- even for small $m$ -- which makes it attractive from a practical point of view. Secondly, in certain applications the constant volatility assumption can sometimes be justified (e.g. when sampling over a short horizon or on a suitably deformed time scale), and a single block implementation of QRV (i.e. $n = 1$ and $m$ large) may be preferred purely for the sake of computational simplicity. Interestingly, we will see in Section \ref{sec:QRVimplementation} that the efficiency loss associated with using only a single block is small, particularly when combining multiple quantiles. Finally, the above limit allows us to clarify the relation between the QRV and some related estimators that have appeared in the literature. In particular, with $m \to \infty$, $n = 1$, and constant volatility, QRV corresponds to the estimator of \citet*{david:70a}, the only difference being that the latter estimates the standard deviation instead of the variance. Also, in Appendix \ref{sec:QRVabsolute} we discuss an alternative formulation of QRV based on absolute returns. This estimator nests the MinRV and MedRV of \cite{andersen-dobrev-schaumburg:08a}, and we use the $m \to \infty$ limit to show its equivalence to the QRV based on signed returns introduced above.

As an aside, if we want to relax the constant volatility assumption above, it is still possible to let $m \rightarrow \infty$, but then we also need $n \to \infty$. Moreover, stronger assumptions on the dynamics of $X$ are needed: for consistency we require condition (V) and $m/n \rightarrow 0$, whereas to obtain a CLT we additionally need $a$, $\sigma^{ \prime}$ and $v^{ \prime}$ to satisfy condition (V) and $m^3 / n \rightarrow 0$. We then have the following result:
\begin{equation*}
QRV_N (m, \overline{ \lambda}, \alpha) \overset{p}{ \to} IV,
\end{equation*}
and
\begin{equation*}
\sqrt{N} \left( QRV_N(m,\overline{\lambda},\alpha) -  IV\right) \overset{d_{s}}{\to} MN \left( 0, \theta(\overline{\lambda},\alpha) IQ \right).
\end{equation*}
A formal proof of these results can be found in a separate appendix, i.e. see \citet*{christensen-oomen-podolskij:10b}.

\subsection{Implementation and asymptotic efficiency of QRV} \label{sec:QRVimplementation}

\begin{table}[!ht]
\setlength{\tabcolsep}{0.45cm}
\begin{center}
\caption{Asymptotic efficiency of QRV with single and multiple quantiles}
\label{Table:QRVefficiency}
\begin{tabular}{lccclccclc}
\hline
            & \multicolumn{3}{c}{blocked QRV}     && \multicolumn{3}{c}{subsampled QRV} \\
\cline{2-4}\cline{6-8}
$\overline{\lambda}~~~~~~~~\backslash ~~~~~~~~m$   & $20$ & $40$ & $100$  && $20$ & $40$ & $100$ && $\infty$\\
\hline
\multicolumn{9}{l}{\emph{Panel A: single quantile}}\\
0.80        & 4.24 &   4.29 &   4.31 &&   3.54 &   3.73 &   3.92 &&   4.32\\
0.85        & 3.56 &   3.58 &   3.59 &&   3.02 &   3.14 &   3.27 &&   3.60\\
0.90        & 3.10 &   3.14 &   3.15 &&   2.67 &   2.75 &   2.86 &&   3.16\\
0.95        & 2.88 &   2.99 &   3.07 &&   2.52 &   2.62 &   2.75 &&   3.13\\
0.98        &   -- &     -- &   3.58 &&     -- &     -- &   3.16 &&   3.88\\
\multicolumn{9}{l}{}\\
\multicolumn{9}{l}{\emph{Panel B: multiple quantiles with optimal weights $\alpha^\ast_{(m)}$}}\\
0.80 -- 0.95& 2.40 &   2.41 &   2.42 &&   2.27 &   2.29 &   2.32 &&   2.42\\
0.80 -- 0.98&   -- &     -- &   2.19 &&     -- &     -- &   2.13 &&   2.19\\
\multicolumn{9}{l}{}\\
\multicolumn{9}{l}{\emph{Panel C: multiple quantiles with asymptotically optimal weights $\alpha^\ast_{(\infty)}$}}\\
0.80 -- 0.95& 2.41 &   2.41 &   2.42 &&   2.31 &   2.32 &   2.33 &&   2.42\\
0.80 -- 0.98&   -- &     -- &   2.19 &&     -- &     -- &   2.14 &&   2.19\\
\hline
\end{tabular}
\medskip
\begin{footnotesize}
\parbox{0.98\textwidth}{\emph{Note}. This table reports the asymptotic efficiency constants $\theta(m, \overline{ \lambda}, \alpha)$ for the blocked QRV and $\theta^{sub}(m, \overline{ \lambda}, \alpha)$ for the subsampled QRV as given in Eqs. \eqref{Eqn:efficiency} and \eqref{Eqn:subSamVar} for different values of $m$. The last column reports the limiting value $\theta(\overline{ \lambda}, \alpha)$ as in Proposition \ref{prop:QRVmtoinfty} (which is the same for the blocked and subsampled QRV). Panel A reports the results for a selection of single quantiles. Panels B and C combine these quantiles using exact ``finite $m$'' optimal weights following Theorems \ref{Thm:QRVcentrallimit} and \ref{Thm:QRVsubcentrallimit} and asymptotically optimal weights following Proposition \ref{prop:QRVmtoinfty} respectively.}
\end{footnotesize}
\end{center}
\end{table}

Implementing the QRV requires selection of quantiles $\overline{ \lambda}$, quantile weights $\alpha$, and block-length $m$. The asymptotic theory developed above can be used to determine the optimal quantile weights but, as we will now discuss, it can also be used to guide the choice of $\overline{ \lambda}$ and $m$. We will also use the theory to compare the efficiency of QRV to the leading alternative realised variance measures. Section \ref{sec:QRVsimulations} will follow up this discussion with simulations investigating QRV's finite sample performance and robustness to jumps and outliers.

Table \ref{Table:QRVefficiency} reports the asymptotic efficiency constant $\theta$ of the blocked and subsampled QRV estimator for different $m$ and $\overline{\lambda}$. A number of important observations regarding the efficiency and the preferred implementation of the estimator can be made.

First, considering Panels A and B, we see that QRV is a highly efficient estimator particularly when using multiple quantiles. For instance, when combining the five quantiles listed in Panel A, QRV has an asymptotic efficiency of around 2.2. This compares favorably to the leading jump-robust bi-power variation measure of \cite{barndorff-nielsen-shephard:04b} for which the corresponding figure is $\pi^2/4+\pi-3 \simeq 2.61$. Moreover, by including additional quantiles in the construction of QRV we can push its efficiency arbitrarily close to 2, so that in the limit it attains the ML efficiency of realised variance \citep*{jacod:94a, jacod-protter:98a, barndorff-nielsen-shephard:02a} while still retaining robustness to jumps. Comparing the blocked QRV with the subsampled QRV we confirm an efficiency gain associated with subsampling, albeit the benefit is modest particularly when using multiple quantiles.\footnote{The efficiency gain associated with subsampling can be very substantial when we consider higher powers of quantiles, for instance when estimating integrated quarticity. Also with microstructure noise, the efficiency gain is of a factor 2 -- 3. Because we start Section \ref{sec:QRV*} immediately with a subsampled version of our noise robust-estimator this is not explicitly discussed.}

\begin{figure}[t!]
\begin{center}
\caption{Optimal quantile weights and scaling factors for varying block size $m$.}
\label{Figure:QRVasymptotic}
\begin{tabular}{cc}
Panel A: optimal weights $\alpha$ &
Panel B: scaling factors $\nu_1$ \\
\includegraphics[height=8cm,width=0.45\textwidth]{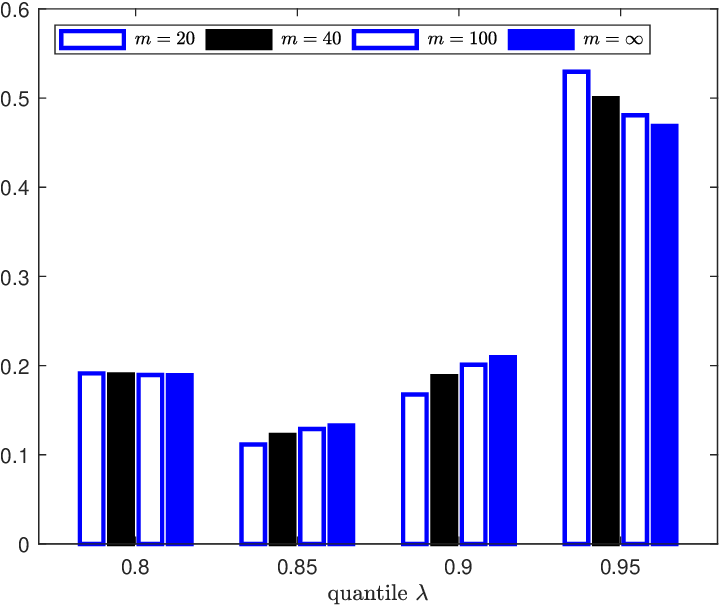} &
\includegraphics[height=8cm,width=0.45\textwidth]{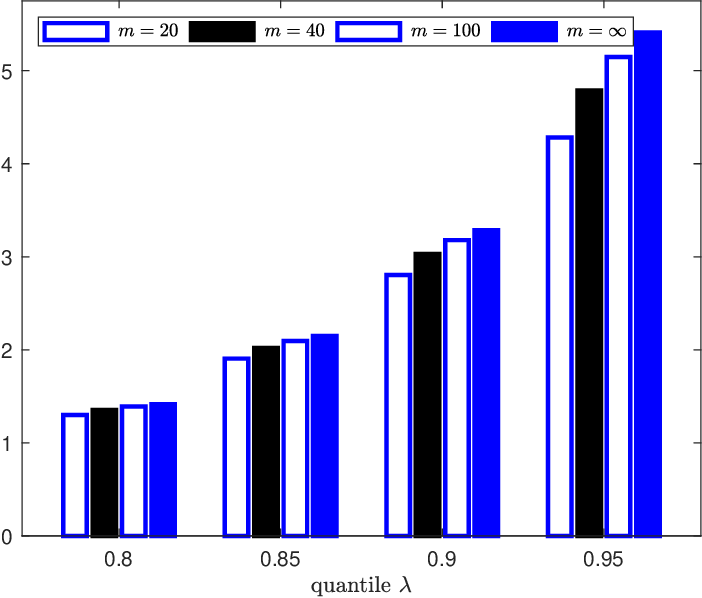} \\
\end{tabular}
\begin{footnotesize}
\parbox{0.98\textwidth}{\emph{Note}. This figure reports the optimal quantile weights $\alpha$ (Panel A) as in Eq. \eqref{Eqn:alpha*m} and scaling factors $\nu_1$ (Panel B) as in Eq. \eqref{Eqn:nu(r,m)} for QRV using blocking with $\overline{ \lambda} =\{0.80,0.85,0.90,0.95\}$ and varying $m$.}
\end{footnotesize}
\end{center}
\end{figure}

Second, comparing Panels B and C, we see that the use of limiting ``$m \to \infty$'' optimal quantile weights following Proposition \ref{prop:QRVmtoinfty} instead of exact ``finite-$m$'' optimal weights following Theorem \ref{Thm:QRVcentrallimit} and \ref{Thm:QRVsubcentrallimit} leads to only a marginal deterioration in efficiency. This is very attractive from a practical viewpoint, for it means that we can simply use the limiting closed-form expression for $\Theta$ to obtain reliable and near-optimal quantile weights instead of calculating its exact finite-sample counterpart for each and every $m$ and $\overline{\lambda}$ we may consider. As an aside, we point out that the use of asymptotic scaling factors $\nu_1$ should be avoided as it will induce potentially serious biases in the estimator even for moderately large $m$. Figure \ref{Figure:QRVasymptotic} further illustrates these effects by plotting the optimal quantile weights and scaling factors for different values of $m$.

Third, from Panel A we observe that the choice of quantile can be informed by efficiency considerations. It is quite intuitive that quantiles close to the mode of the return distribution are not very informative about the variance of the process. At the same time, quantiles far into the tail region tend to be erratic. The optimal choice of quantile balances this tradeoff to extract the maximum amount of information regarding the spread of the distribution. From Panel A we see that, depending on the choice of $m$, the optimal quantile lies in the region $0.90 - 0.95$. Of course, with $k>1$, quantiles outside this region may be added to exploit the covariance structure of the order statistics. On the choice of block length, we see that there are modest efficiency gains to be had by choosing it small (it can be shown that $\theta$ is monotonically increasing in $m$). However, when multiple quantiles are used, the gain is negligible.

From the above discussion it is clear that the asymptotic theory is very helpful in guiding the choice of quantiles and block length. In particular, we can determine the optimal (mix of) quantiles and their weights by maximizing asymptotic efficiency of the resulting estimator. In addition, the theory indicates a weak preference for choosing a small $m$ and the use of subsampling. In practice, there are two other important considerations in addition to asymptotic efficiency, namely (i) QRV's ability to estimate the IV in finite sample and (ii) QRV's robustness to jumps and outliers. For QRV to accurately measure the IV, $\sigma$ is required to be locally constant on each subinterval. Thus, also from this viewpoint, a small $m$ or equivalently a large $n$ is desirable. On the other hand, jump robustness is controlled by the \emph{joint} choice of $m$ and $\lambda_{max} = \max \{ \overline{ \lambda} \}$: over a block of length $m$, QRV is robust to $(1-\lambda_{max})m-1$ positive and negative jumps. These observations suggest that the optimal choice of $\overline{\lambda}$ and $m$ should satisfy the following three conditions (i) $m$ is sufficiently small to ensure good ``locality'' of the estimator, (ii) $\lambda$ includes a quantile in the highly informative tail region, and (iii) $(1-\lambda_{max})m$ is sufficiently large to ensure robustness against jumps and outliers. Sections \ref{sec:QRVsimulations} and \ref{sec:QRVempirical} will further illustrate the above using simulations and empirical analysis.

\subsection{Finite sample performance and jump robustness}\label{sec:QRVsimulations}

The results so far indicate that the asymptotic efficiency of QRV is excellent. The simulations below are designed to gauge the finite sample performance of the estimator. We pay particular attention to bias, efficiency, and robustness to jumps and outliers. We also compare the performance of the QRV to recently developed alternative estimators.

To simulate the log-price $X$, we adopt the following model:
\begin{equation}
\label{Eqn:Xsim} \text{d}X_{t} = \sigma_{t} \text{d} W_t, \qquad t \in \left[ 0, 1 \right],
\end{equation}
where $W$ is a standard Brownian motion and the dynamics of $\sigma_{t}$ are as specified below. The baseline scenario is a constant volatility Brownian motion (``BM''), i.e.
\begin{equation}
\sigma^2_t  = 0.0391.\label{Eqn:BM}
\end{equation}
To assess QRV's ability to handle time-varying volatility, we use a Heston-type stochastic volatility (``SV'') model
\begin{equation}
\text{d} \sigma^2_t = (0.3141 - 8.0369\sigma^2_t) \text{d}t+\sigma_t\sqrt{0.1827}
\text{d}B_t,\label{Eqn:Heston}
\end{equation}
where $B$ is another Brownian motion with $B \Perp W$. To gauge the impact of leverage, we also simulate from Eq. \eqref{Eqn:Heston} with $\text{d}W_t \text{d}B_t = -0.75 \text{d}t$ (``SV-LEV''). Finally, we consider two more variance specifications that are both capable of generating erratic and highly volatile sample paths. The first is a model proposed by \cite{ait-sahalia:96a} that incorporates stochastic elasticity of variance and non-linear drift (``SEV-ND''), i.e.
\begin{eqnarray}
\text{d} \sigma^2_t & = & (-0.554+21.32\sigma^2_t-209.3\sigma^4_t+0.005\sigma^{-2}_t) \text{d}t +
\sqrt{0.017\sigma^2_t+53.97\sigma_t^{5.76}} \text{d}B_t.\label{Eqn:SEVND}
\end{eqnarray}
with $B \Perp W$. The second is a two-factor stochastic volatility model (``SV2F-LEV'') analyzed in \citet*{chernov-gallant-ghysels-tauchen:03a}, i.e.
\begin{eqnarray}
\sigma_t^2 &= &\text{s-}\exp(-1.2+0.04f^{(1)}_t+1.5f^{(2)}_t),\label{Eqn:SV2F}\\
\text{d}f^{(1)}_t & = &-0.000137f^{(1)}_t \text{d}t+\text{d}B_t^{(1)},\notag\\
\text{d}f^{(2)}_t & = &-1.386f^{(2)}_t \text{d}t+(1+0.25f^{(2)}_t)\text{d}B_t^{(2)},\notag
\end{eqnarray}
where $\text{d}W_t \text{d}B_t^{(1)} = \text{d}W_t \text{d}B_t^{(2)} = -0.3 \text{d}t$ and $\text{s-}\exp$ denotes a ``spliced'' exponential function as specified and discussed in \citet*{chernov-gallant-ghysels-tauchen:03a}.

The above stochastic volatility models cover a wide range of dynamic specifications and thus provide a good testing ground for QRV. The parameter values for the BM, SV, and SEV-ND models in Eqs. (\ref{Eqn:BM}--\ref{Eqn:SEVND}) are taken from the empirical study by \cite{bakshi-ju-ou-yang:06a} whereas the parameters for the SV2F-LEV model in Eq. \eqref{Eqn:SV2F} are taken from \citet*{huang-tauchen:05a}. It should be noted that while these studies typically calibrate the models from daily data to an annual horizon, here we simulate the processes over the unit interval so that, effectively, we compress a year's worth of variation into a single day. As a result, we end up simulating highly erratic volatility paths, which serves to challenge the QRV estimator to the extreme. As an illustration, see Panel A of Figure \ref{Figure:QRVrobust} for a simulated return and variance series.

To study the robustness of QRV, we also simulate from the BM model and add jumps. In particular, we add a fixed number of $n_J$ Gaussian jumps at random points in the sample with a combined jump variation $v_J$ measured as a fraction of the IV. We consider four scenarios: $\{n_J,v_J\} = \{1,\frac{1}{4}\}$, i.e. one large jump accounting for $20\%$ of total variation, $\{n_J,v_J\} = \{ 5, \frac{1}{4} \}$, i.e. five medium jumps accounting for $20\%$ of total variation, $\{n_J,v_J\} = \{10, \frac{1}{4} \}$, i.e. ten small jumps accounting for $20\%$ of total variation, and $\{ n_J, v_J \} = \{5, \frac{1}{2}\}$, i.e. five large jumps accounting for a third of total variation. Additionally, we consider a scenario where the price series is contaminated by ``outliers''. Such spurious and deviant price observations are commonly encountered in high-frequency data due to, for instance, delayed trade reporting, misplaced decimal points, data errors, etc. (see Section \ref{sec:QRVempirical} for some examples). In our simulations, we position a single outlier of random size at a random point in the series, ensuring that its variation accounts for $20\%$ of total variation. As an illustration, see Panel B of Figure \ref{Figure:QRVrobust} for a simulated price series with a jump and outlier added.

\begin{figure}[t!]
\begin{center}
\caption{SV and jump simulation.}
\label{Figure:QRVrobust}
\begin{tabular}{cc}
\footnotesize{Panel A: squared return series from SEV-ND} &
\footnotesize{Panel B: price series from BMJ \& BM-outlier} \\
\includegraphics[height=8cm,width=0.45\textwidth]{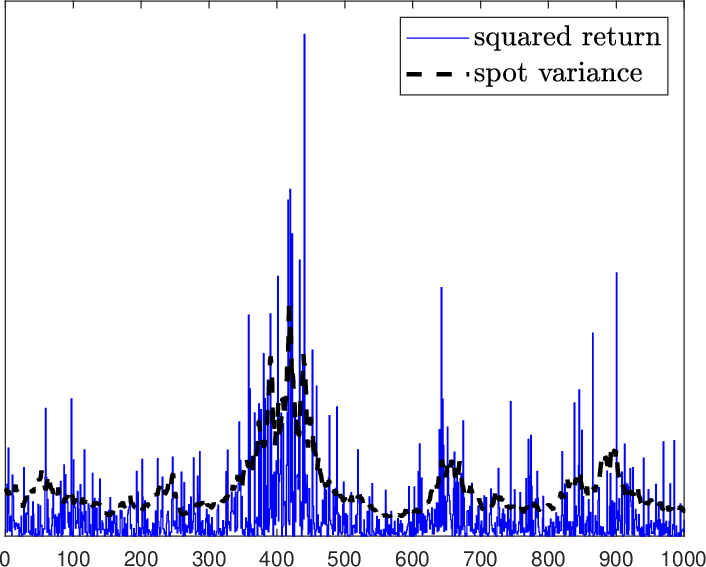} &
\includegraphics[height=8cm,width=0.45\textwidth]{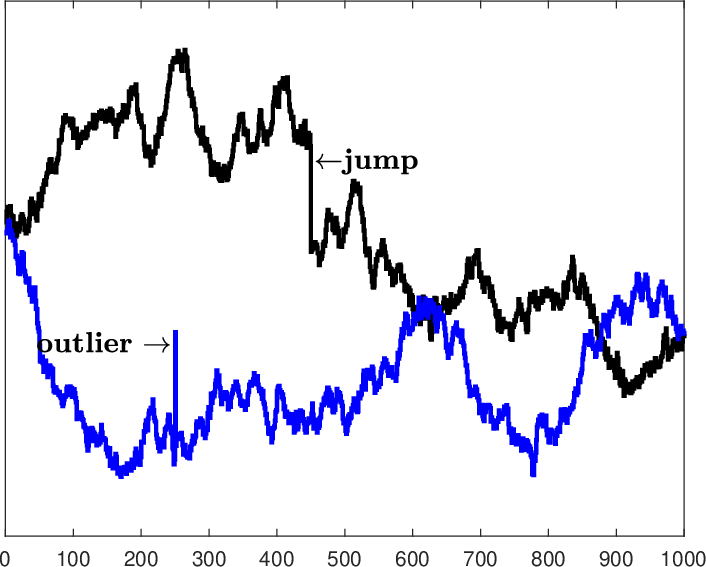}
\end{tabular}
\begin{footnotesize}
\parbox{0.98\textwidth}{\emph{Note}. Panel A plots a time-series of $N = 1,000$ squared returns simulated from the SEV-ND model over the unit interval. Panel B plots a time-series of $N=1,000$ prices simulated from the BM+jump and BM+outlier model over the unit time interval.}
\end{footnotesize}
\end{center}
\end{figure}

To simulate the process in Eq. \eqref{Eqn:Xsim}, we use an Euler discretization scheme and set $N = 1,000$. QRV is computed as in Eq. \eqref{Eqn:QRVvector} and \eqref{Eqn:QRVsubvector} using four pairs of quantiles $\overline{\lambda} = \{0.80,0.85;0.90;0.95\}$, asymptotic weights derived from Proposition \ref{prop:QRVmtoinfty}, and three different choices of block length, namely $m = \{20,40,100\}$ or equivalently $n = \{50,25,10\}$. For comparison, we also compute realised variance (RV) in addition to three recently proposed jump-robust estimators, i.e. bi-power variation (BPV) of \citet*{barndorff-nielsen-shephard:04b}, threshold realised variance (TRV) of \citet*{jacod:08a, mancini:04b, mancini:09a}, and MedRV of \citet*{andersen-dobrev-schaumburg:08a}:
\begin{eqnarray*}
RV_N &=& \sum_{i =1}^N |\Delta_i^N X|^2,\notag\\
BPV_N &=& \frac{\pi}{2} \sum_{i = 2}^N |\Delta_i^N X||\Delta_{i-1}^N X|,\notag\\
TRV_N &=& \sum_{i =1}^N |\Delta_i^N X|^21_{\{|\Delta_i^N X|<cN^{-\overline{\omega}}\}},\qquad \text{for}\quad\overline{\omega}\in (0,1/2),\notag\\
MedRV_N &=& \frac{\pi}{6-4\sqrt{3}+\pi}\frac{N}{N-2}\sum_{i = 2}^{N-1} \textrm{median} (|\Delta_{i-1}^N X|,|\Delta_i^N X|,|\Delta_{i+1}^N X|)^2.
\end{eqnarray*}

\begin{sidewaystable}[!ht]
\setlength{\tabcolsep}{0.25cm}
\begin{center}
\caption{Performance of QRV with stochastic volatility, jumps, and outliers}\smallskip
\label{Table:QRVrobust}
\begin{tabular}{lrrrcrrrcrrrr}
\hline & \multicolumn{3}{c}{blocked QRV $(m,n)$} && \multicolumn{3}{c}{subsampled QRV $(m,n)$} && \multicolumn{4}{c}{benchmarks} \\
\cline{2-4} \cline{6-8} \cline{10-13}
model & $(20,50)$ & $(40,25)$ & $(100,10)$ & & $(20,50)$ & $(40,25)$ & $(100,10)$ & & RV & BPV& TRV & MedRV\\
\hline
\multicolumn{6}{l}{\emph{Panel A: ``bias'' measure $\mathbb{E} (\widehat{IV}/IV)$}}\\
BM                             &     1.00 &     1.00 &     1.00 & &     1.00 &     1.00 &     1.00 & &     1.00 &     1.00 &     1.00 &     1.00 \\
SV                             &     1.00 &     0.99 &     0.98 & &     1.00 &     0.99 &     0.98 & &     1.00 &     1.00 &     1.00 &     1.00 \\
SV-LEV                         &     1.00 &     0.99 &     0.98 & &     1.00 &     0.99 &     0.98 & &     1.00 &     1.00 &     1.00 &     1.00 \\
SEV-ND                         &     1.00 &     0.99 &     0.99 & &     1.00 &     0.99 &     0.98 & &     1.00 &     1.00 &     1.00 &     1.00 \\
SV2F-LEV                       &     1.00 &     1.00 &     0.99 & &     1.00 &     0.99 &     0.98 & &     1.00 &     1.00 &     1.00 &     1.00 \\
BMJ$(n_J=1,v_J=\frac{1}{4})$   &     1.00 &     1.00 &     1.00 & &     1.00 &     1.00 &     1.00 & &     1.25 &     1.03 &     1.01 &     1.00 \\
BMJ$(n_J=5,v_J=\frac{1}{4})$   &     1.02 &     1.02 &     1.02 & &     1.02 &     1.02 &     1.02 & &     1.25 &     1.06 &     1.05 &     1.02 \\
BMJ$(n_J=10,v_J=\frac{1}{4})$  &     1.04 &     1.04 &     1.03 & &     1.04 &     1.04 &     1.03 & &     1.25 &     1.08 &     1.12 &     1.03 \\
BMJ$(n_J=5,v_J=\frac{1}{2})$   &     1.03 &     1.02 &     1.02 & &     1.03 &     1.02 &     1.02 & &     1.50 &     1.09 &     1.05 &     1.02 \\
BM-outlier                     &     1.01 &     1.01 &     1.01 & &     1.01 &     1.01 &     1.01 & &     1.25 &     1.21 &     1.02 &     1.33 \\
\multicolumn{6}{l}{}\\
\multicolumn{6}{l}{\emph{Panel B: ``efficiency'' measure \upshape{var}$(\sqrt{N}(\widehat{IV} - IV)/\sqrt{IQ})$}}\\
BM                             &     2.41 &     2.42 &     2.42 & &     2.33 &     2.38 &     2.49 & &     2.00 &     2.60 &     2.00 &     2.96 \\
SV                             &     2.42 &     2.41 &     2.37 & &     2.37 &     2.51 &     3.28 & &     2.01 &     2.61 &     2.01 &     2.96 \\
SV-LEV                         &     2.40 &     2.39 &     2.33 & &     2.36 &     2.49 &     3.27 & &     2.02 &     2.60 &     2.02 &     2.94 \\
SEV-ND                         &     2.37 &     2.35 &     2.29 & &     2.33 &     2.46 &     3.29 & &     2.00 &     2.61 &     2.00 &     2.95 \\
SV2F-LEV                       &     2.38 &     2.36 &     2.32 & &     2.34 &     2.51 &     3.52 & &     1.99 &     2.59 &     1.99 &     2.93 \\
BMJ$(n_J=1,v_J=\frac{1}{4})$   &     2.44 &     2.44 &     2.44 & &     2.36 &     2.40 &     2.51 & &   127.74 &     3.66 &     2.20 &     2.99 \\
BMJ$(n_J=5,v_J=\frac{1}{4})$   &     3.02 &     2.54 &     2.52 & &     2.77 &     2.49 &     2.59 & &    27.87 &     3.80 &     3.51 &     3.29 \\
BMJ$(n_J=10,v_J=\frac{1}{4})$  &     3.16 &     2.68 &     2.61 & &     2.90 &     2.61 &     2.69 & &    15.53 &     3.84 &     4.88 &     3.41 \\
BMJ$(n_J=5,v_J=\frac{1}{2})$   &     4.63 &     2.60 &     2.52 & &     3.81 &     2.52 &     2.59 & &   104.66 &     5.24 &     3.48 &     4.06 \\
BM-outlier                     &     2.46 &     2.47 &     2.46 & &     2.38 &     2.42 &     2.53 & &   127.22 &    89.24 &     2.89 &   237.02 \\
\hline
\end{tabular}
\begin{scriptsize}
\parbox{\textwidth}{\emph{Note}. This table reports the bias and efficiency measure for QRV (and RV, BPV, TRV, and MedRV for comparison) under various model specifications for $N=1,000$. The bias measure in Panel A is equal to 1 for an unbiased IV estimator. The efficiency measure in Panel B takes on a minimum attainable value of 2 for the MLE.}
\end{scriptsize}
\end{center}
\end{sidewaystable}

To implement TRV, we set $\overline{\omega} = 0.47$ and $c = 6\sqrt{IV}$, where IV is estimated using BPV. This parameter choice is in line with \citet*{ait-sahalia-jacod:09a} and ensures that, in our simulation setup, TRV is unbiased in the absence of jumps (alternatively, we could have lowered $c$, gaining robustness to jumps but introducing a downward bias under SV).

Over 100,000 independent simulation runs, we compute a ``bias'' measure $\mathbb{E} ( \widehat{IV} / IV )$ and an ``efficiency'' measure $\var(\sqrt{N}(\widehat{IV} -IV)/\sqrt{IQ})$, where $\widehat{IV} = \{QRV_N, RV_N, BPV_N, TRV_N, MedRV_N\}$. If the estimator is unbiased we expect the bias statistic to be one. Moreover, from the relevant asymptotic results we know the efficiency statistic should be $2$ for RV and TRV, $2.6$ for BPV, $3.0$ for MedRV, and around $2.4$ and $2.3$ for our implementation of the blocked and subsampled QRV, respectively.

From the results in Table \ref{Table:QRVrobust} several interesting patterns emerge. First consider the scenarios without jumps. With model BM, all estimators perform as expected. They are unbiased and their efficiency measure is close to what the asymptotic distribution theory predicts, indicating that it affords a good approximation to finite sample performance. When introducing stochastic volatility through model SV, we find that QRV is biased downwards when few blocks are selected. However, this bias is small for $m=100~/~n=10$ and negligible for $m=20~/~n=50$. Leverage (SV-LEV) does not have a noticeable impact on any of the results. Under the SEV-ND and SV2F-LEV variance specifications, both generating high volatility-of-volatility, the QRV estimator still performs well provided that a sufficient number of subintervals are selected. With a highly erratic volatility path, we need to use large $n$ or small $m$ to ensure good locality of the estimator in line with the discussion above. Finally, comparing the blocked QRV to the subsampled QRV, we see that they perform similarly in terms of bias but that the efficiency of the subsampled QRV deteriorates when $m$ increases. The intuition for this is that the subsampling procedure places less weight on the observations in the first and last block than it places on all other observations. In the asymptotic analysis this effect disappears as $n\to\infty$, but in finite sample it can adversely affect the efficiency of the estimator, particularly in the presence of stochastic volatility. Importantly, the blocking implementation of QRV does not suffer from this and may thus be preferred in situations where $n$ is relatively small.

Next, consider the scenarios with jumps. As expected, QRV enjoys superior robustness to jumps. The small bias we observe (not exceeding 4\% across all scenarios considered) can be explained by noting that the jumps added to the process distort the original ordering of the diffusive returns. This in turn biases the empirical return quantiles used to construct QRV.\footnote{To further clarify intuition for this, consider the following example. Suppose we have a ranked sequence of diffusive returns $\{r_{(1)},r_{(2)},\ldots,r_{(m)}\}$ from the BM model. With $\lambda = 0.95$ we can estimate IV unbiasedly using $r_{(5)}$ and $r_{(95)}$ as described above. Now suppose a positive jump $J$ is added to, say, $r_{(60)}$. If the jump is sufficiently large, the ordered returns sequence becomes $\{ r_{(1)}, r_{(2)}, \ldots, r_{(59)}, r_{(61)}, \ldots, r_{(m)}, r_{(60)} + J \}$ and the ``realised'' $\lambda = 0.95$ quantile is now $r_{(96)}$. Thus, as jumps are added to the price process, the original ordering of returns can be disrupted, leading to a small upward bias in QRV. This bias, however, is only weakly related to the size of the jump.} Importantly, however, this effect is largely independent of the jump size so that QRV maintains excellent robustness in finite sample. Also, with outliers simulated as described above, we see that QRV is virtually unaffected both in terms of bias and in terms of efficiency. Turning to the benchmark estimators, it is well known that in the current setting RV estimates total variation, i.e. $(1+v_J)IV$, explaining the bias and low efficiency when evaluated against IV. BPV is asymptotically immune to jumps, but biased in finite sample: for the BMJ model considered here $\mathbb{E} (BPV_N/IV) \simeq 1+2\sqrt{n_Jv_J/N}$ which can be substantial when jumps become more frequent or volatile. Also, with an outlier in the price series -- effectively constituting two consecutive jumps of opposite sign -- the key assumption underlying the robustness of BPV and MedRV is violated leaving both estimators severely biased. Finally, TRV's performance with large jumps and outliers is comparable to that of QRV, but deteriorates when jumps become smaller and more frequent. TRV requires one to specify a uniform threshold for the sample which makes it sensitive to small jumps if it is set too high and sensitive to stochastic volatility if it is set too low. In this respect, a nice feature of QRV is that it effectively sets a block specific ``threshold'' which naturally adapts to the magnitude of returns through the choice of quantiles.

\section{QRV with market microstructure noise} \label{sec:QRV*}

It has long been recognized that market microstructure effects in high-frequency data -- such a bid-ask bounce and non-synchronous trading -- distort the statistical properties of returns \citep*[e.g.][]{epps:79a, fisher:66a, niederhoffer-osborne:66a} and are detrimental to RV as an estimator of the IV, see, e.g., \cite{zhou:96a}. In this section, we develop a modified version of the QRV that is robust to noise and delivers consistent estimates of the IV.

On a filtered probability space $\bigl( \Omega, \mathcal{F}, \left( \mathcal{F}_{t} \right)_{t \geq 0}, \mathbb{P} \bigr)$, we consider the noisy diffusion model
\begin{equation}
\label{Eqn:Y} Y_{i / N} = X_{i / N} + u_{i /N},
\end{equation}
for $i = 0, 1, \ldots, N$. Here, the ``efficient'' price $X$ is a Brownian semimartingale as in Eq. \eqref{Eqn:X}. The microstructure noise $u$ is an i.i.d. process, independent of $X$, with
\begin{equation*}
\mathbb{E} \left( u_{i/N} \right) = 0, \qquad \mathbb{E} \left( u_{i/N}^{2} \right) = \omega^{2}.
\end{equation*}
The process $Y$ in Eq. \eqref{Eqn:Y} is constructed as follows. Suppose that $X$ is defined on a filtered probability space $\bigl( \Omega^0, \mathcal{F}^0, \left( \mathcal{F}_{t}^0 \right)_{t \geq 0}, \mathbb{P}^0 \bigr)$. We define a second probability space $(\Omega^1, \mathcal F^1, (\mathcal F_t^1)_{t\geq 0}, \mathbb{P}^1)$, where $\Omega^{1}$ denotes $\mathbb{R}^{[0,1]}$ and $\mathcal{F}^{1}$ the product Borel-$\sigma$-field on $\Omega^{1}$. Next, let $\mathbb{Q}$ be a probability measure on $\mathbb{R}$ (the marginal law of $u$). For any $t\geq 0$, $\mathbb{P}_t^{1}=\mathbb{Q}$ and $\mathbb{P}^1$ denotes the product $\otimes_{t \in [0,1]} \mathbb{P}_t^{1}$. The filtered probability space $(\Omega, \mathcal F, (\mathcal F_t)_{t\geq 0}, \mathbb{P})$, on which we define the process $Y$, is given as
\begin{equation*}
\left.\begin{array}{l}
\Omega = \Omega^{0} \times \Omega^{1}, \qquad \mathcal{F} = \mathcal{F}^{0} \times \mathcal{F}^{1}, \qquad \mathcal{F}_{t} = \bigcap_{s>t} \mathcal{F}_s^{0} \times \mathcal{F}_s^{1}, \\
\mathbb{P} = \mathbb{P}^{0} \otimes \mathbb{P}^{1}.
\end{array}\right\}
\end{equation*}
The i.i.d. assumption on $u$ is a natural starting point to analyze the noise case and is widely used in the literature. Moreover, it has some empirical support at moderate sampling frequencies \citep*[see, e.g.,][for a further discussion of this assumption]{hansen-lunde:06a,diebold-strasser:13a}. As before, the object of econometric interest is the IV of $X$, with the additional challenge that inference is now based on noisy high-frequency data.

The modified QRV measure we develop below uses pre-averaged data, which gives it robustness to noise. A similar approach to noise reduction is studied in \citet*{podolskij-vetter:09a} and applied to the BPV estimator. Although the main contribution of this paper is the development of QRV as a highly efficient jump- and outlier-robust measure of the IV, we point out that the treatment of our estimator in the presence of noise goes well beyond \citet*{podolskij-vetter:09a}. First, the procedure developed here makes much more efficient use of the data. Specifically, in a constant volatility setting, the asymptotic variance of QRV can be as low as $8.5\sigma^3\omega$, which is very close to the lower bound of the ML estimator under parametric assumptions and substantially more efficient than the estimator used by \citet*{podolskij-vetter:09a}, which has an asymptotic variance of around $26 \sigma^3 \omega$. Second, our asymptotic theory holds under much weaker assumptions on the noise distribution. Third, we prove that the CLT of the noise-corrected QRV is robust to finite activity jumps, which is something that is not possible for the BPV. Fourth, even though there is no explicit formula available for the conditional variance in the CLT, we provide a consistent estimator of this quantity, which permits a feasible CLT.

\subsection{Construction of the estimator} \label{sec:QRV*construction}

Choose a natural number $K = K(N)$ with
\begin{equation}
\label{Eqn:km} K = c N^{1 / 2} + o\left( N^{  1/4}  \right),
\end{equation}
for some constant $c > 0$, and consider a weight function $h$ on $[0,1]$, which is continuous, piecewise continuously differentiable having a piecewise Lipschitz derivative $h'$ with $h(0) = h(1)= 0$ and that satisfies $\int_0^1 h^2 (s) \text{d}s > 0$. A typical example, that is used in our simulations in Section \ref{sec:QRV*simulations}, is $h(x) = x \wedge (1 - x)$.

Define the return-like statistic
\begin{equation}
\label{Eqn:Ybaerep}
\overline{Y}_j^{N}=  \sum_{i = 1}^{K - 1} h \Big( \frac{i}{K} \Big) \Delta_{j + i}^{N} Y,
\end{equation}
and also set $\psi_1= \int_0^1 (h'(x))^2 \text{d}x$ and $\psi_2= \int_0^1 h^2(x) \text{d}x$.
\begin{remark} \rm
In practice, it is better to use the Riemann approximations
\begin{equation*}
\psi_1^n = K\sum_{j=1}^{K} \left( h \Big( \frac{j}{K} \Big)-h \Big( \frac{j-1}{K} \Big)\right)^2~, \qquad
\psi_2^n =  \frac{1}{K}\sum_{j=1}^{K-1}  h^2 \Big( \frac{j}{K} \Big)
\end{equation*}
of $\psi_1$ and $\psi_2$ to improve the finite sample properties, because $\psi_1^n$ and $\psi_2^n$ are the "true" constants that appear in the computations.
\end{remark}
Next, select a sub-sequence using data observed in the interval $[i/N, (i+m(K-1))/N]$:
\begin{equation*}
\label{Eqn:Dnoise} \overline{D}_i^N Y = \{ \overline{Y}_{i + \left( j - 1 \right) (K-1)}^{N} \}_{j = 1}^{m}, \qquad \text{for}\quad i = 0,1,\ldots,N-m(K-1)
\end{equation*}
and compute
\begin{equation*}
q^\ast_i(m,\lambda )= g^2_{\lambda m} \left( N^{1/4} \overline{D}_i^N Y \right) + g^2_{m-\lambda m+1} \left( N^{1/4} \overline{D}_i^N Y \right).
\end{equation*}
The noise-corrected QRV measure (QRV$^\ast$ hereafter) is now defined as:
\begin{equation}
QRV^\ast_N (m, \overline{ \lambda}, \alpha) \equiv \alpha' QRV^\ast_N (m, \overline{ \lambda}),
\end{equation}
where $\overline{\lambda}$ and $\alpha$ are as above, and the $j$th element of $QRV^\ast_N (m, \overline{ \lambda})$ is given by:
\begin{equation}
QRV^\ast_N (m,\lambda_{j} ) = \frac{1}{c\psi_2 (N-m(K-1)+1)} \sum_{i = 0}^{N - m(K-1) }\frac{q^\ast_i(m, \lambda_{j})}{\nu_1(m,\lambda_{j} )}.
\end{equation}
Note that the constant $K$ controls the stochastic order of the term $\overline{Y}_j^{N}$, since
\begin{eqnarray}
\label{Eqn:uorder}
\overline{u}_{j}^{N} = O_p \left( \sqrt{ \frac{1}{K}} \right), \qquad \overline{X}_{j}^{N} = O_p \left( \sqrt{ \frac{K}{N}} \right).
\end{eqnarray}
Thus, when $K$ is chosen as in Eq. \eqref{Eqn:km} the stochastic orders of the quantities in Eq. \eqref{Eqn:uorder} are balanced (this implies the best rate of convergence), and under mild conditions we have that
\begin{equation*}
N^{1/4} \overline{Y}_j^{N} | \mathcal{F}_{j / N} \stackrel{a}{ \sim} N \left( 0, c\psi_2 \sigma_{j / N}^{2} + \frac{\psi_1}{c} \omega^{2} \right).
\end{equation*}
This demonstrates the rationale of the filtering procedure underlying the construction of QRV$^\ast$, namely while  $N^{1 / 4} \overline{Y}_{j}^{N}$ is affected by the noise through $\omega^2$, it behaves like $\sqrt{N} (X_{i / N} - X_{ \left( i - 1 \right) / N})$.

\subsection{Asymptotic properties}
Our first result shows the consistency of QRV$^{ \ast}$ (after a proper bias correction).
\begin{theorem}
\label{Thm:QRV*consistency} Assume that $m$ is a fixed number and $\mathbb{E} \left( u_{i}^{4} \right) < \infty $. As $N \to \infty $, it holds that
\begin{equation*}
QRV^\ast_N (m, \overline{ \lambda}, \alpha) - \frac{\psi_1}{c^{2} \psi_2}\omega^{2}\overset{p}{ \to} IV.
\end{equation*}
\end{theorem}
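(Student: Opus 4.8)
The plan is to establish the scalar statement $QRV^\ast_N(m,\lambda_j)-\frac{\psi_1}{c^2\psi_2}\omega^2\pn IV$ for each $j=1,\dots,k$; the vector version then follows immediately because $|\alpha|_1=1$, so $\alpha'$ leaves both $IV$ and the deterministic bias untouched. First I would decompose the pre-averaged statistic as $\overline{Y}_j^N=\overline{a}_j^N+\overline{M}_j^N+\overline{u}_j^N$, where $\overline{a}_j^N=\sum_{i=1}^{K-1}h(i/K)\int_{(j+i-1)/N}^{(j+i)/N}a_u\,\mathrm{d}u$, $\overline{M}_j^N=\sum_{i=1}^{K-1}h(i/K)\int_{(j+i-1)/N}^{(j+i)/N}\sigma_u\,\mathrm{d}W_u$, and $\overline{u}_j^N$ is the noise part. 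Since $a$ is locally bounded and $K=O(N^{1/2})$, one checks $N^{1/4}\overline{a}_j^N=O_p(N^{-1/4})$ uniformly in $j$, so the drift drops out; and replacing $\sigma_u$ by its value at the left endpoint of the relevant block of $m(K-1)$ returns introduces an error governed by the c\`{a}dl\`{a}g regularity of $\sigma$ (and, quantitatively, by Burkholder--Davis--Gundy under condition (V)) together with the fact that such a block spans $m(K-1)/N\to0$ units of time. One is thus reduced to the conditionally Gaussian skeleton already recorded in the text: $N^{1/4}\overline{Y}_j^N\mid\mathcal{F}_{j/N}\stackrel{a}{\sim}N(0,\tau^2_{j/N})$ with $\tau_t^2=c\psi_2\sigma_t^2+\tfrac{\psi_1}{c}\omega^2$.

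Next I would upgrade this marginal approximation to a joint one for the $m$ coordinates of $\overline{D}_i^N Y=\{\overline{Y}_{i+(\ell-1)(K-1)}^N\}_{\ell=1}^m$. Consecutive coordinates are built from disjoint blocks of returns and from sets of noise variables overlapping in at most one index, so they are asymptotically independent, giving $N^{1/4}\overline{D}_i^N Y\stackrel{a}{\sim}N(0,\tau^2_{i/N}I_m)$ conditionally. Because the order-statistic maps $g_{\lambda m}$ and $g_{m-\lambda m+1}$ are continuous and positively homogeneous of degree one, $q_i^\ast(m,\lambda_j)$ is, up to the approximation errors above, distributed as $\tau^2_{i/N}\bigl(|U_{(\lambda_j m)}|^2+|U_{(m-\lambda_j m+1)}|^2\bigr)$ for a standard normal $m$-sample $U$; taking expectations and invoking the definition of $\nu_1(m,\lambda_j)$ gives $\E[\,q_i^\ast(m,\lambda_j)/\nu_1(m,\lambda_j)\mid\mathcal{F}_{i/N}\,]\approx\tau^2_{i/N}$, so $\frac1{c\psi_2}\frac{q_i^\ast(m,\lambda_j)}{\nu_1(m,\lambda_j)}$ has conditional mean $\approx\sigma_{i/N}^2+\frac{\psi_1}{c^2\psi_2}\omega^2$.

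Averaging over $i=0,\dots,N-m(K-1)$, the a.s.\ Riemann integrability of the c\`{a}dl\`{a}g path $t\mapsto\sigma_t^2$ together with $m(K-1)/N\to0$ shows that the averaged conditional means converge to $\int_0^1\sigma_u^2\,\mathrm{d}u+\frac{\psi_1}{c^2\psi_2}\omega^2$; subtracting the stated bias term leaves $IV$. It then remains to prove concentration, i.e.\ that $\frac1{N-m(K-1)+1}\sum_i\bigl(q_i^\ast/\nu_1-\E[\,\cdot\mid\mathcal{F}_{i/N}\,]\bigr)\pn0$. Since $\overline{D}_i^N Y$ and $\overline{D}_{i'}^N Y$ share no data once $|i-i'|>m(K-1)$, the summands form an $O(K)$-dependent array, so the variance of the average is $O\!\bigl(\tfrac{m(K-1)}{N}\sup_i\E[(q_i^\ast)^2]\bigr)$; the hypothesis $\E(u_i^4)<\infty$ guarantees $\sup_i\E[(q_i^\ast)^2]<\infty$ (each $N^{1/4}\overline{Y}_j^N$ is a linear form in the $u$'s and the $W$-increments with uniformly bounded fourth moment, and $g_k^2$ has at most quadratic growth), while $m(K-1)/N=O(N^{-1/2})\to0$, so the variance vanishes.

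The main obstacle is the step, used twice above, of converting these heuristic Gaussian approximations into \emph{quantitative} bounds that remain summable over the $\sim N$ overlapping blocks. Concretely one needs $L^p$ control of the volatility-freezing error across a block (from condition (V) and BDG), a Berry--Esseen / fourth-moment bound for the non-Gaussianity of $\overline{u}_j^N$ (again drawing on $\E(u_i^4)<\infty$), and a transfer of both into bounds on $\E\bigl|q_i^\ast(m,\lambda_j)-\tau^2_{i/N}(|U_{(\lambda_j m)}|^2+|U_{(m-\lambda_j m+1)}|^2)\bigr|$ via the Lipschitz-on-bounded-sets property of the order-statistic maps and a uniform-integrability argument. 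This is exactly the ``big-block/small-block'' pre-averaging machinery of \citet*{podolskij-vetter:09a}; the substantive work is adapting it to the order-statistic functionals $q_i^\ast$ in place of bipower-type functionals, with the remainder being bookkeeping.
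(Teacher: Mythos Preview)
Your outline is correct and follows the same skeleton as the paper: reduce to a single quantile, replace $\overline{Y}_j^N$ by the frozen-volatility Gaussian surrogate $\sigma_{i/N}\overline{W}^N+\overline{u}^N$, compute the conditional mean of $q_i^\ast$, pass to a Riemann sum, and control the centered average by the $m(K-1)$-dependence variance bound $O(K/N)$. Two small points of contrast are worth noting. First, the paper does not use Berry--Esseen for the conditional-mean step; it uses a second-order Edgeworth expansion (their Lemma in the proof) and kills the $N^{-1/4}$ term by observing that the Edgeworth density $p_{m,1}$ is odd while $g_{\lambda m}^2+g_{m-\lambda m+1}^2$ is even, yielding the sharper rate $o_p(N^{-1/4})$ that they recycle in the CLT---for consistency your $o_p(1)$ via CLT plus uniform integrability is enough and arguably cleaner. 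Second, the ``big-block/small-block'' machinery you invoke at the end is not needed here: it is used only for the CLT (Theorem~\ref{Thm:QRV*centrallimit}), whereas for consistency the elementary covariance estimate you already gave, together with the $\E(u_i^4)<\infty$ fourth-moment bound on $q_i^\ast$, suffices. Also note that condition~(V) is not assumed in this theorem; the volatility-freezing error is handled as in Theorem~\ref{Thm:QRVconsistency} using only that $\sigma$ is c\`{a}dl\`{a}g and bounded (via Lebesgue dominated convergence), so you may drop the appeal to (V).
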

\begin{proof}
see Appendix \ref{app:proofs} \qedl
\end{proof}
In practice, we can form consistent estimates of $\omega^{2}$, e.g. $\widehat{\omega}^2 = \frac{1}{2N} \sum_{i=1}^N |Y_{i/N} - Y_{(i-1)/N}|^2$ as in \citet*{bandi-russell:06a}, $\widehat{\omega}^2 = - \frac{1}{N - 1} \sum_{i=1}^{N-1} (Y_{(i+1)/N} - Y_{i / N})(Y_{i / N} -
Y_{(i - 1) / N})$ as in \citet*{oomen:06a}, or with the parametric MA(1)-based maximum likelihood estimator of \citet*{ait-sahalia-mykland-zhang:05a}. As a consequence, we have the convergence
\begin{equation*}
QRV^\ast_N (m, \overline{ \lambda}, \alpha) - \frac{\psi_1}{c^{2} \psi_2} \widehat{ \omega}^{2} \overset{p}{ \to} IV.
\end{equation*}
This result is robust to the presence of finite activity jumps. Also note that because $\widehat{ \omega}^{2}$ is a $\sqrt N$-estimator of $\omega^{2}$, it will not influence the CLT of the slower converging QRV$^\ast$.

To prove the CLT, it is useful to introduce some further notation.
\begin{definition} \label{def2}
For $x\in \mathbb{R}$, $u\in [0,1]$, $l=1, \ldots, m$  and $\lambda_1$, $\lambda_2$  we define the quantity
\begin{equation}\label{Eqn:def2function}
f_{m,l,x,u} (\lambda_1, \lambda_2)= \text{\upshape{cov}} \Big(g^2_{\lambda_1 m} (S) + g^2_{m-\lambda_1 m+1} (S), g^2_{\lambda_2 m} (T) + g^2_{m-\lambda_2 m+1} (T) \Big)~,
\end{equation}
where $S=(S_1,\ldots, S_m)^T$, $T=(T_1,\ldots, T_m)^T$ are centered and jointly normal with
\begin{itemize}
\item[(i)] $S_i\bot S_j$, $T_i\bot T_j$ for all $i \not= j$.
\item[(ii)] $\text{\upshape{var}}(S_i)= \text{\upshape{var}}(T_i)=c \psi_2  x^2 + \frac{\psi_1}{c}  \omega^2$  for all $i$.
\item[(iii)] $\text{\upshape{cov}}(S_{i+l-1},T_i)=c w_{h}(u) x^2 + \frac{1}{c} w_{h'} (u)
    \omega^2$ for all $i$.
\item[(iv)] $\text{\upshape{cov}}(S_{i+l},T_i)=c w_{h}(1-u) x^2 + \frac{1}{c} w_{h'} (1-u)
    \omega^2$ for all $i$.
\item[(v)] $\text{\upshape{cov}}(S_i,T_j)=0$ for all $|i+l-j-1|>1$.
\end{itemize}
Here the function $w_h(u)$ is defined by
\begin{equation*}
w_{h} \left( u \right) = \int_{0}^{1 - u} h \left( y \right) h \left( y + u \right) \text{\upshape{d}}y.
\end{equation*}
When $\lambda =\lambda_1=\lambda_2$ we use the notation $f_{m,l,x,u} (\lambda )=f_{m,l,x,u} (\lambda_1, \lambda_2)$.
\end{definition}
Notice that $h'$ (the derivative of $h$) exists almost everywhere, so the quantity $w_{h'}$ makes sense.

\begin{theorem}
\label{Thm:QRV*centrallimit} Assume that $m$ is a fixed number, $\mathbb{E} \left( u_{i}^{8} \right)<\infty$, the marginal distribution $\mathbb{Q}$ of $u$ is symmetric around $0$ and that condition (V) is satisfied. As $N \rightarrow \infty$
\begin{equation*}
N^{1 / 4} \left( QRV^{ \ast}_{N} (m,\overline{\lambda},\alpha)- \frac{\psi_1}{c^{2} \psi_2} \widehat{\omega}^2 - IV \right) \overset{d_{s}}{ \to} MN \biggl(0, \frac{2}{c\psi_2^2} \Sigma_m (\lambda_1, \ldots, \lambda_k) \biggr),
\end{equation*}
where $\Sigma_m(\lambda_1, \ldots, \lambda_k)=(\Sigma_m(\lambda_1, \ldots, \lambda_k)_{sl})_{1\leq s,l\leq k}$ is given by
\begin{equation*}
\Sigma_m(\lambda_1, \ldots, \lambda_k)_{sl}= \frac{1}{ \nu_{1,m}(\lambda_s) \nu_{1,m}(\lambda_l) } \sum_{l=1}^m \int_0^1 \int_0^1 f_{m,l,\sigma_t,u} (\lambda_s, \lambda_l) \text{\upshape{d}}t \text{\upshape{d}}u.
\end{equation*}
Furthermore, this convergence is robust to the presence of finite activity jumps.
\end{theorem}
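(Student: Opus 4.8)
The plan is to prove the theorem in four stages: strip out the jumps and the estimated nuisance $\widehat\omega^2$, localise and freeze the volatility, reduce $QRV^\ast_N$ to a functional of an idealised triangular array of jointly Gaussian vectors whose covariance structure is exactly the one catalogued in Definition \ref{def2}, and finally apply a stable central limit theorem to that array.

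\textbf{Step 1 (jumps and $\widehat\omega^2$).} With $m$ fixed and $K$ of order $N^{1/2}$ by Eq. \eqref{Eqn:km}, a block $\overline D_i^NY$ is built from $m(K-1)\sim mcN^{1/2}$ consecutive increments, hence spans a time interval of length $O(N^{-1/2})\to0$, and the pre-averaging windows of the $m$ elements of a block are contiguous, non-overlapping stretches of original increments. So for $N$ large each block contains at most one jump time, contaminating at most one block element; after the $N^{1/4}$ scaling the jump contribution to that element is of order $N^{1/4}h(\cdot)J\to\infty$, so it is eventually the extreme value and is discarded because $\lambda<1$ — this is the mechanism of Proposition \ref{prop:QRVjumprobust}. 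Since there are finitely many jumps, each affecting $q_i^\ast/\nu_1$ non-negligibly over only $O(m(K-1))=O(N^{1/2})$ values of the starting index $i$ with a uniformly bounded effect, their total contribution to $QRV^\ast_N$ is $O_p(N^{1/2}/N)=o_p(N^{-1/4})$. Likewise, since $\widehat\omega^2$ is a $\sqrt N$-consistent estimator, $N^{1/4}(\widehat\omega^2-\omega^2)\overset{p}{\to}0$, so we may take $\omega^2$ known and $X$ to be the pure Brownian semimartingale of Eq. \eqref{Eqn:X}.

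\textbf{Step 2 (localisation and Gaussian approximation).} A standard localisation lets us assume $a,\sigma,\sigma',v'$ bounded and $\sigma$ bounded away from $0$. Writing $\overline Y_j^N=\overline X_j^N+\overline u_j^N$, the drift part of $\overline X_j^N$ is $O_p(N^{-1/2})$, hence $o_p(1)$ after scaling; and, using condition (V), replacing $\sigma_s$ by $\sigma_{i/N}$ throughout the block changes each rescaled pre-averaged return by $O_p(N^{-1/4})$, i.e. by a lower-order amount. Because the order-statistic map $g_k$ is $1$-Lipschitz for the sup-norm and the rescaled pre-averaged returns have bounded moments of all orders, these errors propagate through $q_i^\ast(m,\lambda)$ and, summed over the $O(N)$ overlapping blocks (which are correlated only when within $m(K-1)$ of each other), contribute $o_p(N^{-1/4})$ to $QRV^\ast_N$. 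This reduces the problem to the array in which $N^{1/4}\overline Y^N$ on block $i$ is $\sigma_{i/N}$ times a centred jointly Gaussian vector: its within-block components are asymptotically independent (consecutive elements share a single noise variable, so their correlation is $O(N^{-1/2})$), while two blocks offset by $r=(l-1)(K-1)+b$ increments, $u=b/(K-1)$, $l\in\{1,\dots,m\}$, have cross-covariances $c\,w_h(u)\sigma^2+c^{-1}w_{h'}(u)\omega^2$ and $c\,w_h(1-u)\sigma^2+c^{-1}w_{h'}(1-u)\omega^2$, i.e. exactly properties (i)–(v) of Definition \ref{def2}, the continuous part picking up $w_h$ and the (telescoped) noise part picking up $w_{h'}$.

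\textbf{Step 3 (variance and stable CLT).} Up to $o_p(1)$, $N^{1/4}(QRV^\ast_N-\psi_1\omega^2/(c^2\psi_2)-IV)$ equals $\frac{N^{1/4}}{c\psi_2 N}\sum_i(\alpha'\zeta_i-\E[\alpha'\zeta_i\mid\mathcal F^0])$, where $\zeta_i$ has $j$th entry $q_i^\ast(m,\lambda_j)/\nu_1(m,\lambda_j)$; the definitions of $\nu_1$ and of the bias correction make the centering equal to $IV+\psi_1\omega^2/(c^2\psi_2)$ in the limit, as in Theorem \ref{Thm:QRV*consistency}. Since $\mathrm{cov}(\zeta_i,\zeta_{i'}\mid\mathcal F^0)=0$ for $|i-i'|>m(K-1)$, writing $i'-i=\pm((l-1)(K-1)+b)$ and summing over $b$ (a Riemann sum with weight $K-1\sim cN^{1/2}$), over $l=1,\dots,m$, over the two signs, and over $i$ (a Riemann sum $\sum_ig(\sigma_{i/N})\sim N\int_0^1g(\sigma_t)\,dt$), the conditional variance converges to $\frac{2}{c\psi_2^2}\,\alpha'\Sigma_m(\lambda_1,\dots,\lambda_k)\,\alpha$ with $\Sigma_m$ as stated, the diagonal $i=i'$ terms being of lower order. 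For the limit law, decompose $\sum_i$ into residue classes modulo $K-1$: each class is a sum over blocks shifted by one full slot, hence has fixed dependence range $m$, so an $m$-dependent/martingale CLT applies to it, while the $K-1$ residue classes are linked through the $u$-continuum of Definition \ref{def2}; a stable CLT for the resulting doubly-indexed array — in the spirit of the pre-averaging CLTs of \citet*{podolskij-vetter:09a} and \citet*{jacod-li-mykland-podolskij-vetter:09a} — then yields the mixed Gaussian limit, stability following because the Gaussian innovations driving the array are independent of $\mathcal F^0$, so the conditional covariations of the summands with $W$ and with any bounded martingale orthogonal to $W$ vanish asymptotically.

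\textbf{Main obstacle.} The chief difficulty is that $q_i^\ast$ is built from order statistics rather than smooth functions, so the Taylor/delta-method arguments standard for pre-averaging CLTs are unavailable; one must instead lean on the Lipschitz property of $g_k$ together with sharp moment control to carry every approximation error ($\sigma$-freezing, drift removal, discretisation of $w_h,w_{h'}$, the Gaussian coupling) through $q_i^\ast$ and confirm it is $o_p(N^{-1/4})$ after summing over the $O(N)$ strongly overlapping blocks. The second, bookkeeping-heavy obstacle is pinning down Definition \ref{def2} exactly — the $w_h$ versus $w_{h'}$ split and the behaviour of the cross-covariances as the offset crosses multiples of $K-1$ — and then verifying the conditional-variance, Lindeberg/Lyapunov, and orthogonality conditions of the stable CLT for the phase-indexed array.
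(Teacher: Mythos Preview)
Your outline captures the broad architecture --- localise, freeze the volatility, identify the covariance structure of Definition \ref{def2}, and apply a stable CLT --- and Step 1 (jumps, $\widehat\omega^2$) is fine. But there are two genuine gaps in Step 2 that the Lipschitz property of $g_k$ alone cannot close, and both of them are exactly where the paper's proof has to do real work.

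First, the reduction to a Gaussian array is not free at the $N^{-1/4}$ scale. You write that after freezing $\sigma$ the rescaled block ``is $\sigma_{i/N}$ times a centred jointly Gaussian vector'', but $N^{1/4}\overline u_j^N$ is a normalised sum of non-Gaussian noise variables. The conditional mean $\E[q_i^\ast(m,\lambda)\mid\mathcal F^0]$ therefore differs from $\nu_1(m,\lambda)\bigl(c\psi_2\sigma_{i/N}^2+c^{-1}\psi_1\omega^2\bigr)$ by a term of \emph{exact} order $N^{-1/4}$, which, summed over $N$ blocks and multiplied by $N^{1/4}$, is $O_p(1)$ --- not $o_p(1)$. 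The paper handles this with an Edgeworth expansion (Lemma \ref{noisemom}): the second-order Edgeworth density $p_{m,1}$ is odd, the functional $g_{\lambda m}^2+g_{m-\lambda m+1}^2$ is even, so the $N^{-1/4}$ correction integrates to zero. Invoking Theorem \ref{Thm:QRV*consistency} is not enough here, because that result only gives $o_p(1)$ for the centring, not $o_p(N^{-1/4})$.

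Second, your Lipschitz bound on the $\sigma$-freezing error controls its variance but not its conditional bias. Granted, the per-block error is $O_p(N^{-1/4})$ and the $m(K-1)$-dependence makes the variance of the normalised sum $O(N^{-1})$, hence $o_p(N^{-1/4})$ after scaling. But that argument applies only to the centred part. The conditional mean of the freezing error is not negligible a priori; mirroring Part III of the proof of Theorem \ref{Thm:QRVcentrallimit}, the paper splits the error into a genuinely small piece and a leading piece whose conditional expectation vanishes because the relevant functional is odd in $(W,B',u)$ and $(W,B',u)\stackrel d=-(W,B',u)$. This is precisely where the assumed \emph{symmetry of the noise law} $\mathbb Q$ enters --- you never invoke it, and you provide no substitute for the analogue of identity \eqref{nullidentity}. (The same symmetry is also what makes $\E[\zeta_j\Delta W\mid\mathcal F]=0$ for the stability condition; your remark that ``the Gaussian innovations driving the array are independent of $\mathcal F^0$'' is incorrect, since the $\overline W$-part is $\mathcal F^0$-measurable.)

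Finally, on the CLT mechanism: your residue-class decomposition modulo $K-1$ is nonstandard, and you do not explain how to aggregate the $K-1$ mutually dependent classes into a single stable limit. The paper instead uses the ``big blocks \& small blocks'' device --- big blocks of length $pm(K-1)$ separated by small blocks of length $m(K-1)$ --- so that the big blocks are conditionally independent and Theorem IX.7.28 of \citet*{jacod-shiryaev:03a} applies for each fixed $p$, after which one lets $p\to\infty$. This sidesteps the problem of running a CLT across a growing number of dependent phases.
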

\begin{proof}
see Appendix \ref{app:proofs} \qedl
\end{proof}
A couple of points are worth highlighting. First, the rate of convergence in Theorem \ref{Thm:QRV*centrallimit} is $N^{-1/4}$, which is known to be optimal in the noisy diffusion model \citep*{gloter-jacod:01a,gloter-jacod:01b}. Second, \cite{jacod-li-mykland-podolskij-vetter:09a} show that when the IV is estimated using a ``sum-of-squares'' estimator based on filtered data:
\begin{equation*}
\frac{1}{c \psi_2 (N - K)} \sum_{j = 1}^{N - K} | \overline{Y}_{j}^{ N} |^{2},
\end{equation*}
the lowest attainable variance for the choice $h(x) = x \wedge (1 - x)$ is roughly $8.5 \sigma^{3} \omega$ assuming $\sigma$ is constant (the variance of the ML estimator is $8 \sigma^{3} \omega$). Consequently, the lower bound for the variance of QRV$^\ast$ is also $8.5\sigma^{3} \omega$ (note that for a suitable choice of parameters the realised kernel of \cite{barndorff-nielsen-hansen-lunde-shephard:08a} can attain a variance of $8.002 \sigma^{3} \omega$). Finally, even though there is no explicit expression for the conditional variance in the CLT, it is nonetheless possible to estimate it from the data.
\begin{proposition} \label{prop:QRV*feasible}
Assume that $m$ is fixed and $\mathbb{E} \left( u_{i}^{8} \right)<\infty $. As $N \to \infty $, it holds that
\begin{eqnarray*}
\frac{1}{c \psi_2^2 (K-1) (N-3m(K-1)+3)} \sum_{i = m(K-1)-1}^{N - 2m(K-1) + 1}\frac{q^\ast_i(m,\lambda_s)}{\nu_1(m,\lambda_s)}
\sum_{j=i-m(K-1)+1}^{i+m(K-1)-1} \frac{q^\ast_j(m,\lambda_l) - q^\ast_{i+m(K-1)}(m,\lambda_l)}{\nu_1(m,\lambda_l)} \nonumber
\\[1.5 ex] \overset{p}{ \to} \frac{2}{c\psi_2^2} \Sigma_m(\lambda_1, \ldots, \lambda_k)_{sl}.
\end{eqnarray*}
\end{proposition}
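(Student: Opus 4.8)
The left-hand side is a ``realised variance of realised quantiles'': each standardised quantile $q^\ast_i(m,\lambda_s)/\nu_1(m,\lambda_s)$ is multiplied by the sum over the neighbouring indices $j$, $|j-i|\leq m(K-1)-1$, of $q^\ast_j(m,\lambda_l)/\nu_1(m,\lambda_l)$, from each of which a far-away copy $q^\ast_{i+m(K-1)}(m,\lambda_l)/\nu_1(m,\lambda_l)$ has been subtracted. The subtracted term is a debiasing device: since block $i$ and block $i+m(K-1)$ are built from disjoint sets of raw increments (block $a$ uses $\Delta^N_k Y$ with $k\in\{a+1,\dots,a+m(K-1)\}$), the statistic $q^\ast_{i+m(K-1)}(m,\lambda_l)$ is conditionally independent of $q^\ast_i(m,\lambda_s)$, while --- because $\sigma$ varies only on the scale $m(K-1)/N=O(N^{-1/2})$ --- it has the same conditional mean as the $q^\ast_j(m,\lambda_l)$ entering the inner sum. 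The plan is therefore to show (a) that, conditionally on the volatility path, this cancellation leaves exactly a Riemann sum of the covariances $f_{m,l,\sigma_t,u}(\lambda_s,\lambda_l)$ of Definition \ref{def2}, and (b) that the fluctuations of the estimator around its conditional mean vanish. As always, one begins by localising: we may assume $a$, $\sigma$, $1/\sigma$ and the coefficients in condition (V) are bounded, and by the usual arguments drift and leverage do not contribute at the relevant order; moreover, since each block spans a time interval of length $m(K-1)/N\to0$, in the limit it contains at most one finite-activity jump, which is discarded together with the extreme order statistics because $\lambda_j<1$, so the convergence is unaffected by jumps and we may work with the bounded continuous model.

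For the conditional mean I would replace $\sigma$ by a piecewise-constant process on a mesh that is coarse relative to $K/N$ but refines to $0$, and use the approximation $N^{1/4}\overline{Y}_j^N\,|\,\mathcal{F}\stackrel{a}{\sim}N\bigl(0,c\psi_2\sigma^2+\tfrac{\psi_1}{c}\omega^2\bigr)$ established in the proof of Theorem \ref{Thm:QRV*centrallimit} (along the lines of \cite{jacod-li-mykland-podolskij-vetter:09a}). Within one block the $m$ pre-averaged returns use disjoint increment ranges, hence are conditionally asymptotically i.i.d., so $\E[q^\ast_i(m,\lambda)\,|\,\mathcal{F}]=\bigl(c\psi_2\sigma_{i/N}^2+\tfrac{\psi_1}{c}\omega^2\bigr)\nu_1(m,\lambda)+o_p(1)$; for two blocks with raw-index offset $r=j-i=(l-1)(K-1)+s$, $l\in\{1,\dots,m\}$, $s\in\{0,\dots,K-2\}$, the overlapping pre-averaged returns reproduce the joint Gaussian covariance pattern (i)--(v) of Definition \ref{def2} with $x=\sigma_{i/N}$ and fractional overlap $u=s/(K-1)$ feeding into $w_h$ and $w_{h'}$. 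Consequently
\begin{equation*}
\E\!\left[\frac{q^\ast_i(m,\lambda_s)}{\nu_1(m,\lambda_s)}\cdot\frac{q^\ast_j(m,\lambda_l)-q^\ast_{i+m(K-1)}(m,\lambda_l)}{\nu_1(m,\lambda_l)}\,\Big|\,\mathcal{F}\right]=\frac{f_{m,l,\sigma_{i/N},\,s/(K-1)}(\lambda_s,\lambda_l)}{\nu_1(m,\lambda_s)\,\nu_1(m,\lambda_l)}+o_p(1),
\end{equation*}
the product-of-means having cancelled. Summing over $j$ (both signs of $r$, which by symmetry of the covariance structure doubles the one-sided contribution and produces the factor $2$), then over $i$, and inserting the normalisation $1/[c\psi_2^2(K-1)(N-3m(K-1)+3)]$: the inner sum over $s$ divided by $K-1$ converges to $\int_0^1 f_{m,l,\sigma_{i/N},u}(\lambda_s,\lambda_l)\,\text{d}u$ by continuity of $w_h$, $w_{h'}$ and of $f$ in $u$; the outer sum over $i$ divided by $N-3m(K-1)+3\sim N$ converges to $\int_0^1 f_{m,l,\sigma_t,u}(\lambda_s,\lambda_l)\,\text{d}t$ by right-continuity of $\sigma$ and bounded convergence; and the sum over the shift index $l=1,\dots,m$ together with $2/(c\psi_2^2)$ and the remaining $\nu_1$'s reproduces $\tfrac{2}{c\psi_2^2}\Sigma_m(\lambda_1,\dots,\lambda_k)_{sl}$.

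It remains to show concentration, i.e. that the conditional variance of the estimator tends to $0$. Writing $A_i$ for the normalised inner expression attached to index $i$, one checks that $A_i$ depends only on raw increments with indices in a window of length $O(m(K-1))$ around $i$, so $A_i$ and $A_{i'}$ are conditionally uncorrelated once $|i-i'|$ exceeds a fixed multiple of $m(K-1)$. Using $\E(u_i^8)<\infty$ one bounds the fourth conditional moments of the $q^\ast_\bullet$ uniformly, which gives $\var(A_i\,|\,\mathcal{F})=O(K^2)$; since there are $O(N)$ outer terms and only $O(K)$-neighbourhoods interact, the conditional variance of the normalised double sum is $O\!\bigl(\tfrac{1}{(K-1)^2N^2}\cdot NK\cdot K^2\bigr)=O(K/N)=O(N^{-1/2})\to0$. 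Combined with the previous step (and a Markov/subsequence argument to upgrade $L^1$-control to convergence in probability), and noting that the conditional and unconditional Riemann limits agree by bounded convergence together with the localisation, this yields the claim.

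The main obstacle is the input to steps (a)--(b), namely the \emph{uniform} Gaussian approximation: one needs the joint law of a finite block of nearby standardised pre-averaged returns to be close enough to the Gaussian vector $(S,T)$ of Definition \ref{def2} that the covariance of the \emph{nonlinear} order-statistic functionals $g^2_{\cdot}$ converges to $f_{m,l,x,u}$, and with enough uniformity in $i$ and enough moment control to survive the double summation. This is precisely the type of estimate developed for pre-averaged statistics in \cite{jacod-li-mykland-podolskij-vetter:09a} and \cite{podolskij-vetter:09a}, and I would import their machinery --- conditional Gaussian comparison, control of the drift and of the increments of $\sigma$ over windows of length $O(K/N)$, and the moment bounds afforded by $\E(u_i^8)<\infty$ --- with the remaining bookkeeping of boundary terms at the ends of the sums being routine.
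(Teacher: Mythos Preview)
Your proposal is correct and follows essentially the same route as the paper. The paper's own proof is extremely terse: it first reduces to the diagonal case $\lambda_s=\lambda_l$ by polarisation, then replaces $q^\ast_i$ by the frozen-volatility approximations $w_i^{*(n,m)}$ (invoking the same estimates as in Theorem~\ref{Thm:QRV*consistency}), states that a ``straightforward (but somewhat tedious) calculation'' gives the conditional-mean limit, and finally says the centred remainder vanishes ``as in \eqref{pr}''. Your write-up simply unpacks these steps --- the Gaussian covariance identification via Definition~\ref{def2}, the Riemann sums in $u$ and $t$, and the $O(K/N)$ variance bound from the $m(K-1)$-dependence --- so there is no substantive difference in strategy.
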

\begin{proof}
see Appendix \ref{app:proofs}\qedl
\end{proof}
Using the estimator from Proposition \ref{prop:QRV*feasible}, we obtain a feasible CLT for QRV$^\ast$ in the exact same manner as discussed in Section \ref{sec:QRV} for QRV.

\subsection{Finite sample performance and noise robustness} \label{sec:QRV*simulations}

The simulations below are designed to illustrate the performance of QRV$^\ast$ in the presence of market microstructure noise, comment on reasonable choice of the pre-averaging window width $K$, and make a comparison to alternative estimators. For ease of exposition, we mean QRV$^\ast$ to include the bias correction term $ \psi_1\widehat{\omega}^2/(\psi_2c^2)$ throughout the remainder of this paper. To simulate the ``efficient'' price process, we use the BM model as in Eqs. (\ref{Eqn:Xsim} -- \ref{Eqn:BM}) and add i.i.d. noise as in Eq. \eqref{Eqn:Y}. To ensure our simulation setup is realistic, we base our choice of parameters on a comprehensive set of summary statistics of global equity trade data as reported in Appendix \ref{app:noisestats}. We set the number of high-frequency return observations $N = \{1,000;10,000\}$ representing typical small-to-mid and large-cap stocks. The level of microstructure noise is set to $\omega^2 = \gamma^2 IV / N$, where $\gamma^2=\{0.25;2.50;10\}$. From Table \ref{Table:noisestats}, we see that this covers average, high, and extreme levels of noise. Note that the noise is normalized with respect to the IV, and that the so-called noise ratio $\gamma$ \citep*[see][]{oomen:06a} has a natural interpretation in relation to the bias of the RV, as $\mathbb{E} (RV) = IV (1 + 2 \gamma^2)$. To implement QRV$^\ast$, we use the quantiles as before, i.e. $\overline{\lambda} = \{0.80;0.85;0.90;0.95\}$, set $m = 40$, estimate $\omega^2$ as in \citet*{oomen:06a}, and vary $K$ between 1 and 25. To provide a benchmark for our results, we compute the multi-scale RV (MSRV) of \cite{zhang:06a}:
\begin{equation}
\label{Eqn:MSRV} MSRV_N(q) = \sum_{j = 1}^{q} \frac{a_j}{j} \sum_{h = 0}^{j - 1}\gamma_{h,j}(0),
\end{equation}
where $q$ denotes the number of subsamples and
\begin{equation*}
\gamma_{h, q} = \sum_{i = 1}^{N} (Y_{iq+h}-Y_{(i-1)q+h})^2,\quad\text{and}\quad a_{j} = (1-1/q^2)^{-1}\left(\frac{j}{q^2}h(j/q)-\frac{j}{2q^3}h^\prime(j/q)\right),
\end{equation*}
for $h(x) =12(x-1/2)$. In the simulations, we use the optimal number of subsamples, which can be chosen as $q_Z^\ast = c^\ast \sqrt{N}$, where
\begin{equation}
\label{Eqn:cstarMSRV} c^\ast = \arg \min_{c} \left\{ 2\frac{52}{35}c IQ+\frac{48}{5}c^{-1}\omega^2(IV+\omega^2/2)+48 c^{-3}\omega^4 \right\}.
\end{equation}
Both MSRV and QRV$^\ast$ are consistent for the IV and converge at rate $N^{- 1 / 4}$, but the MSRV is not robust to jumps or outliers. The same is true for the two-scale RV of \cite{zhang-mykland-ait-sahalia:05a} and the realised kernel of \cite{barndorff-nielsen-hansen-lunde-shephard:08a}. Because the performance of these estimators is very similar in the current setting, we concentrate on MSRV to conserve space.

\begin{figure}[t!]
\begin{center}
\caption{Performance of QRV$^\ast$ in the presence of noise.}
\label{Figure:QRV*}
\begin{tabular}{cc}
\footnotesize{Panel A: log MSE of QRV$^\ast$ with $N = 1,000$} &
\footnotesize{Panel B: log MSE of QRV$^\ast$ with $N = 10,000$} \\
\includegraphics[height=8cm,width=0.45\textwidth]{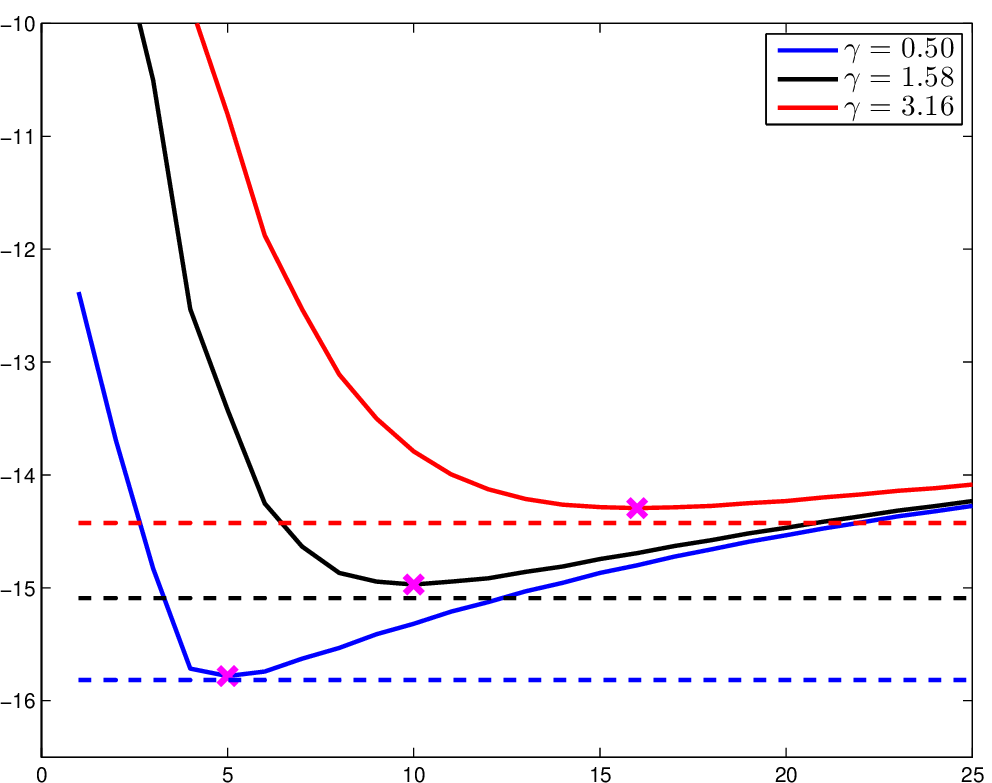} &
\includegraphics[height=8cm,width=0.45\textwidth]{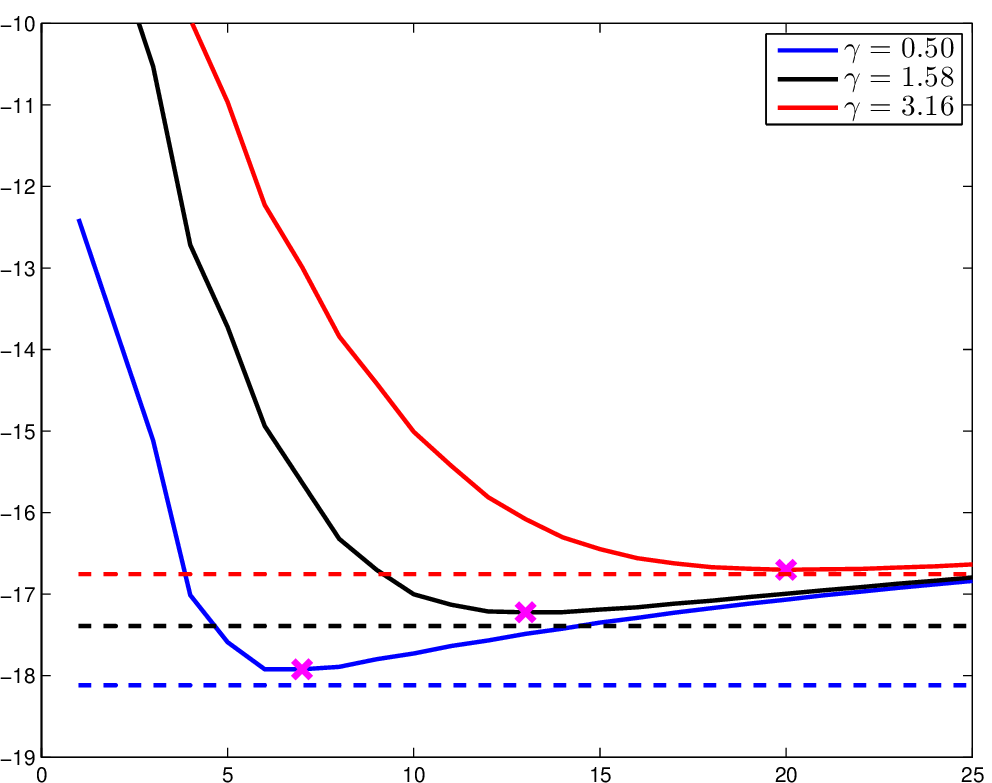} \\
\end{tabular}
\begin{footnotesize}
\parbox{0.98\textwidth}{\emph{Note}. This figure plots the (log) MSE of the bias-corrected QRV$^\ast$ for sample size $N=1,000$ (Panel A) and $N = 10,000$ (Panel B) and various levels of microstructure noise $\gamma$. The crosses indicate the minimum MSE, and thus identify the optimal choice of $K$. The dashed horizonal lines indicate the log MSE of MSRV using optimal number of subsamples.}
\end{footnotesize}
\end{center}
\end{figure}

Figure \ref{Figure:QRV*} plots the log MSE of QRV$^\ast$ as a function of $K$ for the simulations described above. The dashed horizontal lines indicate the performance of MSRV. As expected, the MSE minimizing choice of $K$ increases in $\gamma$ and $N$: the optimal choice of pre-averaging window width balances the noise reduction it achieves at the cost of efficiency loss. Perhaps most importantly, we see that the performance of QRV$^\ast$ is comparable to that of MSRV across all scenarios considered. While QRV$^\ast$ is only slightly inferior to MSRV in terms of efficiency, it comes of course with the benefit of being robust to jumps and outliers. The empirical application below will further illustrate this point. Regarding the optimal choice of $K$ we can make the following observations. Although we have no explicit asymptotic guidance available, it is clear that the choice of window width can be informed by simulations as conducted here. In particular, for given $N$ and noise ratio $\gamma$ -- the first quantity is readily available and the second can be estimated straightforwardly from the data -- the optimal value for $K$ can be read off a simulated MSE curve like the one in Figure \ref{Figure:QRV*}. Also note that the MSE loss function is highly asymmetric in $K$, and so a conservative choice of pre-averaging window is generally preferred. This also helps to reduce the effects of price discreteness often encountered in high-frequency data (see Section \ref{sec:QRVempirical}) and makes the estimator less sensitive to potential violations of the i.i.d. noise assumption.

\section{Empirical illustration}\label{sec:QRVempirical}

In this section, we apply the QRV estimator to a variety of equity data. The aim here is to illustrate the practical implementation of QRV and highlight some of its empirical properties. We use clean low-frequency data over long horizons as well as noisy high-frequency data over short horizons and find that in both cases the performance of the QRV is good compared to RV and its microstructure noise robust counterparts. To facilitate the discussion and interpretation of our results, we express all estimates as annualized standard deviations throughout this section.

\subsection{QRV with ``clean'' low-frequency data}

The use of QRV, like any other RV measure, is not merely limited to high-frequency data over short horizons, but can also be applied to low-frequency data over longer horizons. In the latter case the impact of market microstructure noise is benign and can be ignored for all practical purposes. See, for instance, \citet*{schwert:89a} who calculates monthly RVs using daily data or, more recently, \citet*{andersen-bollerslev-diebold-wu:06a} who study quarterly RVs and realised betas calculated from daily data.

\begin{figure}[t!]
\begin{center}
\caption{QRV with daily Dow Jones Industrial Average index data.}
\label{Figure:DJ30}
\begin{tabular}{cc}
\footnotesize{Panel A: QRV and RV against time} &
\footnotesize{Panel B: QRV (vert) vs RV (horz)} \\
\includegraphics[height=8cm,width=0.45\textwidth]{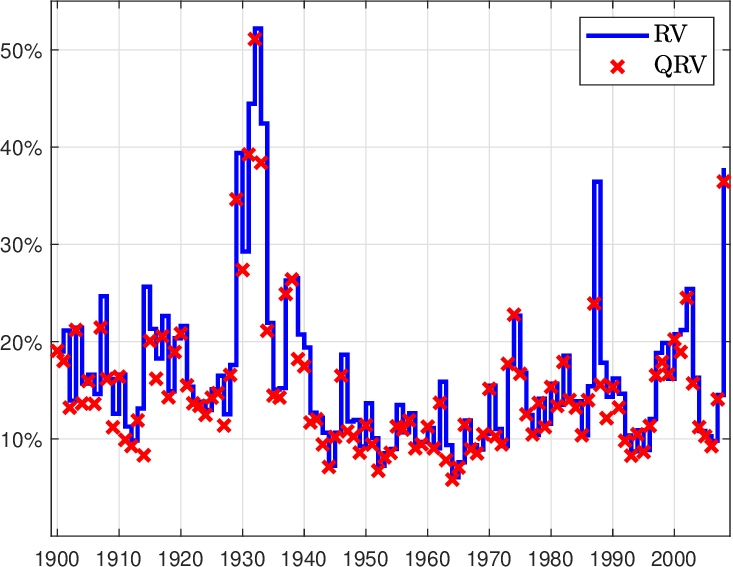} &
\includegraphics[height=8cm,width=0.45\textwidth]{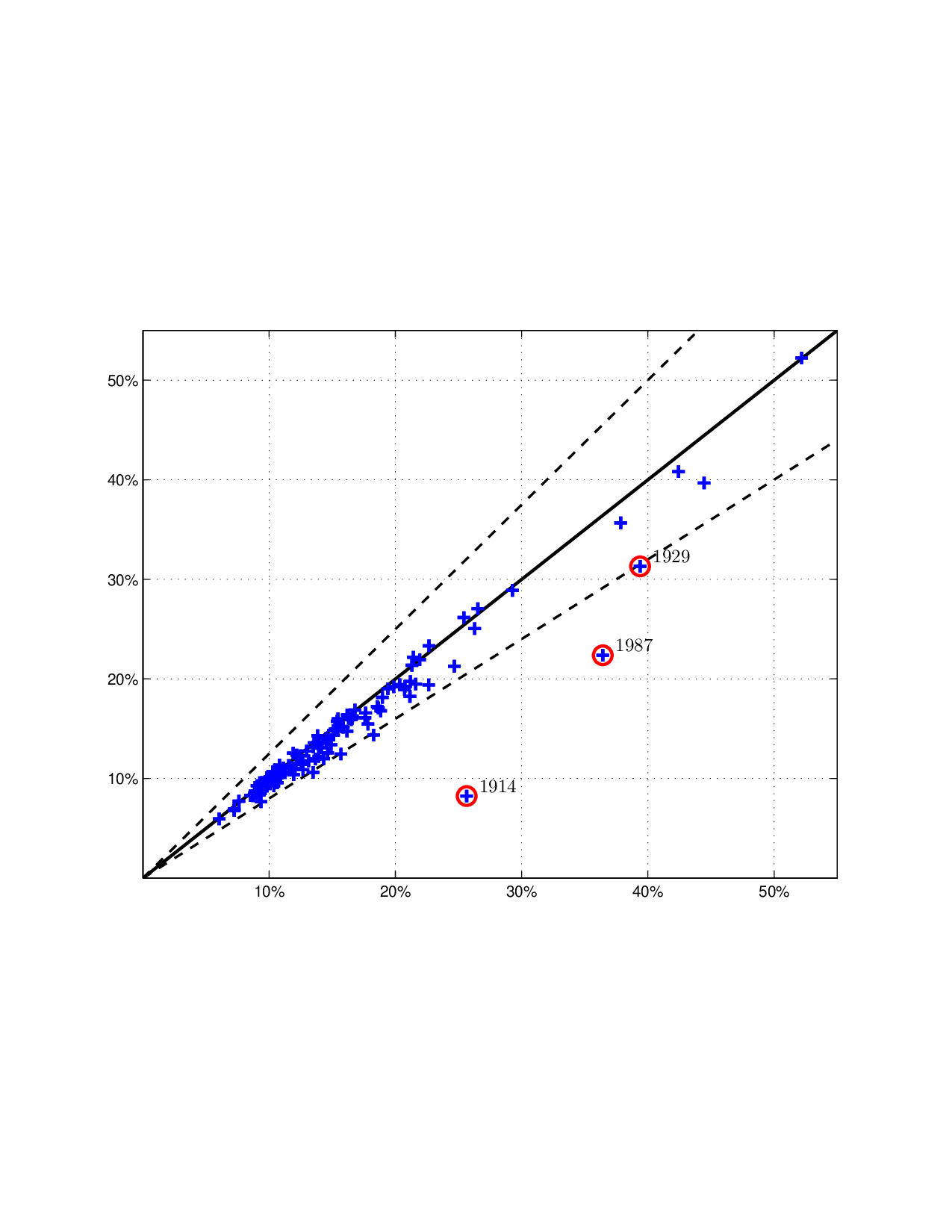}
\end{tabular}
\begin{footnotesize}
\parbox{0.98\textwidth}{\emph{Note}. Year-by-year QRV and RV estimates from daily DJIA index data over the period January 2, 1900 through December 30, 2008. QRV is calculated using $\overline{ \lambda} = \{0.80;0.85;0.90;0.95 \}$ and $m = 60$. The estimates are reported as annualized standard deviation. The dashed lines in Panel B mark the region where RV and QRV differ by more than 25\%.}
\end{footnotesize}
\end{center}
\end{figure}

As our first illustration, we look at daily data for the Dow Jones Industrial Average (DJIA) stock index over the sample period January 1900 through December 2008, i.e. 27,306 daily observations spanning more than a century.\footnote{Source: Dow Jones Indexes, http://www.djindexes.com/} For each year in the sample, we estimate the ex-post return variation using QRV and RV from these daily data (i.e. $N\simeq 250$). To implement QRV, we use $\overline{ \lambda} = \{0.80, 0.85, 0.90, 0.95 \}$ with $m = 60$ and use the subsampling implementation. As volatility is widely documented to be very persistent, a quarterly block length provides sufficient locality. At the same time, with $(1-\lambda_{\max})m - 1 = 2$, we are robust to up to four jumps or two outliers per quarter. We found, however, that our results were insensitive to reasonable alternative choices of quantiles and block length. Also, the subsampled and blocked implementation of QRV yield very similar estimates.

In Figure \ref{Figure:DJ30}, we plot the time series of variance estimates in Panel A and a cross plot of RV (on the horizontal axis) versus QRV (on the vertical axis) in Panel B. We can see that the instances where QRV deviates substantially from RV all correspond to years with extreme market movements. For example, in 1914, the DJIA closed on July 30, 1914 at 71.42 and reopened more than 4 months later on December 14, 1914 at 56.76, reflecting a 20\% drop in value. In 1929, the start of the great depressions, the DJIA index fell 13.5\% on October 28, another 11.7\% the next day, only to rebound by 12.3\% on October 30. Similarly, in 1987 the stock market crashed again, experiencing a daily return of $-22.6\%$ on October 19 which, even with the RV estimate of 38\% for that year, constitutes a nine-standard deviation event. All this illustrates the robustness of QRV to jumps. In the remaining years, the QRV estimates are close to those of RV (with a sample correlation exceeding 0.99) indicating good efficiency and absence of any systematic biases. As an aside, note that BPV and MedRV are not robust to the jump scenario experienced in 1929 with three consecutive large returns.

Over the full 108 year sample period, we calculate an average annualized volatility estimate of $18.05\%$ for RV and $16.75\%$ for QRV, which suggests that roughly $14\%$ of total variation can be attributed to jumps. If we leave out the three years discussed above, these figures drop to $17.4\%$ and $16.5\%$, respectively, indicating that about $9\%$ of total variation is due to jumps. Interestingly, using BPV \citet*{andersen-bollerslev-diebold:07a} estimate the jump contribution to total variation at 14.4\% using 5 minute S\&P500 futures data over the period $1990 - 2002$, \citet*{huang-tauchen:05a} estimate the contribution at 7\% for 5 minute S\&P500 cash data from $1997-2002$ and 4.5\% for 5 minute S\&P500 futures data from $1982 - 2002$, while \citet*{corsi-reno:12a} estimate the contribution at around 10\% using 5 minute S\&P500 futures data from $1990 - 2007$. In a related study, \citet*{eraker-johannes-polson:03a} estimate continuous time jump diffusion models using daily S\&P500 index returns over the period $1980 - 1999$ and measure the jump contribution between 8.2\% and 14.7\% depending on the model considered. While a direct comparison of these results is difficult (as different data, sampling frequency, horizon,
and econometric techniques are used), it does illustrate that our estimates are plausible and in line with the extant literature.

\subsection{QRV with ``noisy'' high-frequency data}

Our second illustration is based on noisy high-frequency data with volatility computed over short daily horizons. We study Apple Inc. (AAPL) trade data over the period May 1, 2006 through December 30, 2008, which were extracted from the NYSE TAQ database. We only include trades from the primary exchange (i.e. NASDAQ) and aggregate records with the same millisecond precision time-stamp into one observation using the volume-weighted average trade price. To illustrate the robustness of QRV to jumps and outliers, we do \emph{not} filter the data on qualifiers. The final dataset contains a record of 35,419,565 observations over 672 trading days, an average of about 52,708 trades per day.

Despite the deep liquidity of AAPL, its trade data are inherently noisy due to presence of bid-ask spread bounce. This is confirmed in Panel A of Figure \ref{Figure:preavg}, where we find substantial autocorrelation in returns. Also, with trade data at this frequency, price discreteness is a concern: from Panel B we see that the vast majority of return observation are either zero or plus/minus one tick, and virtually all observations are less than 8 ticks in magnitude. For these reasons, it is clearly inappropriate to apply the standard QRV. Instead, we use our noise-robust QRV$^\ast$. To implement this estimator we use, as before, four pairs of quantiles $\overline{ \lambda} = \{0.80,0.85,0.90,0.95\}$, set $K = 15$, and $m=240$. This choice of parameters ensures substantial robustness to jumps and outliers with $(1-\lambda_{\max})m - 1 = 11$. Moreover, with an effective block length of about 30 minutes for an average day (i.e. $15\times240/52708\times 390$), the estimator is sufficiently ``local'' for it to pick up time variation in volatility. Also, because the data is sampled in event time, return volatility is homogenized to a large extent and this makes the results very insensitive to the choice of $m$. Finally, the choice of $K$ is guided by simulations as in Section \ref{sec:QRV*simulations}: unreported results show that the optimal $K$ is around 10 for representative values of AAPL sample size and noise level (i.e. $N\approx50,000$ and $\gamma \approx 0.5$). As mentioned above, a conservative choice of $K$ is advised to account for price discreteness and other features of the data not captured by the BM plus i.i.d. noise model. Besides, given the large amount of data available here, efficiency is less of a concern. For these reasons we set $K=15$, but it turns out that virtually identical results are obtained with $K=10$. The effect of pre-averaging is nicely illustrated in Panels C and D of Figure \ref{Figure:preavg}. The pronounced serial correlation observed in raw returns is virtually eliminated for the pre-averaged data. At the same time, price discreteness is heavily reduced and the return distribution is now much closer to Gaussian.

\begin{figure}[t!]
\begin{center}
\caption{Summary statistics of ``noisy'' AAPL trade data.}
\label{Figure:preavg}
\begin{tabular}{cc}
\footnotesize{Panel A: autocorrelation of raw returns}   & \footnotesize{Panel B: histogram of raw returns} \\
\includegraphics[height=8cm,width=0.45\textwidth]{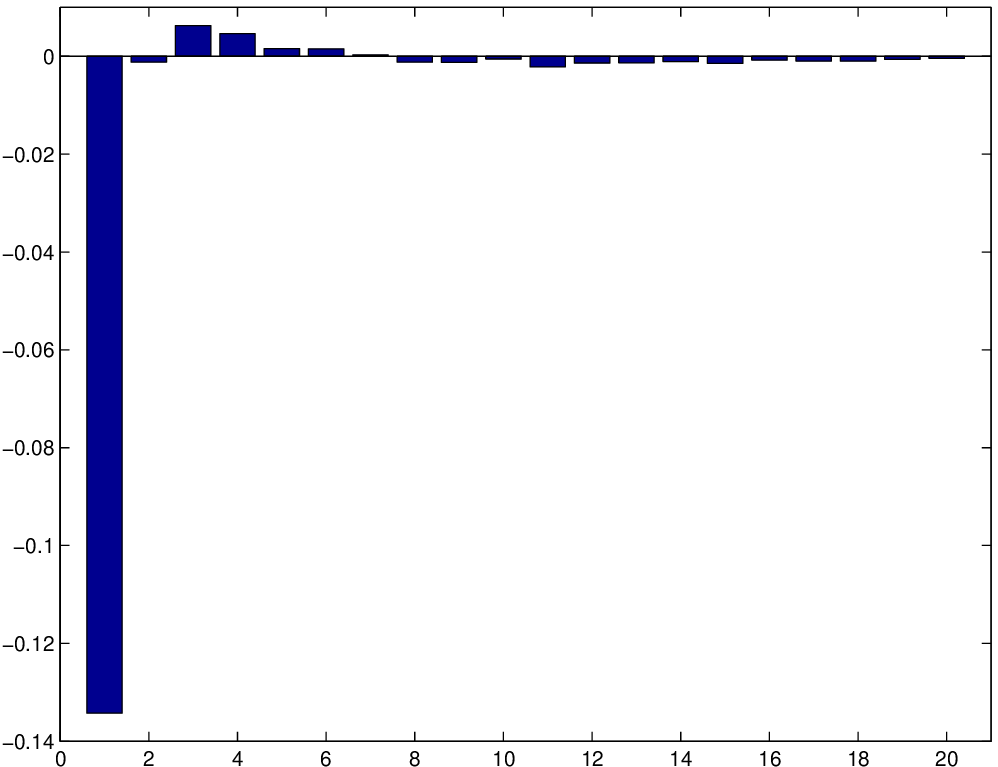} &
\includegraphics[height=8cm,width=0.45\textwidth]{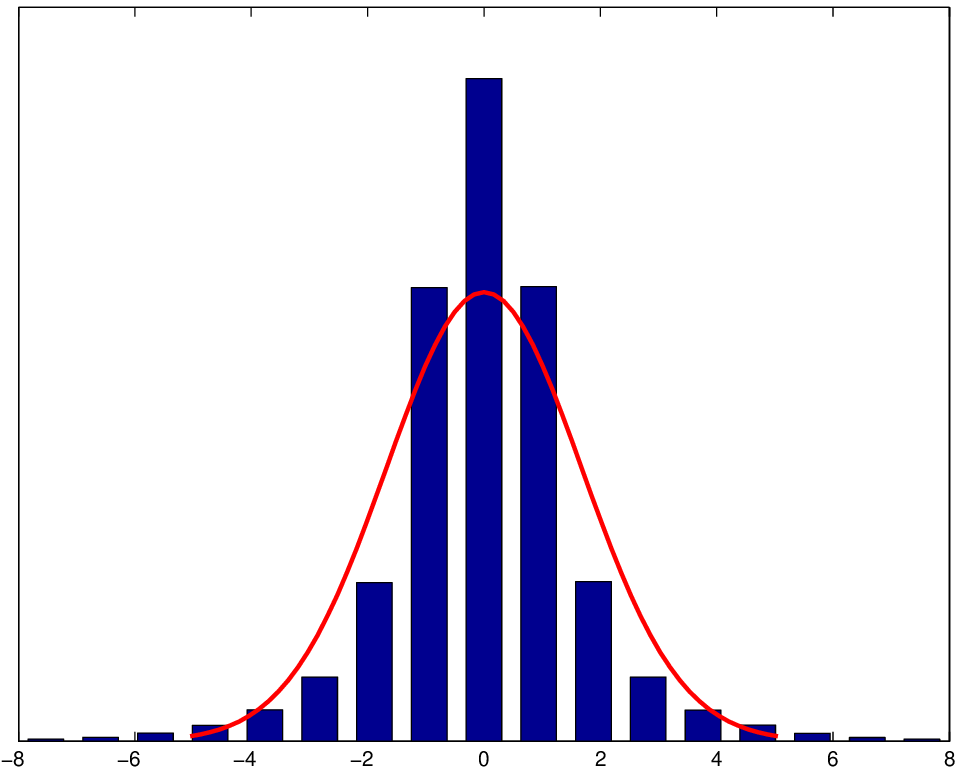} \\
\footnotesize{Panel C: autocorrelation of pre-averaged returns ($K = 15$)} & \footnotesize{Panel D: histogram of pre-averaged returns ($K = 15$)} \\
\includegraphics[height=8cm,width=0.45\textwidth]{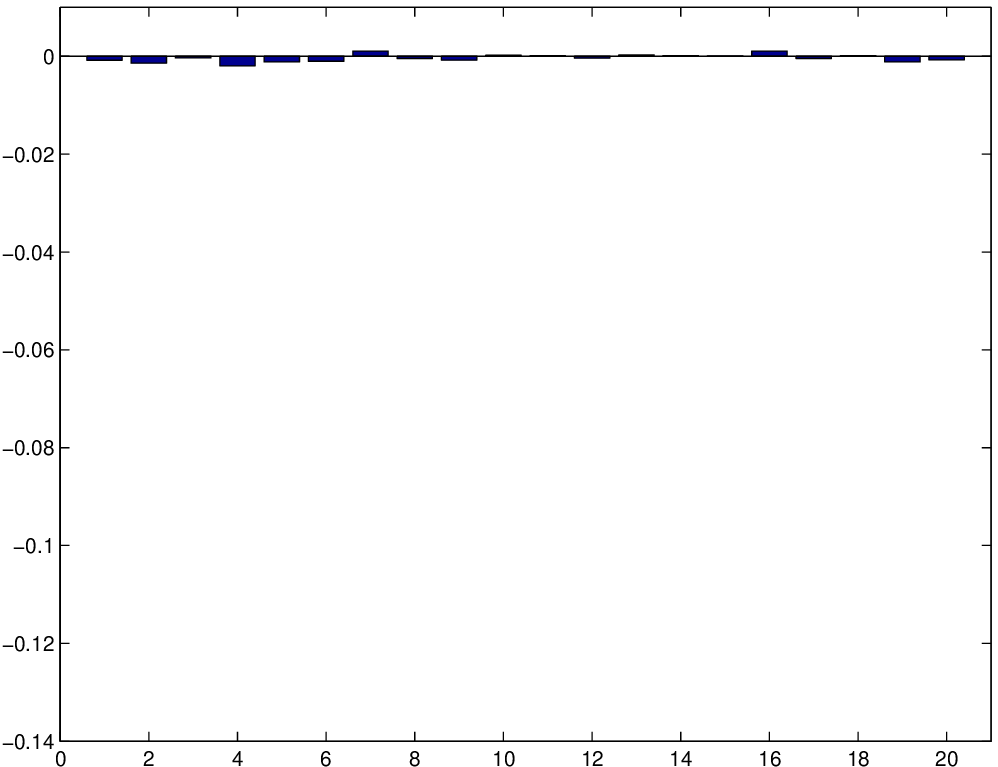}&
\includegraphics[height=8cm,width=0.45\textwidth]{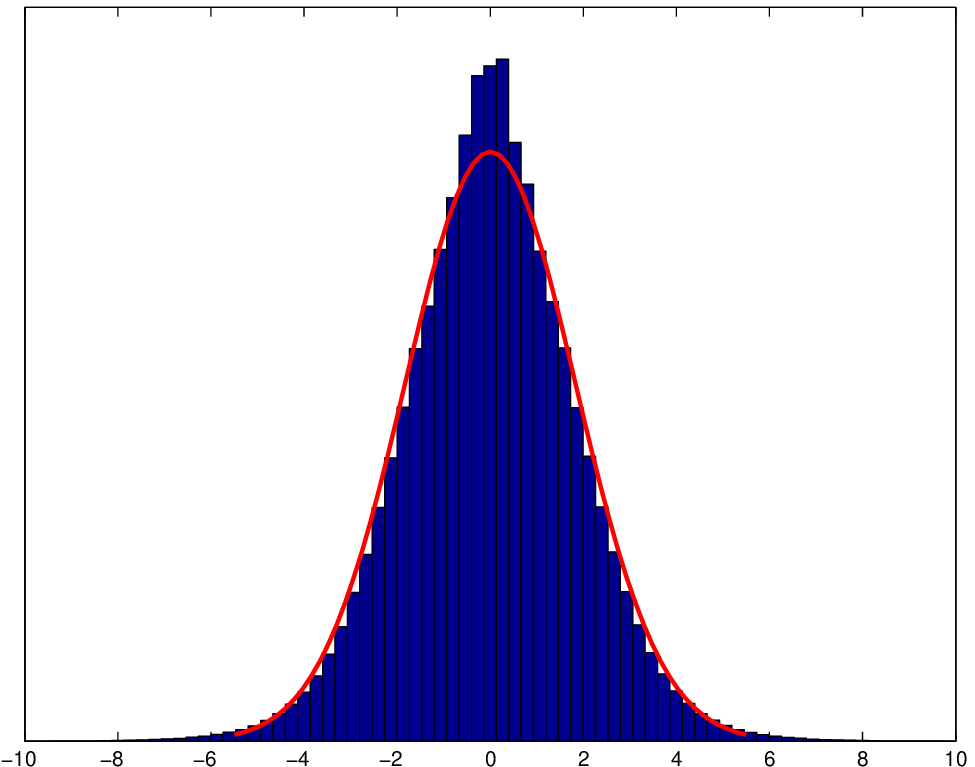}
\end{tabular}
\begin{footnotesize}
\parbox{0.98\textwidth}{\emph{Note}. Summary statistics of raw and pre-averaged AAPL trade data over period May 1, 2006 through December 30, 2008. Returns are expressed in basis points.}
\end{footnotesize}
\end{center}
\end{figure}

Figure \ref{Figure:tradeAAPL} draws a time series (in Panel A) and scatter plot (in Panel B) of QRV$^\ast$ compared to MSRV of \citet*{zhang:06a}.\footnote{The MSRV is implemented with an optimal bandwidth as in Eq. \eqref{Eqn:cstarMSRV}, estimated for each day in the sample separately. $\omega^{2}$ is estimated from the first-order autocovariance of the trade data, whereas IV and IQ estimates are calculated by subsampling the equivalent of 5 minute data in trade time (i.e. each subsample consists of 79 price observations per day). The average optimal bandwidth is 3.28, with a minimum of 2, a maximum of 9, and $q^\ast = 3$ for more than three out of every four days.} As in the previous illustration, we find a close alignment between the two estimates with a few noticeable exceptions. We highlight two here. On May 11, 2007 our sample of unfiltered trade data contains numerous ``out-of-sync'' records as indicated by the qualifier ``Z'' in the TAQ data.\footnote{\citet{nasdaq:08a} describes this qualifier as ``Sold Out of Sequence is used when a trade is printed (reported) out of sequence and at a time different from the actual transaction time.''} The time series of highly erratic transaction prices is displayed in Panel C of Figure \ref{Figure:tradeAAPL}, from which it is evident that standard volatility estimators will be heavily distorted. On the raw data, MSRV estimates daily volatility at $39.3\%$, which more than halves to $19.2\%$ after removing the spurious observations. In sharp constrast, QRV gives reliable estimates both on the raw and cleaned data, i.e. $20.0\%$ and $18.0\%$ respectively. The second example is October 11, 2007. See Panel D of Figure \ref{Figure:tradeAAPL} for the price path: the unfiltered data contains a number of trades that are seemingly executed at prices well below fair market value. Closer inspection of the data reveals that several -- but not all -- of these observations were flagged by the NASDAQ as suspicious (using a ``yellow flag'' indicated by qualifier ``Y''\footnote{\citet{nasdaq:08a} describes this qualifier as ``Market Centers will have the ability to identify regular trades being reported during specific events as out of the ordinary by appending a new sale condition code Yellow Flag (``Y'') on each transaction reported.''}) or re-opening prints (as indicated by qualifier ``O'' or ``5''). On the raw data, MSRV estimates volatility at $4,183.29\%$ which drops to $67.88\%$ after removing the spurious data records. QRV, on the other hand, again enjoys remarkable robustness to these outliers and estimates volatility at $63.59\%$ on the raw data and at $62.90\%$ on the cleaned data.

\begin{figure}[t!]
\begin{center}
\caption{QRV$^\ast$ with ``noisy'' AAPL trade data.}
\label{Figure:tradeAAPL}
\begin{tabular}{cc}
\footnotesize{Panel A: QRV and MSRV against time} & \footnotesize{Panel B: QRV vs MSRV} \\
\includegraphics[height=8cm,width=0.45\textwidth]{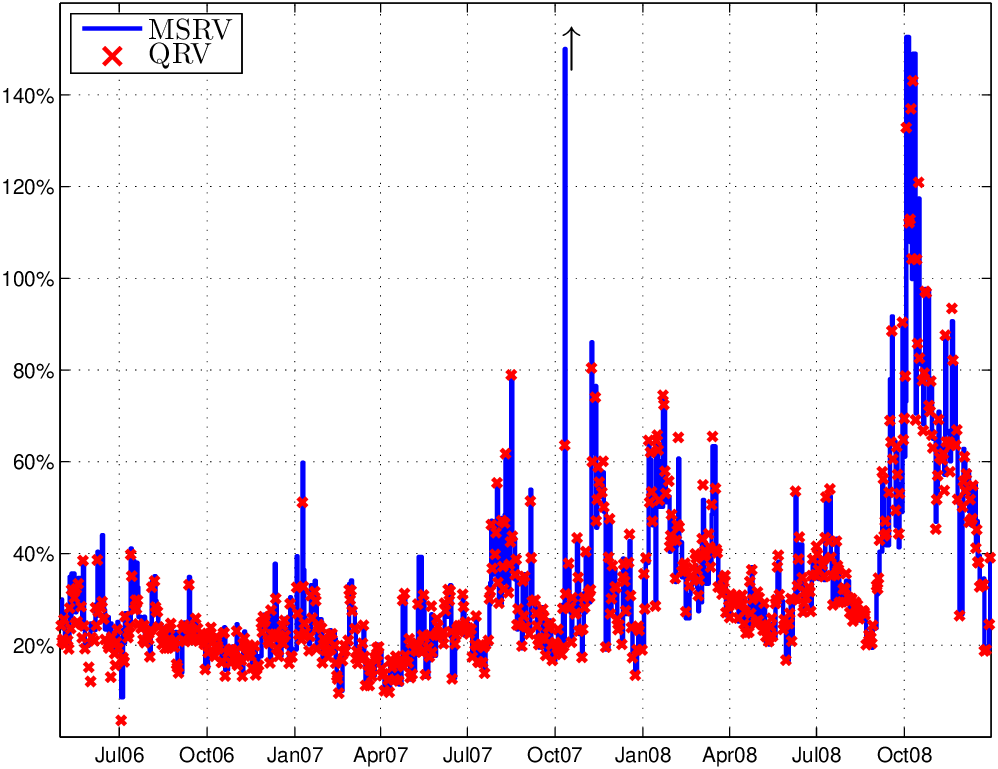} &
\includegraphics[height=8cm,width=0.45\textwidth]{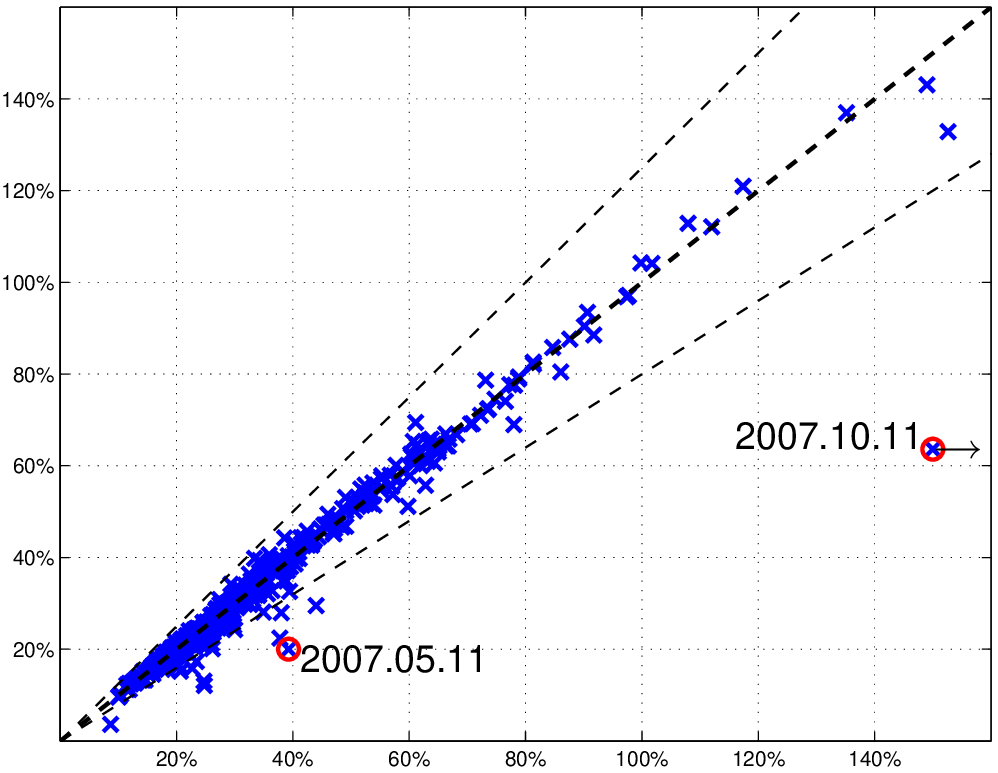} \\
\footnotesize{Panel C: trade prices 2007.05.11 (9:50 -- 10:05)} & \footnotesize{Panel D: trade prices 2007.10.11 (9:45 -- 16:00)}\\
\includegraphics[height=8cm,width=0.45\textwidth]{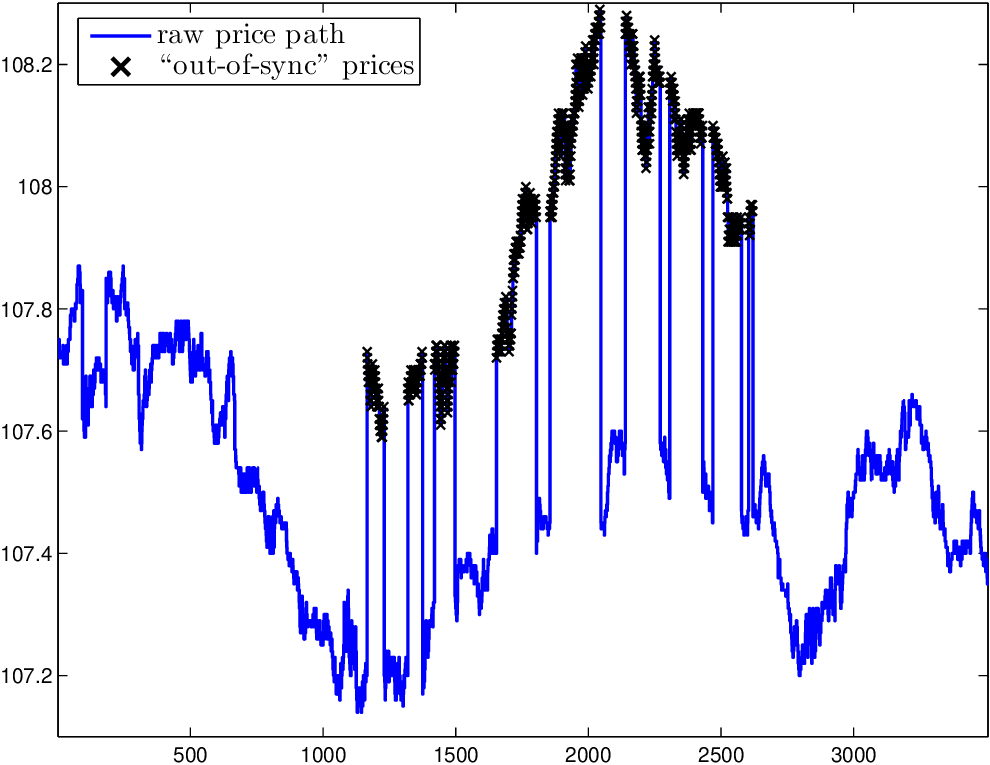} &
\includegraphics[height=8cm,width=0.45\textwidth]{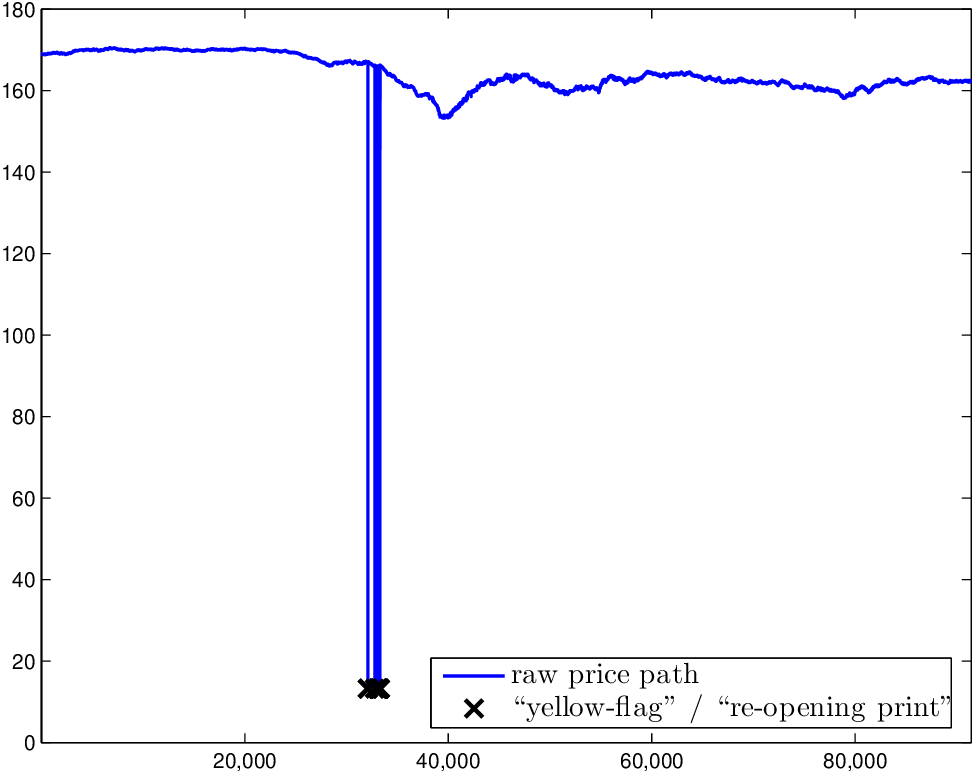}
\end{tabular}
\begin{footnotesize}
\parbox{0.98\textwidth}{\emph{Note}. Day-by-day QRV and MSRV estimates from unfiltered tick-by-tick trade prices of AAPL over the period May 1, 2006 through December 30, 2008. QRV$^\ast$ is calculated using $\overline{ \lambda} = \{0.80;0.85;0.90;0.95\}$, $m=240$, and $K = 15$. MSRV is calculated using the optimal number of subsamples as in Eq. \eqref{Eqn:cstarMSRV}. Estimates are expressed as annualized standard deviation. The thin dashed lines in Panel B mark the region where MSRV and QRV differ by more than 25\%. The MSRV estimate for 2007.10.11 is 4,183.29\%, but to preserve the scaling of the graph it is capped at 150\% (as indicated by the arrow).}
\end{footnotesize}
\end{center}
\end{figure}

Averaging over all days in the sample (excluding October 11, 2007), we compute a return volatility of $37.32\%$ with MSRV and $37.15\%$ with QRV$^\ast$. In stark contrast to our previous illustration with daily data and the studies by \citet*{andersen-bollerslev-diebold:07a, corsi-reno:12a, eraker-johannes-polson:03a, huang-tauchen:05a} amongst others, we now find that less than 1\% of total variation can be attributed to jumps! This is rather surprising particularly because we are considering here a single stock instead of a well diversified index like the S\&P500 and cover a period that includes the exceptionally turbulent 2008. Moreover, because some of the larger deviations between QRV and MSRV in our sample are due to further spurious data points, we fail to identify a single instance, where a true jump in the price is observed at this frequency. Hence, this raises the question whether previously identified jumps in the literature are in fact actual jumps in the price, which become difficult to identify at ultra-high frequency due to for instance market microstructure noise or, on the other hand, whether they are simply a consequence of (sparse) sampling and vanish when moving to tick-by-tick event time. We argue the latter and provide anecdotal evidence, which suggests that price jumps identified at a low sampling frequency may in fact be bursts of volatility. This logic is also what underlies the family of jump tests developed by \citet*{ait-sahalia-jacod:09a}, namely, true jumps can only be identified by increasing the sampling frequency to the limit.

Consider April 24, 2007: an important day for AAPL with the SEC dropping tax fraud charges against CEO Steve Jobs\footnote{See http://www.sec.gov/news/press/2007/2007-70.htm.}, and a quarterly earnings announcement to be released on the following day. Panel A of Figure \ref{Figure:JumpVolatility} plots the intra-day price path at a conventional $5$ minute frequency, and we observe a temporary 2\% drop in share price with an instant recovery around 13:30. The widely used BPV jump test statistic is highly significant at $-6.8$ (RV is $41.2\%$ and BPV $32.4\%$) and unequivocally identifies this day to contain at least one jump. Contrast this with Panel B of Figure \ref{Figure:JumpVolatility}, where we plot the AAPL share price over the one hour window from 12:45 to 13:45 at ultra-high trade frequency. Now, the largest trade-by-trade move is only $\$0.13$ at a price of around $\$92$ or merely $14$ basis points. At the same time, more than one-third of the total daily volume was traded in this one hour window. On these data the MSRV and QRV$^\ast$ estimates closely agree at $30.4\%$ and $29.8\%$ respectively. Taken together, the hypothesis that there was a burst of volatility around 13:30, which is mistakenly identified as a jump at lower frequency, seems the more plausible one. This argument can be supported further by noting that the New York Stock Exchange, which actively makes markets in AAPL, contractually obliges their designated market makers to maintain a fair and orderly market which, as highlighted in NYSE Rule 104, implies ``\emph{the maintenance of price continuity with reasonable depth}''.\footnote{NYSE Rule 104 on the dealings and responsibilities of a designated market marker (DMM) states ``\emph{The function of a member acting as a DMM on the Floor of the Exchange includes the maintenance, in so far as reasonably practicable, of a fair and orderly market on the Exchange in the stocks in which he or she is so acting. The maintenance of a fair and orderly market implies the maintenance of price continuity with reasonable depth, to the extent possible consistent with the ability of participants to use reserve orders, and the minimizing of the effects of temporary disparity between supply and demand. In connection with the maintenance of a fair and orderly market, it is commonly desirable that a member acting as DMM engage to a reasonable degree under existing circumstances in dealings for the DMM's own account when lack of price continuity, lack of depth, or disparity between supply and demand exists or is reasonably to be anticipated.}'' (source: http://rules.nyse.com)} See \citet*{hasbrouck:07a} for further discussion on this.

The suggestion that price jumps are less common than previously thought does not in any way negate the importance of QRV or other jump-robust RV measures. Firstly, QRV is robust not only to jumps but also to outliers. As we have seen above, outliers do occur in high-frequency data and cannot always be filtered out perfectly based on trade qualifiers (this is particularly true for less recent non-US data). They are therefore a challenge to the econometrician and here the value of QRV is indisputable. Secondly, numerous scenarios remain where true price jumps can be observed, for instance over intra-day auctions, lunch breaks or when circuit breakers or volatility auctions are activated.\footnote{For example, the German trading platform Xetra has daily intra-day auction between 13:00 -- 13:17 for the various segments. The Tokyo (Hong Kong) stock exchange shuts down from 11:00 -- 12:30 (12:30 -- 14:30). The NYSE uses a circuit breaker where a 10\% intra-day move in the DJIA triggers a market wide trading halt of up to one hour.} Also, the market maker's precise obligation changes by exchange and regardless of this, its influence is not without limit so that with exceptional news releases or large market orders in an illiquid market, jumps can of course still occur. Finally, jumps can be commonplace when moving beyond equity markets governed by a market maker charged with the obligation to maintain price continuity. Consider, for example, electricity markets, where storage is costly, power plants can fail, and sudden unpredictable swings in demand for energy -- often for exogenous reasons -- can lead to substantial temporary price spikes \citep*[see, e.g.,][]{bessembinder-lemmon:02a,longstaff-wang:04a}.

\begin{figure}[t!]
\begin{center}
\caption{Jumps or burst of volatility?}
\label{Figure:JumpVolatility}
\begin{tabular}{cc}
\footnotesize{Panel A: price path at 5 minute frequency} & \footnotesize{Panel B: price path at trade frequency (12:45 -- 13:45)}\\
\includegraphics[height=8cm,width=0.45\textwidth]{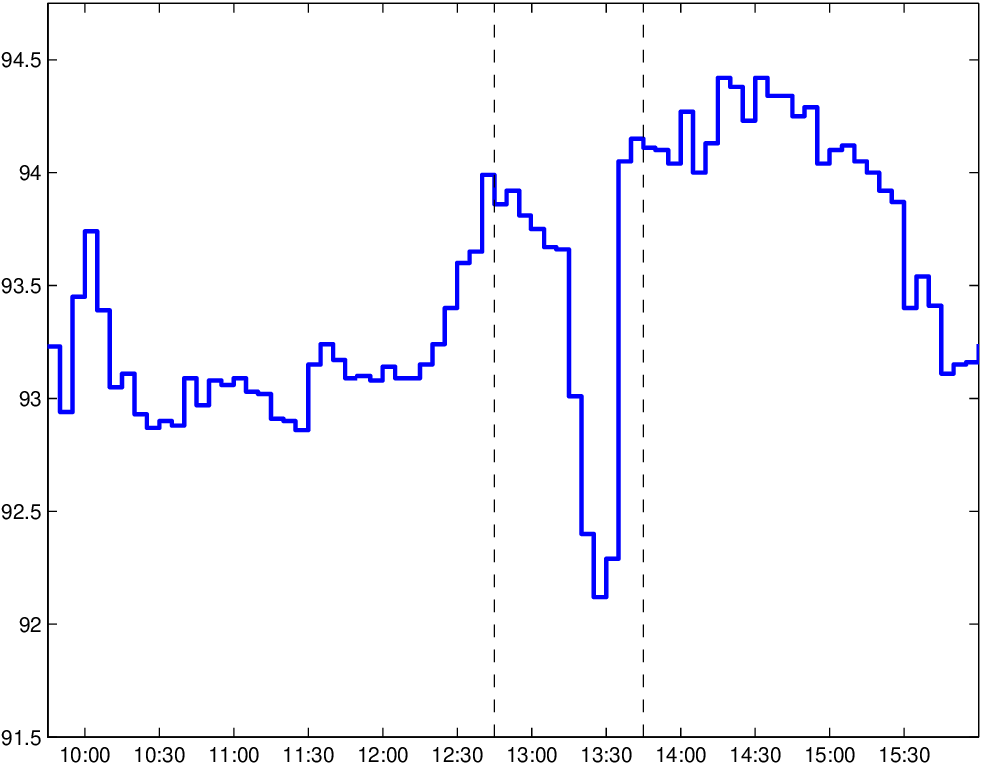} &
\includegraphics[height=8cm,width=0.45\textwidth]{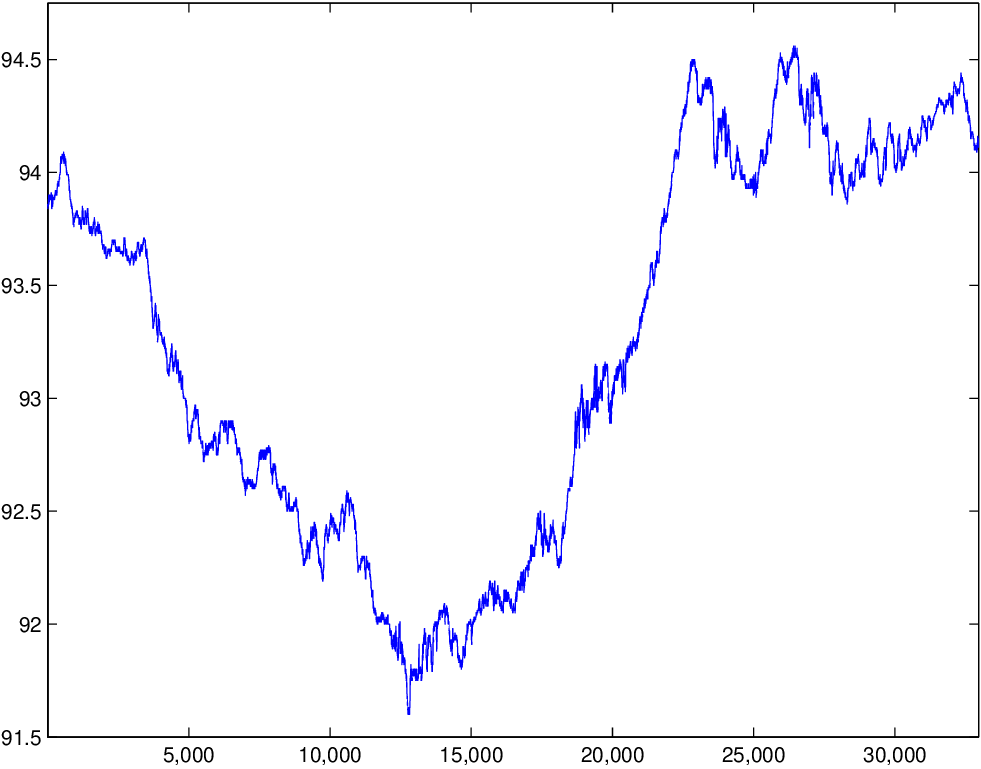}
\end{tabular}
\begin{footnotesize}
\parbox{0.98\textwidth}{\emph{Note}. Panel A plots the price of AAPL on April 24, 2007 at a 5 minute frequency (using last tick interpolation on trade data). Panel B plots the price at trade frequency from 12:45 to 13:45.}
\end{footnotesize}
\end{center}
\end{figure}

\section{Concluding remarks} \label{sec:conclusion}

In this paper we develop a new quantile-based realised variance measure that is consistent for the integrated variance and robust to jumps and outliers. A modified version, based on pre-averaged data, is also introduced and we show that in the presence of microstructure noise it retains consistency and attains the best possible convergence rate of $N^{-1/4}$. Importantly, our estimator is highly efficient making it the first estimator of integrated variance in the literature that is, at the same time, efficient, and robust to both jumps, outliers and microstructure noise. From a practical viewpoint, the estimator is easy to implement and is relatively insensitive to the particular choice of tuning parameters. Extensive simulations and empirical applications illustrate the excellent performance of our estimator.

The methodology outlined in this paper can be extended into various directions. For instance, it is possible to develop a joint distribution theory for RV and QRV allowing the construction of a formal jump test in the spirit of \citet*{barndorff-nielsen-shephard:06a}. Also, it is possible to modify QRV to produce jump and noise robust estimates of the integrated quarticity, a key quantity when making inference about integrated variance and testing for jumps. With these tools available, it may then be interesting to revisit some of the empirical work on non-parametric jump tests \citep*[e.g.][]{ait-sahalia-jacod:09a, ait-sahalia-jacod:09b, barndorff-nielsen-shephard:06a, christensen-podolskij:09a, fan-wang:07a, jiang-oomen:08a, lee-mykland:08a} and, inspired by our empirical findings, reevaluate the role that jumps play in financial equity price dynamics \citep*[e.g.][]{andersen-bollerslev-diebold:07a, eraker-johannes-polson:03a, huang-tauchen:05a}. Recent work by \citet*{ait-sahalia-jacod:09a}, \citet*{barndorff-nielsen-shephard-winkel:06a} and \citet*{woerner:06a} has shown that bi-power variation is not only robust to finite activity jumps (as considered in this paper) but also to certain infinite activity jump specifications. An investigation of the properties of QRV in such a scenario might be of interest and allow for further comparison to alternative jump robust estimators. Finally, in this paper we maintained the assumption of i.i.d. noise but this may be relaxed to allow for dependent noise \citep*[as studied by, for instance,][]{ait-sahalia-mykland-zhang:11b,barndorff-nielsen-hansen-lunde-shephard:08a,jacod-li-mykland-podolskij-vetter:09a}. All the above is well beyond the scope of the current paper and will be left for future research.

\clearpage

\appendix

\section{Summary statistics of trade data} \label{app:noisestats}

\begin{table}[!ht]
\setlength{\tabcolsep}{0.50cm}
\begin{center}
\caption{Summary statistics of trade data (2008) \label{Table:noisestats}}\medskip
\begin{tabular}{p{4cm}rrrcrrr}
\hline
            & \multicolumn{3}{c}{\# of observations $N$} && \multicolumn{3}{c}{noise ratio $\gamma$} \\
\cline{2-4}\cline{6-8}
universe  & Q5 & Q50 & Q95    && Q5 & Q50 & Q95   \\
\hline
\multicolumn{8}{l}{\emph{Panel A: US}}\\
S\&P600                   &       157 &       751 &     2,417 &&     0.10 &     0.34 &     0.73 \\
S\&P400                   &       604 &     1,749 &     4,710 &&     0.12 &     0.36 &     0.76 \\
S\&P500                   &     1,477 &     4,174 &    12,355 &&     0.14 &     0.37 &     0.93 \\
S\&P100                   &     2,945 &     7,338 &    20,707 &&     0.17 &     0.40 &     1.06 \\
DJ30                      &     4,701 &     9,562 &    23,686 &&     0.22 &     0.45 &     0.97 \\
\multicolumn{8}{l}{}\\
\multicolumn{8}{l}{\emph{Panel B: Europe}}\\
DJ Stoxx Small 200        &       158 &       772 &     2,225 &&     0.25 &     0.59 &     1.15 \\
DJ Stoxx Mid 200          &       352 &     1,419 &     3,689 &&     0.30 &     0.63 &     1.16 \\
DJ Stoxx Large 200        &       999 &     3,634 &    11,169 &&     0.34 &     0.66 &     1.28 \\
DJ Stoxx50                &     3,161 &     6,975 &    15,860 &&     0.40 &     0.71 &     1.40 \\
\multicolumn{8}{l}{}\\
\multicolumn{8}{l}{\emph{Panel C: Asia-pacific}}\\
S\&P ASX200               &       199 &       744 &     2,957 &&     0.30 &     0.74 &     1.75 \\
S\&P Topix 150            &       370 &     1,070 &     2,639 &&     0.34 &     1.03 &     3.59 \\
Hang Seng                 &       465 &     1,260 &     4,090 &&     0.39 &     0.88 &     2.26 \\
\multicolumn{8}{l}{}\\
\multicolumn{8}{l}{\emph{Panel D: Emerging markets (BRIC)}}\\
Ibovespa (Brazil)         &       261 &     1,130 &     5,617 &&     0.32 &     0.66 &     1.21 \\
DJ Titans 10 (Russia)     &       543 &     6,066 &    22,230 &&     0.58 &     1.03 &     1.29 \\
DJ BRIC 50 (India)        &       726 &     2,098 &     4,987 &&     0.16 &     0.49 &     0.90 \\
DJ BRIC 50 (China)        &     1,177 &     2,328 &     5,197 &&     0.60 &     1.17 &     2.96 \\
\hline
\end{tabular}
\medskip
\begin{footnotesize}
\parbox{0.98\textwidth}{\emph{Note}. This table reports the 5th, 50th, and 95th percentile of the number of observations (i.e. number of trades per day) $N$ and the noise ratio $\gamma^2 = \omega^2/(IV/N)$ computed across all names in each universe and all days over the period Jan 2, 2008 through Dec 31, 2008. The index constituents of January 2009 are used.}
\end{footnotesize}
\end{center}
\end{table}

To motivate the choice of parameters used for the simulations in Section \ref{sec:QRV*simulations}, Table \ref{Table:noisestats} reports summary statistics of the number of intra-day trade price observations $N$ and the level of microstructure noise as measured by the noise ratio $\gamma^2 = N\omega^2/IV$ \citep*[see][]{oomen:06a} for various stock universes. The data is taken from Reuters DataScope Tick History and covers the period January 2, 2008 through December 30, 2008. For the US and Europe, the selection covers small-caps (S\&P600, DJ Stoxx Small 200), mid-caps (S\&P400, DJ Stoxx Mid 200), large-caps (S\&P500, S\&P100, DJ Stoxx Large 200) and blue-chips (DJ30, DJ Stoxx50). For the Asia-pacific region and emerging markets, the universes cover large-caps only. The tick data are filtered on trade conditions, only trades from the primary exchange are included\footnote{For the US in particular, and Europe to lesser extent, this depresses the average number of trades per day as large fractions of volume are executed on competing exchanges and multi-lateral trading platforms.}, and trades with identical millisecond precision time-stamp are aggregated into one observation with a volume-weighted average trade price. All this is done to maximize the cleanliness of the data. To compute the noise ratio (for each stock and day in the sample), the microstructure noise variance $\omega^2$ is estimated as the negative of the first-order autocovariance of trade returns following \citet{oomen:06a}. The IV is proxied by an ad-hoc implementation of the realised (Bartlett) kernel of \citet*{barndorff-nielsen-hansen-lunde-shephard:08a} with a bandwidth parameter equal to five as suggested by \citet*{gatheral-oomen:10a}. The summary statistics in Table \ref{Table:noisestats} are computed across all names in the universe and days in the sample. This is justified as $N$ and $\gamma$ are reasonable stable over the sample period.

\clearpage

\section{The QRV with absolute returns: an alternative formulation} \label{sec:QRVabsolute}

In the main text, we defined QRV based on signed returns using symmetrized quantiles. Here we give an alternative formulation of the QRV based on the quantiles of absolute returns and show how these specifications are related. We base our exposition on the subsampled version of the QRV, as it can be linked to some other estimators proposed in the literature, but it should be clear that we could equally well work with a blocking QRV.

Define
\begin{equation*}
q_{i}^{abs,sub} (m,\lambda) = g^{2}_{ \lambda m} \left( \sqrt{N} |D_{i, m} X| \right), \qquad \text{for } \lambda \in [0,1).
\end{equation*}
As we are now dealing with absolute returns, there is no need to symmetrize $q_{i}^{abs,sub} (m,\lambda)$. We write
\begin{equation}
QRV_{N}^{abs,sub} \left( m, \overline{ \lambda}, \alpha \right) \equiv \alpha' QRV_{N}^{abs,sub} ( m, \overline{ \lambda}),
\end{equation}
with the $j$th element of $QRV_{N}^{abs,sub} ( m, \overline{ \lambda} )$ given by:
\begin{equation}
QRV_{N}^{abs,sub} (m,\lambda_{j}) =  \frac{1}{N - m} \sum_{i = 1}^{N - m} \frac{q_{i}^{abs,sub} (m,\lambda_{j})}{\nu_1^{abs}(m, \lambda_{j})}, \qquad \text{for } \lambda_{j} \in [0,1),
\end{equation}
and $j = 1, \ldots, k$, where $\nu_{r}^{abs} \left( m, \lambda \right)$ is defined as:
\begin{equation}\label{Eqn:nu-abs}
\nu_{r}^{abs} \left( m, \lambda \right) = \mathbb{E} \left[ \Bigl( |U| _{( \lambda m)}  \Bigr)^{2r} \right].
\end{equation}
The consistency and central limit theorems derived for QRV based on signed returns, extend directly to the case where we use absolute returns by replacing $\nu_r(m,\lambda)$ in Eq. \eqref{Eqn:nu(r,m)} with $\nu_{r}^{abs}(m,\lambda)$ above and $\nu_1(m,\lambda_i,\lambda_j)$ in Eq. \eqref{Eqn:nuij} with
\begin{equation}\label{Eqn:nuij-abs}
\nu_{1}^{abs} \left(m, \lambda_i, \lambda_j \right) = \mathbb{E} \left[ \Bigl( |U| _{( \lambda_i m)}  \Bigr)^{2} \Bigl( |U|_{( \lambda_j m)}  \Bigr)^{2} \right].
\end{equation}
The corresponding asymptotic constants for the $m\to\infty$ limit of Proposition \ref{prop:QRVmtoinfty} are now:
\begin{eqnarray*}
\nu_1^{abs}(\lambda) &\equiv& \lim_{m\rightarrow\infty}\nu_1^{abs} (m,\lambda) = d_{ \lambda}, \\
\nu_1^{abs}(\lambda_i,\lambda_j)&\equiv&\lim_{m\rightarrow\infty}\nu_1^{abs}(m,\lambda_i,\lambda_j) = d_{\lambda_i} d_{\lambda_j}, \\
\Theta^{abs}(\overline{\lambda})_{ij} &\equiv&\lim_{m\rightarrow\infty}\Theta^{abs}(m,\overline{\lambda})_{ij} = \frac{ \lambda_i (1 - \lambda_j ) }{ p \bigl( d_{ \lambda_i} \bigr) p \bigl( d_{ \lambda_j} \bigr)  d_{ \lambda_i} d_{ \lambda_j}},
\end{eqnarray*}
with $\lambda_i<\lambda_j$, where $d_{ \alpha}$ and $p$ denote the $\alpha$-quantile and density function of the $\chi_1^2$-distribution. With noise, the results presented for signed returns also extend to the formulation based absolute returns. The only substantive change would be the corresponding redefinition of Eq. \eqref{Eqn:def2function}, i.e.
\begin{equation*}
f_{m,l,x,u}^{abs} (\lambda_1, \lambda_2)= \text{\upshape{cov}} \Big(g^2_{\lambda_1 m} (|S|) , g^2_{\lambda_2 m} (|T|) \Big).
\end{equation*}
It is quite intuitive that $QRV_{N}^{sub}$ and $QRV_{N}^{abs, sub}$ are closely related, when the quantiles are chosen as $\lambda_i^{abs} = 2\lambda_i-1$. While it is hard to formalize this intuition, in unreported simulations we find that the performance of these estimators is indistinguishable. In the $m\to\infty$ limit, however, we can be more precise and prove their equivalence. Consider $\overline{ \lambda} = \lambda$ for $\lambda \in (1/2,1)$. Then it follows that $U_{(\lambda m)} \overset{p}{ \to} c_{ \lambda}$ as $m \to \infty$, where $c_{ \lambda}$ is the $\lambda$-quantile of $N(0,1)$. On the other hand, $|U|_{((2 \lambda - 1) m)} \overset{p}{ \to} c_{ \lambda}$, and since $c_{ \lambda} > 0$ for $\lambda \in (1 / 2, 1)$, $|U_{( \lambda m)}|$ and $|U|_{((2 \lambda - 1) m)}$ have the same asymptotic behavior. By symmetry of the normal distribution, it also holds that $| U_{(m-\lambda m+1)} | \overset{p}{ \to} c_{ \lambda}$, and thus $(|U_{( \lambda m)}| + | U_{(m-\lambda m+1)} |) / 2$ and $|U|_{((2 \lambda - 1) m)}$ also have the same asymptotic behaviour. It now follows that the expressions for the asymptotic variance of the two estimators are identical for $m \rightarrow \infty$, when the quantiles are chosen this way.

There are, however, two important ways in which the QRV estimator based on absolute returns is distinguished from its counterpart on signed returns. First, by a suitable choice of $\lambda^{abs}$ it is possible to discard only a single observation per block (e.g. $m=100$ with $\lambda^{abs} = 0.99$). This allows for the use of observations in the extreme tail, at the cost of losing robustness to outliers. Second, the estimator based on absolute returns nests the MinRV and MedRV proposed by \cite{andersen-dobrev-schaumburg:08a} as special cases. Specifically, with $m=2$ and $\lambda^{abs} = \frac{1}{2}$ we have MinRV and with $m=3$ and $\lambda^{abs} = \frac{2}{3}$ we have MedRV.

\clearpage

\section{Proofs} \label{app:proofs}
In this part of the paper, we state the proofs of the theorems given in the main text. Throughout, we use the approximation
\begin{equation*}
\Delta_{i}^{N} X \approx \sigma_{ \frac{i - 1}{N}} \Delta_{i}^{N} W.
\end{equation*}
Thus, to prove our asymptotic results we first replace $\Delta_{i}^{N} X$ with $\sigma_{ \frac{i - 1}{N}} \Delta_{i}^{N} W$ and then show that the error caused by this approximation is asymptotically negligible.

Let us fix some notations. We set
\begin{equation*}
\beta_{i}^{n} = \sqrt{N} \left( \sigma_{ \frac{i - 1}{n}} \Delta_{k}^{N} W \right)_{(i - 1) m + 1 \leq k \leq im},
\end{equation*}
and define
\begin{equation*}
w_{i}^{(n, m)} ( \lambda) = g_{ \lambda m}^{2} ( \beta_{i}^{n} ) + g_{m - \lambda m + 1}^{2} ( \beta_{i}^{n} ).
\end{equation*}
Before we start to prove the main results, we state a simple Lemma.

\begin{lemma}
\label{Lem:gdiff} The function $g_{k}$ defined in \eqref{Eqn:g} has the following properties:
\begin{enumerate}
   \item $g_{k}$ is continuous.
   \item $g_{k}$ is differentiable on the set $\left\{x \in \mathbb{R}^{m} \mid x_{i} \not = x_{j}, 1 \leq i < j \leq m \right\}$, that is
   \begin{equation*}
   \frac{1}{ \epsilon} \left[ g_{k} \left( x + \epsilon y \right) - g_{k} \left( x \right) \right] \to y_{k^{*}} \qquad \epsilon \searrow 0,
   \end{equation*}
   where $y \in \mathbb{R}^m$ and
   \begin{equation*}
   k* = i \text{ with } x_{i} = x_{ \left( k \right)}.
   \end{equation*}
\end{enumerate}
\end{lemma}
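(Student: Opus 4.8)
The plan is to handle the two claims separately, exploiting the fact that the $k$th order statistic is a piecewise-linear function of the coordinates of $x$, with the pieces indexed by the ranking permutation.

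For part 1 (continuity) I would use the min–max representation
$x_{(k)} = \min_{|S| = k} \max_{i \in S} x_i$,
where $S$ ranges over the $k$-element subsets of $\{1,\dots,m\}$ (the identity is checked by taking $S$ to be the set of indices of the $k$ smallest coordinates, and by a pigeonhole argument for any other $S$). A finite maximum of coordinate projections is $1$-Lipschitz for the sup-norm, and a finite minimum of $1$-Lipschitz functions is again $1$-Lipschitz, so $|g_k(x) - g_k(y)| \le \|x - y\|_\infty \le \|x-y\|$, and continuity follows at once. I prefer this to a direct $\epsilon$–$\delta$ argument because the latter needs care at points with ties, where the ranking permutation is not unique, whereas the Lipschitz bound sidesteps that issue entirely.

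For part 2 (directional differentiability on $U = \{x \in \mathbb{R}^m : x_i \ne x_j \text{ for all } i < j\}$) the idea is that the ordering permutation is locally constant on $U$. Given $x \in U$, there is a unique permutation $\pi$ with $x_{\pi(1)} < x_{\pi(2)} < \dots < x_{\pi(m)}$, and then $k^* = \pi(k)$ in the notation of the statement. Each of the $m-1$ strict inequalities $x_{\pi(j)} < x_{\pi(j+1)}$ is an open condition, so there is a neighbourhood of $x$ on which the same $\pi$ orders the coordinates, i.e. on which $g_k$ coincides with the linear coordinate projection $z \mapsto z_{k^*}$. Hence for any $y \in \mathbb{R}^m$ and all sufficiently small $\epsilon > 0$ the difference quotient $\tfrac1\epsilon[g_k(x+\epsilon y) - g_k(x)]$ is literally equal to $y_{k^*}$, which gives the claimed limit. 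I would also note in passing that $U$ is open — it is the complement of the finite union of hyperplanes $\{x_i = x_j\}$ — so that the notion of differentiability on $U$ is well-posed.

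There is no genuine obstacle in this lemma; the one point that deserves to be stated explicitly is the local constancy of the ranking permutation on $U$, that is, the elementary observation that strict inequalities among the coordinates persist under small perturbations.
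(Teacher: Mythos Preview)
Your proof is correct. The paper itself does not supply a proof of this lemma; it is introduced as ``a simple Lemma'' and stated without argument, so there is nothing to compare against beyond noting that your treatment is more explicit than the paper's. Your min--max representation for continuity and your observation that the ranking permutation is locally constant on the open set $U$ are exactly the right ingredients, and both are cleanly executed.
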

In the following we assume without loss of generality that $a$, $\sigma$, $a'$, $\sigma'$ and $v'$ are bounded \citep*[for details see e.g. Section 3 in][]{barndorff-nielsen-graversen-jacod-podolskij-shephard:06a}. Moreover, the constants used in the proofs will all be denoted by $C$.

\begin{proof}[Proof of Theorem \ref{Thm:QRVconsistency}]
Here we show the consistency of $QRV_N(m,\overline{\lambda},\alpha)$. Since $\sum_{j=1}^k \alpha_j=1$ it is sufficient to prove Theorem \ref{Thm:QRVconsistency} for $\overline{\lambda} = \lambda \in (1/2, 1)$ and $\alpha =1$ (i.e. $k=1$). In this case we use the notation $QRV_N(m,\lambda)=QRV_N(m,\overline{\lambda},\alpha)$. First, we define:
\begin{align*}
\xi_{i}^{n} &= \nu_{1, m}^{-1} \left( \lambda \right) w_{i}^{ \left( n, m \right)} \left( \lambda \right), \\[0.25cm]
U_{n} &= \frac{1}{n} \sum_{i = 1}^{n} \xi_{i}^{n}.
\end{align*}
Note that:
\begin{equation*}
\E \left[ \xi_{i}^{n} \mid \mathcal{F}_{ \frac{i - 1}{n}} \right] = \sigma^{2}_{ \frac{i - 1}{n}},
\end{equation*}
so
\begin{equation}
\label{Eqn:ApI} \frac{1}{n} \sum_{i = 1}^{n} \E \left[ \xi_{i}^{n} \mid \mathcal{F}_{ \frac{i - 1}{n}} \right] \overset{p}{ \to} \int_{0}^{1} \sigma_{u}^{2} \text{d}u.
\end{equation}
Now, by setting
\begin{equation}
\label{Eqn:eta} \eta_{i}^{n} = \xi_{i}^{n} - \E \left[ \xi_{i}^{n} \mid \mathcal{F}_{ \frac{i - 1}{n}} \right],
\end{equation}
we get:
\begin{equation*}
\E \left[ | \eta_{i}^{n} |^{2} \mid \mathcal{F}_{ \frac{i - 1}{n}} \right] = \frac{ \nu_{2, m} \left( \lambda \right) - \nu_{1, m}^2 \left( \lambda \right)}{ \nu_{1, m}^{2} \left( \lambda \right)} \sigma^{4}_{ \frac{i - 1}{n}}.
\end{equation*}
Therefore,
\begin{equation*}
\frac{1}{n^{2}} \sum_{i = 1}^{n} \E \left[ | \eta_{i}^{n} |^{2} \mid \mathcal{F}_{ \frac{i - 1}{n}} \right] \overset{p}{ \to} 0.
\end{equation*}
Hence, the assertion $U_{n} \overset{p}{ \to} \int_{0}^{1} \sigma_{u}^{2} \text{d}u$ follows directly from \eqref{Eqn:ApI}. Now, we are left to prove that
\begin{equation}
QRV_{N}(m,\lambda) - U_{n} \overset{p}{ \to} 0.
\end{equation}
Note that
\begin{equation*}
QRV_{N}(m,\lambda) - U_{n} = \frac{ \nu_{1, m}^{-1} ( \lambda)}{n} \sum_{i = 1}^{n} \zeta_{i}^{n},
\end{equation*}
where
\begin{equation*}
\zeta_{i}^{n} = q_{i}^{ \left( n, m \right)} \left( \lambda \right) - w_{i}^{ \left( n, m \right)}
\left( \lambda \right).
\end{equation*}
Now, we use the decomposition
\begin{equation*}
\zeta_{i}^{n} = \zeta_{i}^{n} \left( 1 \right) + \zeta_{i}^{n} \left( 2 \right),
\end{equation*}
where $\zeta_{i}^{n} \left( k \right)$, $k = 1, 2$, are given by
\begin{align}
\label{Eqn:zeta1} \zeta_{i}^{n} \left( 1 \right) &= g_{ \lambda m}^{2} \left( \sqrt{N} D_i^m X \right) - g_{ \lambda m}^{2} \left( \beta_{i}^{n} \right), \\[0.25cm]
\label{Eqn:zeta2} \zeta_{i}^{n} \left( 2 \right) &= g_{m - \lambda m + 1}^{2} \left( \sqrt{N} D_i^m X \right) - g_{m - \lambda m + 1}^{2} \left( \beta_{i}^{n} \right).
\end{align}
In the following, we show that
\begin{equation}
\label{Eqn:prob1} \frac{ \nu_{1, m}^{-1} ( \lambda)}{n} \sum_{i = 1}^{n} \zeta_{i}^{n} \left( 1 \right) \overset{p}{ \to} 0.
\end{equation}
The corresponding result for $\zeta_{i}^{n} \left( 2 \right)$ can be proven similarly. We begin with the following Lemma.
\begin{lemma}
\label{Lem:lemsim} For $x \in \mathbb{R}^{m}$, we define a norm $|| x || = \sum_{k = 1}^{m} \left| x_{k} \right|$. Then we have
\begin{equation} \label{Eqn:app}
\frac{1}{n} \sum_{i = 1}^{n} \E \left[ || \sqrt{N} D_i^m X - \beta_{i}^{n} ||^{2} \right] \to 0.
\end{equation}
\end{lemma}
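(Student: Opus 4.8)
The plan is to expand each increment of $X$ via the representation \eqref{Eqn:X} and control the two resulting error terms separately. Writing the $j$th coordinate ($j=1,\dots,m$) of $\sqrt{N}D_i^m X-\beta_i^n$ as $\sqrt{N}\bigl(\Delta_k^N X-\sigma_{(i-1)/n}\Delta_k^N W\bigr)$ with $k=(i-1)m+j$, equation \eqref{Eqn:X} gives the decomposition
\begin{equation*}
\sqrt{N}\int_{(k-1)/N}^{k/N} a_u\,\text{d}u \;+\; \sqrt{N}\int_{(k-1)/N}^{k/N}\bigl(\sigma_u-\sigma_{(i-1)/n}\bigr)\,\text{d}W_u .
\end{equation*}
Because $m$ is fixed, the Cauchy--Schwarz inequality yields $\|x\|^2\le m\sum_{j=1}^m x_j^2$, so after a trivial $(a+b)^2\le 2a^2+2b^2$ split it suffices to show that the $\tfrac1n\sum_{i=1}^n$-average of $m\sum_{j=1}^m\E[(\,\cdot\,)^2]$ tends to $0$ for each of the two displayed terms.

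The drift term is handled trivially: since $a$ is bounded (recall we may assume $a$ and $\sigma$ bounded by a constant $C$, as noted before the proof), $\bigl|\sqrt{N}\int_{(k-1)/N}^{k/N}a_u\,\text{d}u\bigr|\le CN^{-1/2}$, so its overall contribution is of order $m^2C^2/N\to0$. For the diffusive term I would apply It\^{o}'s isometry, using that $u\mapsto\sigma_u-\sigma_{(i-1)/n}$ is adapted on $[(i-1)/n,i/n]$; summing the $m$ subintervals of length $1/N$ that tile the $i$th block, its contribution to $\sum_{j=1}^m\E[(\,\cdot\,)^2]$ equals $N\int_{(i-1)/n}^{i/n}\E\bigl[(\sigma_u-\sigma_{(i-1)/n})^2\bigr]\,\text{d}u$. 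Multiplying by $m$, averaging over $i$, and using $mN/n=m^2$, the whole diffusive contribution is bounded by $2m^2\int_0^1\E\bigl[(\sigma_u-\sigma_{\tau_n(u)})^2\bigr]\,\text{d}u$, where $\tau_n(u)$ denotes the left endpoint of the block containing $u$. The lemma follows once this integral is shown to vanish.

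This last step is the only one that needs care. Since $\sigma$ is c\`{a}dl\`{a}g its paths have at most countably many jumps, so at Lebesgue-almost every $u$ the path $t\mapsto\sigma_t$ is continuous at $u$; as $\tau_n(u)\uparrow u$ this gives $\sigma_{\tau_n(u)}\to\sigma_u$ almost surely, and boundedness of $\sigma$ lets me pass to expectations by dominated convergence, so $\E[(\sigma_u-\sigma_{\tau_n(u)})^2]\to0$ for a.e.\ $u$. A second dominated-convergence argument, now in the variable $u$ with the integrand bounded by $4C^2$, yields $\int_0^1\E[(\sigma_u-\sigma_{\tau_n(u)})^2]\,\text{d}u\to0$, completing the argument. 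The main obstacle --- modest as it is --- is precisely this point: one cannot expect $\E[(\sigma_u-\sigma_{(i-1)/n})^2]$ to be uniformly small over the blocks because $\sigma$ may jump, so the proof must proceed through almost-everywhere pointwise convergence together with integrability, rather than through a uniform modulus-of-continuity bound.
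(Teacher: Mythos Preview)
Your proof is correct and follows essentially the same approach as the paper's: both use the bound $\|x\|^2\le m\sum_k|x_k|^2$, handle the drift by boundedness of $a$, reduce the diffusive part via It\^o's isometry to $m^2\int_0^1\E[(\sigma_u-\sigma_{[nu]/n})^2]\,\text{d}u$, and conclude by dominated convergence from the c\`adl\`ag property of $\sigma$. Your write-up is slightly more explicit about the two-stage dominated convergence (first in $\omega$, then in $u$); note that the claim ``for Lebesgue-a.e.\ $u$, $\sigma_{\tau_n(u)}\to\sigma_u$ a.s.'' is best justified via Fubini on the product space $\Omega\times[0,1]$, since the pathwise exceptional set of $u$'s depends on $\omega$.
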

\begin{proof}[Proof of Lemma \ref{Lem:lemsim}]
The boundedness of the drift function $a$ and $|| x ||^{2} \leq m \sum_{k = 1}^{m} \left| x_{k} \right|^{2}$ yield
\begin{align*}
\E \left[ || \sqrt{N} D_i^m X - \beta_{i}^{n} ||^{2} \right] &\leq m C \left( \frac{1}{n} + N \sum_{k = 1}^{m} \E \left[ \int_{ \frac{i - 1}{n} + \frac{k - 1}{N}}^{ \frac{i - 1}{n} + \frac{k}{N}} | \sigma_{u} - \sigma _{ \frac{i - 1}{n}} |^{2} \text{d}u \right] \right) \\[0.25cm]
&= m C \left( \frac{1}{n} + N \E \left[ \int_{ \frac{i - 1} {n}}^{ \frac{i}{n}} | \sigma_{u} - \sigma_{ \frac{i - 1}{n}} |^{2} \text{d}u \right] \right).
\end{align*}
Hence, the left side of \eqref{Eqn:app} is smaller than
\begin{equation*}
m C \left( \frac{1}{n} + m \E \int_{0}^{1}  | \sigma_{u} - \sigma_{ \left[ nu \right] / n} |^{2}  \text{d}u \right).
\end{equation*}
Because $\sigma$ is bounded and c\`{a}dl\`{a}g (and so $\sigma_{u} - \sigma_{ \left[ nu \right] / n} \overset{p}{ \to} 0$ with an exception of countably many $u \in [0,1]$), the assertion of Lemma \ref{Lem:lemsim} is proved by Lebesgue's theorem. \qed
\end{proof}

\noindent Next, we set
\begin{equation*}
m_{A} \left( \epsilon \right) = \sup\{ | g_{ \lambda m}^{2} \left( x \right) - g_{ \lambda m}^{2} \left( y \right) | : || x - y || \leq \epsilon,|| x || \leq A \}.
\end{equation*}
For all $\epsilon \in \left( 0, 1 \right]$ and $A > 1$, we obtain the estimate
\begin{align*}
\zeta_{i}^{n} \left( 1 \right) &\leq C \Bigl( m_{A} \left( \epsilon \right) + A^{2} 1_{ \{ || \sqrt{N} D_i^m X - \beta_{i}^{n} || > \epsilon \}} \\[0.25cm]
&+ \left( g_{ \lambda m}^{2} \left( \sqrt{N} D_i^m X \right) + g_{ \lambda m}^{2} \left( \beta_{i}^{n} \right) \right) \left( 1_{ \{ || \sqrt{N} D_i^m X || > A \}} + 1_{ \{ || \beta_{i}^{n} || > A \}} \right) \Bigr) \\[0.25cm]
&\leq C \left( m_{A} \left( \epsilon \right) + \frac{ A^{2} || \sqrt{N} D_i^m X - \beta_{i}^{n} ||^{2}}{ \epsilon^{2}} + \frac{ \left( || \sqrt{N} D_i^m X || + || \beta_{i}^{n} || \right)^{3}}{A} \right).
\end{align*}
The boundedness of $a$ and $\sigma$ and Burkholder's inequality imply that
\begin{equation} \label{incrineq}
\E \left[ || \sqrt{N} D_i^m X ||^{p} \right] + \E \left[ || \beta_{i}^{n} ||^{p} \right] \leq C_{p},
\end{equation}
for all $p \geq 0$. This means that
\begin{equation*}
\frac{1}{n} \sum_{i = 1}^{n} \E \left[ | \zeta_{i}^{n} \left( 1 \right) | \right] \leq C \left( m_{A} \left( \epsilon \right) + \frac{1}{A} + \frac{A^{2}}{n \epsilon^{2}} \sum_{i = 1}^{n} \E \left[ || \sqrt{N} D_i^m X - \beta_{i}^{n} ||^{2} \right] \right).
\end{equation*}
Because $m_{A} \left( \epsilon \right) \to 0$ as $\epsilon \to 0$ for every $A$, we obtain by Lemma
\ref{Lem:lemsim}
\begin{equation}
\label{Eqn:pn1} \frac{1}{n} \sum_{i = 1}^{n} \E \left[ | \zeta_{i}^{n} \left( 1 \right) | \right] \to 0,
\end{equation}
by first choosing $A$ large, then $\epsilon $ small and finally $n$ large. Then \eqref{Eqn:prob1} holds and the proof of Theorem \ref{Thm:QRVconsistency} is complete. \qed \\[-0.50cm]
\end{proof}

\begin{proof}[Proof of Theorem \ref{Thm:QRVcentrallimit}]
We proceed with a three-stage proof of Theorem \ref{Thm:QRVcentrallimit}. First, we prove a CLT for the sequence $\bar{U}_{n} = (\bar{U}_{n}^1, \ldots, \bar{U}_{n}^k)$ with
\begin{equation*}
\bar{U}_{n}^j = \sqrt{ \frac{m}{n}} \sum_{i = 1}^{n} \eta_{i}^{n}(j), \qquad \eta_{i}^{n}(j) = \nu_{1, m}^{-1} \left( \lambda_j \right) \left(
w_{i}^{ \left( n, m \right)} \left( \lambda_j \right) - \E \left[ w_{i}^{ \left( n, m \right)} \left( \lambda_j \right) \mid \mathcal{F}_{ \frac{i -1}{n}} \right]\right).
\end{equation*}
More precisely, we show that $\bar{U}_{n} \overset{d_{s}}{\to} MN \left( 0, \Theta (m,\overline{\lambda}) IQ \right)$. Next, we again consider the case $k=1$, $\overline{\lambda} = \lambda$. The second step is to define a new sequence:
\begin{equation*}
U_{n}' = \nu_{1, m}^{-1} ( \lambda) \sqrt{ \frac{m}{n}} \sum_{i = 1}^{n} \left( q_{i} \left(m, \lambda \right) - \E \left[ q_{i} \left(m, \lambda \right)
\mid \mathcal{F}_{ \frac{i - 1}{n}} \right] \right),
\end{equation*}
and show the result
\begin{equation*}
U_{n}' - \bar{U}_{n} \overset{p}{ \to} 0.
\end{equation*}
Finally, in part III, we prove the convergences:
\begin{align*}
& \sqrt{ \frac{m}{n}} \sum_{i = 1}^{n} \left( \nu_{1, m}^{-1} ( \lambda) \E \left[ q_{i} \left(m, \lambda \right) \mid \mathcal{F}_{ \frac{i - 1}{n}} \right] - \E \left[ \xi_{i}^{n} \mid \mathcal{F}_{ \frac{i - 1}{n}} \right] \right) \overset{p}{ \to} 0, \\[0.25cm]
& \sqrt{ \frac{m}{n}} \left( \sum_{i = 1}^{n} \E \left[ \xi_{i}^{n} \mid \mathcal{F}_{ \frac{i - 1}{n}} \right] - \int_{0}^{1} \sigma_{u}^{2} \text{d}u \right) \overset{p}{ \to} 0.
\end{align*}
Clearly, the afore-mentioned steps imply the assertion of Theorem \ref{Thm:QRVcentrallimit}.
\begin{proof}[Proof of part I]
Notice that:
\begin{equation*}
\frac{m}{n} \sum_{i = 1}^{n} \E \left[  \eta_{i}^{n} (j) \eta_{i}^{n} (l) \mid \mathcal{F}_{ \frac{i - 1}{n}} \right] \overset{p}{ \to} m \frac{ \nu_{1} \left( m, \lambda_j, \lambda_l \right) - \nu_{1} \left(m, \lambda_j \right) \nu_{1}\left(m, \lambda_l \right)}{ \nu_{1} \left(m, \lambda_j \right) \nu_{1}\left(m, \lambda_l \right)} \int_{0}^{1} \sigma_{u}^{4} \text{d}u.
\end{equation*}
for any $1\leq j,l\leq k$. Moreover, since $W \overset{d}{=} -W$ and $w_{i}^{ \left( n, m \right)} \left( \lambda_j \right)$ is an even functional in $W$, we have
\begin{equation*}
\E \left[ \eta_{i}^{n} (j) \Delta_{i}^{n} W \mid \mathcal{F}_{ \frac{i - 1}{n}} \right] = 0, \qquad 1\leq j\leq k.
\end{equation*}
Next, let $H = \left( H_{t} \right)_{t \in \left[ 0, 1 \right]}$ be a bounded martingale on $\bigl( \Omega, \mathcal{F}, \left( \mathcal{F}_{t} \right)_{t \geq 0}, \mathbb{P} \bigr)$, which is orthogonal to $W$ (i.e., with quadratic covariation $\left[ W, H \right] = 0$, almost surely). By the Clark's representation theorem (see, e.g., Karatzas and Shreve, 1998, Appendix E) we obtain
\begin{equation*}
\eta_{i}^{n} (j) = \sigma_{\frac{i-1}{n}}^2 \int_{\frac{i-1}{n}}^{\frac{i}{n}} F_s^n(j) dW_s
\end{equation*}
for some predictable process $F_s^n(j)$. Then,
\begin{equation*}
\sqrt{ \frac{m}{n}} \sum_{i = 1}^{n} \E \left[ \eta_{i}^{n} \Delta_{i}^{n} H \mid \mathcal{F}_{ \frac{i - 1}{n}} \right] = 0,
\end{equation*}
because $\left[ W, H \right] = 0$. Finally, stable convergence in law follows by Theorem IX 7.28 in \citet*{jacod-shiryaev:03a}:
\begin{equation*}
\bar{U}_{n} \overset{d_{s}}{\to} MN \left( 0, \Theta (m,\overline{\lambda}) IQ \right),
\end{equation*}
which completes the proof of part I. \qedi
\end{proof}
\begin{proof}[Proof of part II]
We begin by setting
\begin{equation*}
\delta_{i}^{n} = \nu_{1, m}^{-1} ( \lambda) \sqrt{ \frac{m}{n}} \left( q_{i} \left(m, \lambda \right) - w_{i}^{ \left( n, m \right)} \left( \lambda \right) \right),
\end{equation*}
and obtain the identity:
\begin{equation*}
U_{n}' - \bar{U}_{n} = \sum_{i = 1}^{n} \left( \delta _{i}^{n} - \E \left[ \delta _{i}^{n} \mid \mathcal{F}_{ \frac{i - 1}{n}} \right] \right).
\end{equation*}
To complete the second step, it suffices that
\begin{equation*}
\sum_{i = 1}^{n} \E \left[ | \delta_{i}^{n} |^{2} \right] \to 0.
\end{equation*}
We omit the proof of this result, as it be shown by using exactly the same methods behind the proof of the convergence in \eqref {Eqn:pn1} in Theorem \ref{Thm:QRVconsistency}. \qedi
\end{proof}
\begin{proof}[Proof of part III]
It holds that:
\begin{equation*}
\sqrt{ \frac{m}{n}} \left( \sum_{i = 1}^{n} \E \left[ \xi_{i}^{n} \mid \mathcal{F}_{ \frac{i - 1}{n}} \right] - \int_{0}^{1} \sigma_{u}^{2} \text{d}u \right) = \sqrt{mn} \sum_{i = 1}^{n} \int_{ \frac{i - 1}{n}}^{ \frac{i}{n}} \left( \sigma_{ \frac{i - 1}{n}}^{2} - \sigma_{u}^{2} \right) \text{d}u.
\end{equation*}
Exploiting the results of Section 8 (Part 2) in \citet*{barndorff-nielsen-graversen-jacod-podolskij-shephard:06a} (recall that $m$ is a fixed
number), we find that, under condition (V), the convergence
\begin{equation} \label{convsigma}
\sqrt{ \frac{m}{n}} \left( \sum_{i = 1}^{n} \E \left[ \xi_{i}^{n} \mid \mathcal{F}_{ \frac{i - 1}{n}} \right] - \int_{0}^{1} \sigma_{u}^{2} \text{d}u \right) \overset{p}{ \to} 0,
\end{equation}
holds. Now, we prove the first convergence of part III stated above. Under condition (V), we introduce the decomposition
\begin{equation*}
\sqrt{N} D_i^m X - \beta_{i}^{n} = \mu_{i}^{n} \left( 1 \right) + \mu_{i}^{n} \left( 2 \right),
\end{equation*}
where $\mu_{i}^{n} \left( 1 \right)$ and $\mu_{i}^{n} \left( 2 \right)$ are $m$-dimensional vectors with components defined by
\begin{align}
\left( \mu_{i}^{n} \left( 1 \right) \right)_{k} &= \sqrt{N} \int_{ \frac{i - 1}{n} + \frac{k - 1}{N}}^{ \frac{i - 1}{n} + \frac{k}{N}} \left( a_{u} - a_{ \frac{i - 1}{n}} \right) \text{d}u + \sqrt{N} \int_{ \frac{i - 1}{n} + \frac{k - 1}{N}}^{ \frac{i - 1}{n} + \frac{k}{N}} \Bigl( \int_{ \frac{i - 1}{n} + \frac{k - 1}{N}}^{u} a_{s}' \text{d}s \nonumber \\[0.25cm]
&+ \int_{ \frac{i - 1}{n} + \frac{k - 1}{N}}^{u} \left( \sigma_{s}' - \sigma_{ \frac{i - 1}{n}}' \right) \text{d}W_{s} + \int_{ \frac{i - 1}{n} + \frac{k - 1}{N}}^{u} \left( v_{s}' - v_{ \frac{i - 1}{n}}' \right) \text{d}B_{s}' \Bigr) \text{d}W_{u}, \nonumber \\[0.25cm]
\left( \mu_{i}^{n} \left( 2 \right) \right)_{k} &= \sqrt{N} \Big( \frac{1}{N} a_{ \frac{i - 1}{n}} + \sigma_{ \frac{i - 1}{n}}' \int_{ \frac{i - 1}{n} + \frac{k - 1}{N}}^{ \frac{i - 1}{n} + \frac{k}{N}} \left( W_{u} - W_{ \frac{i - 1}{n}} \right) \text{d}W_{u} \nonumber \\[0.25cm]
\label{Eqn:mu} &+ v_{ \frac{i - 1}{n}}' \int_{ \frac{i - 1}{n} + \frac{k}{N}}^{ \frac{i - 1}{n} + \frac{k}{N}} \left( B_{u}' - B_{ \frac{i - 1}{n}}' \right) \text{d}W_{u} \Big),
\end{align}
for $k = 1, \ldots, m$. Moreover, we decompose
\begin{equation*}
\nu_{1, m}^{-1} ( \lambda) q_{i} \left(m, \lambda \right) - \xi_{i}^{n} = \theta_{i}^{n} \left( 1 \right) + \theta_{i}^{n} \left( 2 \right),
\end{equation*}
where
\begin{align}
\theta_{i}^{n} \left( 1 \right) &= \nu_{1, m}^{-1} ( \lambda) \Big( g_{ \lambda m}^{2} \left( \sqrt{N} D_i^m X \right) - g_{ \lambda m}^{2} \left( \beta_{i}^{n} + \mu_{i}^{n} \left( 2 \right) \right)
\nonumber \\[0.25cm]
&+ g_{m - \lambda m + 1}^{2} \left( \sqrt{N} D_i^m X \right) - g_{m - \lambda m + 1}^{2} \left( \beta_{i}^{n} + \mu_{i}^{n} \left( 2 \right) \right) \Big), \nonumber \\[0.25cm]
\theta_{i}^{n} \left( 2 \right) &= \nu_{1, m}^{-1} ( \lambda) \Big( g_{ \lambda m}^{2} \left( \beta_{i}^{n} + \mu_{i}^{n} \left( 2 \right) \right) - g_{ \lambda m}^{2} \left( \beta_{i}^{n} \right) \nonumber \\[0.25cm]
\label{Eqn:theta} &+ g_{m - \lambda m + 1}^{2} \left( \beta_{i}^{n} + \mu_{i}^{n} \left( 2 \right) \right) - g_{m - \lambda m + 1}^{2} \left( \beta_{i}^{n} \right) \Big).
\end{align}
Using the same methods as for the proof of \eqref{Eqn:pn1} in Theorem \ref{Thm:QRVconsistency}, we obtain
\begin{equation*}
\sqrt{ \frac{m}{n}} \sum_{i = 1}^{n} \E \left[ | \theta_{i}^{n} \left( 1 \right) | \right] \to 0,
\end{equation*}
which implies
\begin{equation*}
\sqrt{ \frac{m}{n}} \sum_{i = 1}^{n} \E \left[ \theta_{i}^{n} \left( 1 \right) \mid \mathcal{F}_{ \frac{i - 1}{n}} \right] \overset{p}{ \to} 0.
\end{equation*}
Thus, we are left to prove that
\begin{equation*}
\sqrt{ \frac{m}{n}} \sum_{i = 1}^{n} \E \left[ \theta_{i}^{n} \left( 2 \right) \mid \mathcal{F}_{ \frac{i - 1}{n}} \right] \overset{p}{ \to} 0.
\end{equation*}
Now we apply Lemma \ref{Lem:gdiff} to the term $\theta_{i}^{n} \left( 2 \right)$ with $\epsilon = N^{- 1 / 2}$ and
\begin{align*}
x &= \beta_{i}^{n}, \\[0.25cm]
y &= \sqrt{N} \mu_{i}^{n} \left( 2 \right).
\end{align*}
Notice that as $\sigma$ does not vanish, we have $\left( \beta_{i}^{n} \right)_{k} \not = \left( \beta_{i}^{n} \right)_{l}$ for all $1 \leq k < l \leq m$ almost surely, and consequently the assumptions of Lemma \ref{Lem:gdiff} are satisfied. Finally, we obtain the stochastic expansion
\begin{align*}
\sqrt{ \frac{m}{n}} \sum_{i = 1}^{n} \E \left[ \theta_{i}^{n} \left( 2 \right) \mid \mathcal{F}_{ \frac{i - 1}{n}} \right] &= 2 \nu_{1, m}^{-1} ( \lambda) \sqrt{ \frac{m}{n}} \sum_{i = 1}^{n} \E \Big[ g_{ \lambda m} \left( \beta_{i}^{n} \right) \Big( \mu_{i}^{n} \left( 2 \right) \Big)_{ \left( \lambda m
\right)*} \\[0.25cm]
&+ g_{m - \lambda m + 1} \left( \beta_{i}^{n} \right) \Big( \mu_{i}^{n} \left( 2 \right) \Big)_{ \left( m - \lambda m + 1 \right)*} \mid \mathcal{F}_{ \frac{i - 1}{n}} \Big] + o_{p} \left( 1 \right),
\end{align*}
where we recall that $\left( \lambda m \right)*$ is defined by
\begin{equation*}
\left( \lambda m \right)* = k \quad \text{with} \quad \left( \beta_{i}^{n} \right)_{k} = \left( \beta_{i}^{n} \right)_{ \left( \lambda m \right)}.
\end{equation*}
Now $\left( W, V \right) \overset{d}{=} - \left( W, V \right)$ and
\begin{equation*}
g_{ \lambda m} \left( \beta_{i}^{n} \right) \Big( \mu_{i}^{n} \left( 2 \right) \Big)_{ \left( \lambda m \right)*} + g_{m - \lambda m + 1} \left( \beta_{i}^{n} \right) \Big( \mu_{i}^{n} \left( 2 \right) \Big)_{ \left( m - \lambda m + 1 \right)*}
\end{equation*}
is odd in $\left( W, V \right)$, which implies that
\begin{equation} \label{nullidentity}
\E \Big[ g_{ \lambda m} \left( \beta_{i}^{n} \right) \Big( \mu_{i}^{n} \left( 2 \right) \Big)_{ \left( \lambda m \right)*} + g_{m - \lambda m + 1} \left( \beta_{i}^{n} \right) \Big( \mu_{i}^{n} \left( 2 \right) \Big)_{ \left( m - \lambda m + 1 \right)*} \mid \mathcal{F}_{ \frac{i - 1}{n}} \Big] = 0.
\end{equation}
Consequently,
\begin{equation*}
\sqrt{ \frac{m}{n}} \sum_{i = 1}^{n} \E \left[ \theta_{i}^{n} \left( 2 \right) \mid \mathcal{F}_{ \frac{i - 1}{n}} \right] \overset{p}{ \to} 0,
\end{equation*}
which completes the proof of part III and, hence, Theorem \ref{Thm:QRVcentrallimit} holds. \qed
\end{proof}
\end{proof}

\begin{proof}[Proof of Proposition \ref{prop:QRVjumprobust}]

Here we show that the CLT in Theorem \ref{Thm:QRVcentrallimit} is robust to the presence of finite activity jumps. We consider a process $X$ of the type \eqref{Eqn:Xjump}. Let $X^c$ denote the continuous part of $X$ and set $J_n= \{1\leq i\leq n| ~\mbox{X jumps on the interval } [\frac{i-1}{n}, \frac{i}{n}] \}$. Again it is sufficient to assume that $k=1$ and $\overline \lambda =\lambda \in (1/2, 1)$. Now we use the decomposition
\begin{equation*}
\sqrt N \left( QRV_{N}(m,\lambda) - IV \right) = \sqrt N \left( \frac{1}{\nu_1(m,\lambda)} \frac{m}{  N} \sum_{i \in J_n^c} q_{i} (m,\lambda) - IV \right) + \frac{1}{\nu_1(m,\lambda)} \frac{m}{ \sqrt N} \sum_{i \in J_n} q_{i} (m,\lambda).
\end{equation*}
Recall that $X$ has only finitely many jumps on compact intervals (a.s.), so the second sum on the right-hand side  is finite. Therefore, the first term on the right-hand side converges stably to the limit described in Theorem \ref{Thm:QRVcentrallimit} (since this is a statistic based on $X^c$). We need to show that
\begin{equation*}
\frac{1}{ \sqrt N} q_{i} (m,\lambda)\overset{p}{ \to} 0
\end{equation*}
for any $i \in J_n$. It is well-known that the probability of having two jumps within the interval $[\frac{i-1}{n}, \frac{i}{n}]$ is negligible, so we assume that $X$ jumps one time at $s\in [\frac{i-1}{n}, \frac{i}{n}]$ for some $i \in J_n$. Recall that $q_{i} (m,\lambda)=g^{2}_{ \lambda m} ( \sqrt{N} D_i^m X ) + g^{2}_{m - \lambda m + 1} ( \sqrt{N} D_i^m X )$. Due to \eqref{incrineq} we have that ($\Delta X_s= X_s - X_{s-}$)
\begin{equation*}
\frac{1}{ \sqrt N} q_{i} (m,\lambda)1_{\{ |\Delta X_s| \geq 2 ||\sqrt{N} D_i^m X^c||\}}\leq \frac{C} { \sqrt N} ||\sqrt{N} D_i^m X^c||^2 \overset{p}{ \to} 0,
\end{equation*}
since on $\{ |\Delta X_s| \geq 2 ||\sqrt{N} D_i^m X^c||\}$ the jump is contained in $(D_i^m X)_{(0)}$ or $(D_i^m X)_{(m)}$, which both do not appear in $q_{i} (m,\lambda)$ because $\lambda \in (1/2,1)$. On the other hand, we have that
\begin{equation*}
\frac{1}{ \sqrt N} q_{i} (m,\lambda)1_{\{ |\Delta X_s| < 2 ||\sqrt{N} D_i^m X^c||\}}\leq \frac{C \left(||\sqrt{N} D_i^m X^c||^2 + |\Delta X_s|^2\right)
|| D_i^m X^c||^2} { \sqrt N |\Delta X_s|^2} \overset{p}{ \to} 0
\end{equation*}
again due to \eqref{incrineq}. This completes the proof of Proposition \ref{prop:QRVjumprobust}. \qed
\end{proof}

\begin{proof}[Proof of Theorem \ref{Thm:QRVsubcentrallimit}]

In the following we assume that the process $X$ is continuous (the robustness to finite activity jumps is shown as in Theorem \ref{Thm:QRVcentrallimit}). Here we show the CLT for the subsampled statistic $QRV_N^{sub}(m,\overline{\lambda},\alpha)$. Note that the summands $q_i^{sub}(m, \lambda_j)$ ($1\leq j\leq k$) in the definition of $QRV_N^{sub}(m,\overline{\lambda},\alpha)$ are m-dependent (i.e. $q_i^{sub}(m, \lambda_j)$ and $q_l^{sub}(m, \lambda_j)$ are correlated for $|i-l|<m$), which makes the proof more complicated.

We will apply ``big blocks \& small blocks''-technique to break this dependence. More precisely, we will build big blocks of size $pm$, which will be separated by a small block of size $m$. This procedure ensures the (conditional) independence of big blocks, whereas the small blocks become
asymptotically negligible when we later let $p$ converge to infinity.

For this purpose we require some additional notations. First, set
$$a_i(p)=i(p+1)m,~\qquad b_i(p)=i(p+1)m + pm~,$$
and let $A_i(p)$ denote the set of integers $l$ with $a_i(p)\leq l<b_i(p)$ and $B_i(p)$  the set of integers $l$ with $b_i(p)\leq l<a_{i+1}(p)$. Furthermore, let $j_N(p)$ denote the largest integer $j$ with $b_j(p)\leq N$. Notice that $j_N(p)=O( N / p)$. Finally, we set $i_N(p)= (j_N(p) + 1)
(p+1)m$.

Next, as in the proof of Theorem \ref{Thm:QRVcentrallimit}, we define an approximation of $D_{i,m} X$ by
\begin{equation*}
D_{i,m}^l  = \left( \sigma_{ \frac{l}{N}} \Delta_j^N W \right)_{i\leq j \leq i+m-1}
\end{equation*}
with $l\leq i$, and we set
\begin{equation*}
q_{i,l}^{sub}(m,\lambda )= g^2_{\lambda m} (\sqrt N D_{i,m}^l) + g^2_{m-\lambda m+1} (\sqrt N D_{i,m}^l).
\end{equation*}
We further set
\begin{equation*}
\Upsilon_{i,l}^N=  q^{sub}_{i,l}(m,\lambda ) - \mathbb{E} [q^{sub}_{i,l}(m,\lambda )| \mathcal F_{ \frac{l}{N}} ],
\end{equation*}
and
\begin{equation*}
\tilde \Upsilon_{j}^N= \left \{ \begin{array} {cc}
\frac{1}{\nu_1(m,\lambda)} \frac{1}{N} \Upsilon_{j,a_i(p)}^N, & j \in A_i(p) \\
\frac{1}{\nu_1(m,\lambda)} \frac{1}{N} \Upsilon_{j,b_i(p)}^N, & j \in B_i(p) \\
\frac{1}{\nu_1(m,\lambda)} \frac{1}{N} \Upsilon_{j,i_N(p)}^N, & j \geq i_N(p)
\end{array}
\right.
\end{equation*}
Finally, we define
\begin{equation*}
\zeta(p,1)_j^N = \sum_{l=a_j(p)}^{b_j(p) -1} \tilde \Upsilon_{l}^N~, \qquad \zeta(p,2)_j^N = \sum_{l=b_j(p)}^{a_{j+1}(p)-1} \tilde \Upsilon_{l}^N~,
\end{equation*}
and
\begin{equation*}
M(p)^N =  \sum_{j=0}^{j_N(p)} \zeta(p,1)_j^N~, \qquad  N(p)^N = \sum_{j=0}^{j_N(p)} \zeta(p,2)_j^N~, \qquad C(p)^N =  \sum_{j=i_n(p)}^{N} \tilde{\Upsilon}_j^N
\end{equation*}
Notice that the big blocks are collected in $M(p)^N$, the small blocks are contained in $N(p)^N$ and $C(p)^N$ is the sum of the border terms.

Recall that the quantities $M(p)^N, N(p)^N, C(p)^N$ are constructed from the approximations of the true returns. Consequently, we obtain the decomposition
\begin{equation} \label{identitydec}
\sqrt N \left( QRV^{sub}_{N} (m,\lambda)- IV \right) = \sqrt N \left( M(p)^N + N(p)^N + C(p)^N\right) + \gamma_N(p)~,
\end{equation}
where $\gamma_N(p)$ stands for the approximation error when $D_{i,m} X$ is replaced by $D_{i,m}^l$ (plus the error that is due to the replacement of $1/(N-m+1)$ by $1/N$ in the definition of $\tilde \Upsilon_{j}^N$, which is obviously asymptotically negligible). Exactly as in Part II and III of the proof of Theorem \ref{Thm:QRVcentrallimit} we deduce that
$$\lim_{p\rightarrow \infty } \limsup_{N\rightarrow \infty } P(|\gamma_N(p)|>\epsilon )=0$$
for any $\epsilon >0$. Since $C(p)^N$ contains a a fixed number of summands (for fixed $p$ and $m$) and each summand is of order $1/N$, we also obtain
\begin{equation*}
\lim_{p\rightarrow \infty } \limsup_{N\rightarrow \infty } P(\sqrt N|C(p)^N|>\epsilon )=0.
\end{equation*}
Now notice that the summands in the definition of $N(p)^N$ are uncorrelated. Consequently, we obtain
\begin{equation*}
\mathbb{E} \Big[|N(p)^N|^2\Big]\leq \frac{C}{p N},
\end{equation*}
 which implies that
\begin{equation*}
\lim_{p\rightarrow \infty } \limsup_{N\rightarrow \infty } P(\sqrt N|N(p)^N|>\epsilon )=0.
\end{equation*}
We are left to prove the CLT for $\sqrt N  M(p)^N$, where $M(p)^N = (M(p)^N_1, \ldots, M(p)^N_k)$ and each $M(p)^N_l$ is associated with $\lambda_l\in (1/2, 1)$. Set
\begin{equation*}
M(p)^N_l= \sum_{j=0}^{j_N(p)} \zeta(p,1)_{j,l}^N
\end{equation*}
to emphasize the dependence of $\zeta(p,1)_{j,l}^N$ on $\lambda_l$. As in Part I of the proof of Theorem \ref{Thm:QRVcentrallimit} we obtain
\begin{eqnarray*}
&& N^{1 / 2} \sum_{j=0}^{j_N(p)} \mathbb{E} [\zeta(p,1)_{j,l}^N \Delta W(p)_j^N|\mathcal F_{\frac{a_j(p)}{N}}]  = 0  \\[1.5 ex]
&& N^{1 / 2} \sum_{j=0}^{j_N(p)} \mathbb{E} [\zeta(p,1)_{j,l}^N \Delta H(p)_j^N|\mathcal
F_{\frac{a_j(p)}{N}}] = 0~,
\end{eqnarray*}
where $\Delta Y(p)_j^N = Y_{ \frac{b_j(p)}{N}} - Y_{ \frac{a_j(p)}{N}}$ for any process $Y$, and $H$ is a bounded martingale that is orthogonal to $W$. A straightforward computation shows that, for any fixed $p$,
\begin{equation*}
N \sum_{j=0}^{j_N(p)} \mathbb{E} [\zeta(p,1)_{j,l}^N \zeta(p,1)_{j,i}^N|\mathcal F_{\frac{a_j(p)}{N}}]  \overset{p}{ \to} \Theta^{sub}(m,\overline{\lambda})_{il} (p) IQ
\end{equation*}
with
\begin{eqnarray*}
&& \Theta^{sub}(m,\overline{\lambda})_{il} (p) = \frac{p}{p+1} \left( \frac{1}{m} \Theta(m,\overline{\lambda})_{il} \right. \\[1.5 ex]
&& \left. +\frac{2}{ \nu_1(m,\lambda_{i}) \nu_1 (m, \lambda_{l} )} \sum_{k=1}^{m-1} (1- \frac{k}{pm}) \mbox{cov}\left( | U_{ \left( m\lambda_i \right)}^{(0)} |^{2} + | U_{ \left( m - m\lambda_i + 1 \right)}^{(0)} |^{2} ,  | U_{ \left( m\lambda_l \right)}^{(k)} |^{2} + | U_{ \left( m -
m\lambda_l + 1 \right)}^{(k)} |^{2} \right) \right),
\end{eqnarray*}
where $U^{(0)}$ and $U^{(k)}$ are defined in Theorem \ref{Thm:QRVsubcentrallimit}. Now, we deduce by Theorem IX 7.28 in \citet*{jacod-shiryaev:03a}:
\begin{equation*}
\sqrt N  M(p)^N \overset{d_{s}}{\to} Y_p = MN \left( 0, \Theta^{sub}(m,\overline{\lambda})(p) IQ \right)
\end{equation*}
for any fixed $p$. On the other hand, we have that $Y_p \overset{p}{ \to} Y=MN \left( 0,\Theta^{sub}(m,\overline{\lambda}) IQ \right)$ when $p\rightarrow \infty $. This completes the proof of Theorem \ref{Thm:QRVsubcentrallimit}. \qed
\end{proof}

\begin{proof}[Proof of Proposition \ref{prop:QRVmtoinfty}]
First, recall that $U_{(\lambda m)} \overset{p}{ \to} c_{\lambda }$ as $m\rightarrow \infty $. Thus, we immediately obtain the identities
\begin{equation*}
\nu_1(\lambda) = 2c_{ \lambda}^{2}~, \qquad
\nu_1(\lambda_i,\lambda_j)= 4c_{\lambda_i}^{2}c_{\lambda_j}^{2}.
\end{equation*}
To prove the other identities we derive a joint CLT for $(U_{(\lambda_1 m)}^{(0)}, U_{(\lambda_2 m)}^{(k)})$ for $\lambda_1\geq \lambda_2$ (and both are in the interval $(1/2,1)$). Let $\Phi$ denote the $N(0,1)$-distribution and $\phi $ its density. The crucial step is the so-called Bahadur representation that says:
\begin{equation*}
U_{(\lambda_1 m)}^{(0)} - c_{\lambda_1}= \frac{\lambda_1 - \hat F_n(c_{\lambda_1})}{\phi (c_{\lambda_1})} + O(m^{-3/4} \log m) ~~ a.s.,
\end{equation*}
where
\begin{equation*}
\hat F_n(c_{\lambda_1})= \frac 1m \sum_{i=1}^m 1_{\{U_i\leq c_{\lambda_1}\}}
\end{equation*}
is the empirical distribution function. Clearly, we also have
\begin{equation*}
U_{(\lambda_2 m)}^{(k)} - c_{\lambda_2}= \frac{\lambda_2 - \hat F_n(c_{\lambda_2})^{(k)}}{\phi (c_{\lambda_2})} + O(m^{-3/4} \log m) ~~ a.s.,
\end{equation*}
where
\begin{equation*}
\hat F_n(c_{\lambda_2})^{(k)}= \frac 1m \sum_{i=1+k}^{m+k} 1_{\{U_i\leq c_{\lambda_2}\}}.
\end{equation*}
Now we apply the CLT for empirical distribution functions. Assume that $k/m = \mu + o(m^{-1/2})$ ($\mu\in [0,1]$). We obtain the following CLT:
\begin{equation*}
\sqrt m \left( \begin{array} {c} \frac 1m \sum_{i=1}^k 1_{\{U_i\leq c_{\lambda_1}\}} - \mu \lambda_1 \\
\frac 1m \sum_{i=k+1}^{m} 1_{\{U_i\leq c_{\lambda_1}\}} - (1-\mu) \lambda_1 \\
\frac 1m \sum_{i=k+1}^{m} 1_{\{U_i\leq c_{\lambda_2}\}} - (1-\mu) \lambda_2 \\
\frac 1m \sum_{i=m}^{m+k} 1_{\{U_i\leq c_{\lambda_2}\}} - \mu \lambda_2
\end{array}\right) \overset{d}{\to} N(0, \Sigma )
\end{equation*}
with
\begin{equation*}
\Sigma = \left( \begin{array} {cccc}
\mu \lambda_1 (1-\lambda_1) & 0 & 0 & 0 \\
0& (1-\mu) \lambda_1 (1-\lambda_1) & (1-\mu) \lambda_2 (1-\lambda_1)& 0 \\
0& (1-\mu) \lambda_2 (1-\lambda_1) & (1-\mu) \lambda_2 (1-\lambda_2)& 0 \\
0 & 0 & 0 & \mu \lambda_2 (1-\lambda_2)
\end{array}\right)
\end{equation*}
Next, we apply the $\Delta$-method for the function:
\begin{equation*}
f(x,y,z,w)= \Big(\frac{1}{\phi (c_{\lambda_1})} (x+y) , \frac{1}{\phi (c_{\lambda_2})} (z+w) \Big).
\end{equation*}
Clearly,
\begin{equation*}
Df(x,y,z,w)= \left( \begin{array} {cccc}
\frac{1}{\phi (c_{\lambda_1})}   & \frac{1}{\phi (c_{\lambda_1})} & 0 & 0 \\
0& 0 & \frac{1}{\phi (c_{\lambda_2})} & \frac{1}{\phi (c_{\lambda_2})}
\end{array}\right)
\end{equation*}
and we obtain the CLT (set $Df=Df(\mu \lambda_1, (1-\mu) \lambda_1, \mu \lambda_2, (1-\mu) \lambda_2)$):
\begin{equation*}
\sqrt m \left( \begin{array} {c} U_{(\lambda_1 m)}^{(0)} - c_{\lambda_1} \\
U_{(\lambda_2 m)}^{(k)} - c_{\lambda_2}
\end{array}\right) \overset{d}{\to} N(0, Df~ \Sigma~ (Df)')
\end{equation*}
with
\begin{equation*}
Df~ \Sigma~ (Df)') = \left( \begin{array} {cc}
\frac{\lambda_1(1-\lambda_1)}{\phi^2 (c_{\lambda_1})}   & \frac{(1-\mu)\lambda_2(1-\lambda_1)}{\phi (c_{\lambda_1}) \phi (c_{\lambda_2})} \\[1.5 ex]
\frac{(1-\mu)\lambda_2(1-\lambda_1)}{\phi (c_{\lambda_1}) \phi (c_{\lambda_2})} & \frac{\lambda_2(1-\lambda_2)}{\phi^2 (c_{\lambda_2})}
\end{array}\right)
\end{equation*}
Now apply again the $\Delta$-method  for the function $f(x,y)= (x^2,y^2)$:
\begin{equation*}
\sqrt m \left( \begin{array} {c} | U_{(\lambda_1 m)}^{(0)}|^2 - c_{\lambda_1}^2 \\
|U_{(\lambda_2 m)}^{(k)}|^2 - c_{\lambda_2}^2
\end{array}\right) \overset{d}{\to} N\left(0, 4\left(
\begin{array} {cc}
\frac{c_{\lambda_1}^2 \lambda_1(1-\lambda_1)}{\phi^2 (c_{\lambda_1})}   & \frac{c_{\lambda_1} c_{\lambda_2}(1-\mu)\lambda_2(1-\lambda_1)}{\phi (c_{\lambda_1}) \phi (c_{\lambda_2})} \\[1.5 ex]
\frac{c_{\lambda_1} c_{\lambda_2}(1-\mu)\lambda_2(1-\lambda_1)}{\phi (c_{\lambda_1}) \phi (c_{\lambda_2})} & \frac{c_{\lambda_2}^2 \lambda_2(1-\lambda_2)}{\phi^2 (c_{\lambda_2})}
\end{array}\right)  \right)
\end{equation*}
From the latter CLT we deduce that
\begin{equation*}
\Theta(\overline{\lambda})_{12} = 2\frac{ (1 - \lambda_1 ) ( 2 \lambda_2 - 1)}{ \phi \bigl( c_{ \lambda_1} \bigr) \phi \bigl( c_{
\lambda_2} \bigr)  c_{ \lambda_1} c_{ \lambda_2} }.
\end{equation*}
Finally, recall that
\begin{align*}
\Theta^{sub}(m,\overline{\lambda})_{12} &= \frac{1}{m} \Theta(m,\overline{\lambda})_{12} \\
&+ \frac{2}{ \nu_1(m,\lambda_{1}) \nu_1 (m, \lambda_{2} )} \sum_{k=1}^{m-1} \mbox{cov}\left( | U_{ \left( m\lambda_1 \right)}^{(0)} |^{2} + | U_{ \left(
m - m\lambda_1 + 1 \right)}^{(0)} |^{2} ,  | U_{ \left( m\lambda_2 \right)}^{(k)} |^{2} + | U_{ \left( m - m\lambda_2 + 1 \right)}^{(k)} |^{2} \right).
\end{align*}
Clearly
\begin{equation*}
\frac{1}{m} \Theta(m,\overline{\lambda})_{12} \rightarrow 0, \quad \nu_1(m,\lambda_{i}) \rightarrow 2c_{\lambda_{i}}^2, \quad \frac{1}{m} \sum_{k=1}^{m-1} (1-\frac{k}{m}) \rightarrow 1/2
\end{equation*}
as $m\rightarrow \infty $. Hence, for the term $\Theta^{sub} (m, \overline \lambda )_{12}$ we asymptotically obtain (by replacing again $\mu $ by $k/m$):
\begin{equation*}
\Theta^{sub} (m, \overline \lambda )_{12} \rightarrow 2\frac{ (1 - \lambda_1 ) ( 2 \lambda_2 - 1)}{ \phi \bigl( c_{ \lambda_1} \bigr) \phi \bigl( c_{
\lambda_2} \bigr)  c_{ \lambda_1} c_{ \lambda_2} }.
\end{equation*}\qed
\end{proof}

\begin{proof}[Proof of Theorem \ref{Thm:QRV*consistency}]
As in Theorem \ref{Thm:QRVconsistency} it is sufficient to consider the 1-dimensional case $k=1$, $\overline{\lambda} = \lambda \in (1/2, 1)$. For $l\leq i$ we set
$$\beta_{i}^{*N} = N^{1 / 4} \{ \sigma_{ \frac{i}{N}} \overline{W}_{i+(j-1)K}^{N} + \overline{u}_{i+(j-1)K}^{N}\}_{j = 1}^m~,$$
and define
\begin{equation} \label{wprime}
w_{i}^{*(n,m)} = g^2_{\lambda m}(\beta_{i}^{*N}) + g^2_{m-\lambda m +1}(\beta_{i}^{*N}).
\end{equation}
First, we state the following Lemma, which is crucial for our proofs (especially for the CLT).
\begin{lemma} \label{noisemom}
Assume that $\mathbb{E} \left( u^{4}_i \right)<\infty$. Then it holds that
\begin{eqnarray*}
\mathbb{E} [w_{i}^{*(n,m)}| \mathcal{F}_{ \frac{i}{N}} ] = \nu_1(m,\lambda) ( c\psi_2 \sigma_{ \frac{i}{N}}^2 + \frac{c}{\psi_1} \omega^2) + o_p(N^{-1/4})
\end{eqnarray*}
uniformly in $i$.
\end{lemma}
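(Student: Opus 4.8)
\noindent The plan is to evaluate $\mathbb{E}[w_i^{*(n,m)}\mid\mathcal F_{i/N}]$ directly from the conditional law of $\beta_i^{*N}$. The coordinates of $\beta_i^{*N}$, indexed $i+(j-1)K$ for $j=1,\dots,m$, are built from pairwise disjoint blocks of Brownian increments (and of noise variables), all of which lie strictly after time $i/N$ except for the single grid point $u_{i/N}\in\mathcal F_{i/N}$, which enters $\overline u^{N}_i$ with coefficient $h(1/K)=O(1/K)$. Hence, up to a correction of order $O(1/N)$ in the conditional covariance matrix, $\beta_i^{*N}\mid\mathcal F_{i/N}$ is a vector of i.i.d.\ coordinates, each equal to $N^{1/4}(\sigma_{i/N}\overline W^{N}_j+\overline u^{N}_j)$ with $\sigma_{i/N}$ known, $\overline W^{N}_j$ an exact centered Gaussian independent of $\mathcal F_{i/N}$, and $\overline u^{N}_j$ an independent weighted sum of i.i.d.\ noise increments. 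First I would compute the two second moments: $\var(\overline W^{N}_j)=\tfrac1N\sum_{l=1}^{K-1}h^2(l/K)$, and, after summation by parts using $h(0)=h(1)=0$, $\var(\overline u^{N}_j)=\omega^2\sum_{l=0}^{K-1}\bigl(h(\tfrac{l+1}{K})-h(\tfrac lK)\bigr)^2$. Recognising the Riemann sums $\psi_2^n$ and $\psi_1^n$ of $\psi_2,\psi_1$ and using $K=cN^{1/2}+o(N^{1/4})$, one obtains that each coordinate of $\beta_i^{*N}$ has conditional variance $v_i+o(N^{-1/4})$, uniformly in $i$, where $v_i:=c\psi_2\sigma_{i/N}^2+\tfrac{\psi_1}{c}\omega^2$ is exactly the variance appearing in the heuristic display of Section~\ref{sec:QRV*construction}.

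The second step is a continuous-mapping argument. Since $\overline W^{N}_j$ is already Gaussian, the only departure of $\beta_i^{*N}/\sqrt{v_i}$ from $N(0,I_m)$ comes from the pre-averaged noise $N^{1/4}\overline u^{N}_j$, a normalised sum of order $K$ i.i.d.\ terms with weights $O(1/K)$; a Lindeberg triangular-array CLT, trivially valid under $\mathbb{E}(u_i^4)<\infty$, gives $N^{1/4}\overline u^{N}_j\Rightarrow N(0,\tfrac{\psi_1}{c}\omega^2)$ and hence $\beta_i^{*N}\mid\mathcal F_{i/N}\Rightarrow N(0,v_iI_m)$. The functional $x\mapsto g_{\lambda m}^2(x)+g_{m-\lambda m+1}^2(x)$ is continuous by Lemma~\ref{Lem:gdiff} and homogeneous of degree two, so weak convergence together with uniform integrability of $w_i^{*(n,m)}/v_i$ — which I would obtain from a uniform bound $\mathbb{E}[(w_i^{*(n,m)})^2\mid\mathcal F_{i/N}]\le Cv_i^2$ via Rosenthal's (or Burkholder's) inequality and $\mathbb{E}(u_i^4)<\infty$ — yields $\mathbb{E}[w_i^{*(n,m)}\mid\mathcal F_{i/N}]/v_i\to\mathbb{E}[\,|U_{(\lambda m)}|^2+|U_{(m-\lambda m+1)}|^2\,]=\nu_1(m,\lambda)$, which is the asserted identity up to the error term.

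The hard part is sharpening the error to $o_p(N^{-1/4})$ uniformly in $i$ (the weaker conclusion $o_p(1)$ would already suffice for Theorem~\ref{Thm:QRV*consistency}, but the CLT of Theorem~\ref{Thm:QRV*centrallimit} needs the full rate). The variance errors collected in the first step are $O(1/K)=O(N^{-1/2})$ plus the $o(N^{-1/4})$ slack built into $K=cN^{1/2}+o(N^{1/4})$, so they are harmless; what needs care is the speed of the continuous-mapping step, i.e.\ how fast $\mathbb{E}[(g_{\lambda m}^2+g_{m-\lambda m+1}^2)(\beta_i^{*N}/\sqrt{v_i})\mid\mathcal F_{i/N}]$ reaches its Gaussian value. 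Here I would exploit that, after convolution with the genuinely Gaussian component $\sigma_{i/N}\overline W^{N}_j$, the relevant functional of $N^{1/4}\overline u^{N}_j$ is smooth, so a second-order (Edgeworth-type) expansion of the law of the pre-averaged noise applies: the part matching the first two cumulants is exact, the next term is $O(K^{-1/2})$ and carries the third cumulant of $u$ — it is odd in the noise, hence does not enter the conditional mean of the quadratic functional to leading order, and it vanishes outright under the symmetry of $\mathbb{Q}$ assumed for the CLT — while the remainder is $O(K^{-1})=O(N^{-1/2})$. Since $K\sim cN^{1/2}$, all these contributions are $o(N^{-1/4})$, and since every constant depends on $i$ only through the bounded quantities $\sigma_{i/N}$ and $\omega^2$, the bound is uniform in $i$; the deterministic remainder, evaluated at $\sigma_{i/N}$, is therefore $o_p(N^{-1/4})$, which completes the argument.
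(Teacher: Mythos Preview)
Your approach is essentially the same as the paper's: both apply an Edgeworth expansion to the conditional law of $\beta_i^{*N}$ and kill the $O(N^{-1/4})$ correction by observing that the third-cumulant term has an odd density while the functional $g^2_{\lambda m}+g^2_{m-\lambda m+1}$ is even. The paper is terser, invoking an Edgeworth result of Lahiri directly on the order-statistic functional rather than first smoothing through the Gaussian Brownian component as you do, but the parity argument that annihilates the leading correction is identical.
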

\begin{proof} [Proof of Lemma \ref{noisemom}]
By representation \eqref{Eqn:Ybaerep}  we obtain the (conditional) convergence in distribution
\begin{equation*}
\beta_i^{*N} |\mathcal{F}_{ \frac{i}{N}} \overset{d}{ \to} N_{m} \biggl(0, \mbox{diag} \Bigl( c\psi_2 \sigma_{ \frac{i}{N}}^2 + \frac{c}{\psi_1} \omega^2, \ldots, c\psi_2 \sigma_{ \frac{i}{N}}^2 + \frac{c}{\psi_1} \omega^2 \Bigr) \biggr).
\end{equation*}
Denote by $P_m^N$ the distribution of the left-hand side and by $\Phi_m$ the distribution of the right-hand side (by $\phi_m$ we denote the Lebesgue density of $\Phi_m$). Now we apply the Edgeworth-expansion result presented in Lahiri (see Theorem 6.1 and 6.2 therein):
\begin{equation*}
\left| \int (g^2_{\lambda m} + g^2_{m-\lambda m +1}) d(P_m^N - \Phi_m -N^{-1/4}P_{m,1} ) \right | = o_p(N^{-1/4}),
\end{equation*}
which holds under the condition $\mathbb{E} \left( u^{4}_i \right)<\infty$. The quantity $P_{m,1}$ is the second order term of the Edgeworth-expansion that has an odd density $p_{m,1}$ that is given by
\begin{equation*}
p_{m,1}(x)= -\sum_{|\nu| = 3} \frac{\chi_\nu}{\nu!} D^\nu \phi_m (x)~,
\end{equation*}
where $\nu = (\nu_1, \ldots, \nu_m)$ is a nonnegative integer vector ($|\nu| = \nu_1+  \cdots + \nu_m$, $\nu!=\nu_1!  \cdots  \nu_m!$), $\chi_\nu$ are constants that depend on the cumulants of the marginal distribution of the process $u$ and $D^\nu = \partial / \partial x_1^{\nu_1} \cdots \partial / \partial x_m^{\nu_m}$. Since $g^2_{\lambda m} + g^2_{m-\lambda m +1}$ is an even function, we deduce that
\begin{equation*}
 \int (g^2_{\lambda m} + g^2_{m-\lambda m +1}) dP_m (1) =0
\end{equation*}
(because $p_{m,1}$ is odd). Thus
\begin{eqnarray*}
\mathbb{E} [w_i^{*(n,m)}| \mathcal{F}_{ \frac{i}{N}} ] = \nu_1(m,\lambda) ( c\psi_2 \sigma_{ \frac{i}{N}}^2 + \frac{c}{\psi_1} \omega^2) + o_p(N^{-1/4}),
\end{eqnarray*}
which holds uniformly in $i$ because $\sigma $ is bounded. This completes the proof of Lemma \ref{noisemom}. \qed
\end{proof}
By the same methods as presented in the proof of Theorem \ref{Thm:QRVconsistency} (see e.g. \eqref{Eqn:prob1}) we conclude that
\begin{equation} \label{eqest}
QRV^\ast_N (m,\lambda ) - \frac{1}{\nu_1(m,\lambda)} \frac{1}{c\psi_2 (N-m(K-1)+1)} \sum_{i = 0}^{N - m(K-1) } w_{i}^{*(n,m)} \overset{p}{ \to} 0.
\end{equation}
Moreover, by Lemma \ref{noisemom}, the convergence
\begin{equation} \label{eqapp}
\frac{1}{\nu_1(m,\lambda)} \frac{1}{c\psi_2 (N-m(K-1)+1)} \sum_{i = 0}^{N - m(K-1)} \mathbb{E} [w_i^{*(n,m)} | \mathcal{F}_{ \frac{i}{N}} ] \overset{p}{ \to} IV + \frac{\psi_1 \omega^{2}}{c^{2} \psi_2}
\end{equation}
holds. In view of \eqref{eqest} and  \eqref{eqapp} we are left to proving
\begin{equation} \label{pr}
\frac{1}{\nu_1(m,\lambda)} \frac{1}{c\psi_2 (N-m(K-1)+1)} \sum_{i = 0}^{N - m(K-1)} \eta_i^{*N} \overset{p}{ \to} 0, \qquad \eta_i^{*N}= w_i^{*(n,m)} - \mathbb{E} [w_i^{*(n,m)}| \mathcal{F}_{ \frac{i}{N}}].
\end{equation}
Observe that due to the construction of $\beta_i^{*N}$, the boundedness of $\sigma$ and $\mathbb{E} \left( u_{i}^{4} \right)<\infty $ we obtain the estimate
\begin{equation} \label{eqcovest}
\mathbb{E} [\eta_i^{*N} \eta_j^{*N} ]\leq C~, \qquad |i-j|<m(K-1)~,
\end{equation}
whereas $\mathbb{E} [\eta_i^{*N} \eta_j^{*N} ]=0$ for $|i-j|\geq m(K-1)$. Since $m$ is fixed, we deduce the estimate
\begin{equation}
\mathbb{E} \biggl[ \Big|\frac{1}{\nu_1(m,\lambda)} \frac{1}{c\psi_2 (N-m(K-1)+1)} \sum_{i = 0}^{N - m(K-1)} \eta_i^{*N}  \Big|^2 \biggr] \leq \frac{C}{K},
\end{equation}
which completes the proof of Theorem \ref{Thm:QRV*consistency}. \qed
\end{proof}

\begin{proof}[Proof of Theorem \ref{Thm:QRV*centrallimit}]

We assume that the process $X$ is continuous (the robustness to finite activity jumps is shown as in Theorem \ref{Thm:QRVcentrallimit}) and show the CLT for the noise robust statistic $QRV_N^{\ast}(m,\overline{\lambda},\alpha)$. The summands $q_i^{\ast}(m, \lambda_j)$ ($1\leq j\leq k$) in the definition of $QRV_N^{sub}(m,\overline{\lambda},\alpha)$ are now $m(K-1)$-dependent, so we have to apply ``big blocks \& small blocks''-technique once again to break this dependence. More precisely, we will build big blocks of size $pm(K-1)$, which will be separated by a small block of size $m(K-1)$ (again this procedure ensures the (conditional) independence of big blocks, whereas the small blocks become asymptotically negligible when we later let $p$ converge to infinity). Quite often we will use the same notations as in the proof of Theorem \ref{Thm:QRVsubcentrallimit} to emphasize the strong parallels between these proofs.

First, set
$$a_i(p)=i(p+1)m(K-1),~\qquad b_i(p)=i(p+1)m(K-1) + pm(K-1)~,$$
and let $A_i(p)$ denote the set of integers $l$ with $a_i(p)\leq l<b_i(p)$ and $B_i(p)$  the set of integers $l$ with $b_i(p)\leq l<a_{i+1}(p)$. Furthermore, let $j_N(p)$ denote the largest integer $j$ with $b_j(p)\leq N$. Notice that $j_N(p)=O(\sqrt N / p)$. Finally, we set $i_N(p)= (j_N(p) + 1)(p+1)m(K-1)$.

Next, we define an approximation of $\overline{D}_i^N Y$ by
\begin{equation*}
\overline{D}_{i,l}^N  = \{ \sigma_{ \frac{l}{N}} \overline{W}_{i+(j-1)(K-1)}^{N} + \overline{u}_{i+(j-1)(K-1)}^{N}\}_{j = 1}^m
\end{equation*}
with $l\leq i$, and we set
\begin{equation*}
q^\ast_{i,l}(m,\lambda )= g^2_{\lambda m} (N^{1/4} \overline{D}_{i,l}^N ) + g^2_{m-\lambda m+1} (N^{1/4} \overline{D}_{i,l}^N ).
\end{equation*}
We further set
\begin{equation*}
\Upsilon_{i,l}^N=  q^\ast_{i,l}(m,\lambda ) - \mathbb{E} [q^\ast_{i,l}(m,\lambda )| \mathcal F_{ \frac{l}{N}} ],
\end{equation*}
and
\begin{equation*}
\tilde \Upsilon_{j}^N= \left \{ \begin{array} {cc}
\frac{1}{\nu_1(m,\lambda)} \frac{1}{c\psi_2 N} \Upsilon_{j,a_i(p)}^N, & j \in A_i(p) \\
\frac{1}{\nu_1(m,\lambda)} \frac{1}{c\psi_2 N} \Upsilon_{j,b_i(p)}^N, & j \in B_i(p) \\
\frac{1}{\nu_1(m,\lambda)} \frac{1}{c\psi_2 N} \Upsilon_{j,i_N(p)}^N, & j \geq i_N(p)
\end{array}
\right.
\end{equation*}
Finally, we define
\begin{equation*}
\zeta(p,1)_j^N = \sum_{l=a_j(p)}^{b_j(p) -1} \tilde \Upsilon_{l}^N~, \qquad \zeta(p,2)_j^N = \sum_{l=b_j(p)}^{a_{j+1}(p)-1} \tilde \Upsilon_{l}^N~,
\end{equation*}
and
\begin{equation*}
M(p)^N =  \sum_{j=0}^{j_N(p)} \zeta(p,1)_j^N~, \qquad  N(p)^N = \sum_{j=0}^{j_N(p)} \zeta(p,2)_j^N~, \qquad C(p)^N =  \sum_{j=i_n(p)}^{N} \tilde{\Upsilon}_j^N
\end{equation*}
Notice that the big blocks are collected in $M(p)^N$, the small blocks are contained in $N(p)^N$ and $C(p)^N$ is the sum of the border terms.

Recall that the terms $M(p)^N, N(p)^N, C(p)^N$ are constructed from the approximations $\overline{D}_{i,l}^N$. Thus
\begin{equation} \label{eqneg1}
N^{1 / 4} \left( QRV^{ \ast}_{N} (m,\overline{\lambda},\alpha)- \frac{\psi_1}{c^2 \psi_2} \widehat{\omega}^2 - IV \right) = N^{1 / 4} \left( M(p)^N + N(p)^N + C(p)^N\right) + \gamma_N(p)~,
\end{equation}
where $\gamma_N(p)$ stands for the approximation error when $\overline{D}_i^N Y$ is replaced by $\overline{D}_{i,l}^N$ (plus the error that is due to the replacement of $1/(N-m(K-1)+1)$ by $1/N$ in the definition of $\tilde \Upsilon_{j}^N$, which is obviously asymptotically negligible). Now, the convergence
$$\lim_{p\rightarrow \infty } \limsup_{N\rightarrow \infty } P(|\gamma_N(p)|>\epsilon )=0,$$
for any $\epsilon >0$, follows through the lines of Part II and III of the proof of Theorem \ref{Thm:QRVcentrallimit}. In fact, there are two additional difficulties that has to be shown in a different way. First of all we have to prove (under assumption (V)) that
\begin{equation*}
\frac{1}{\nu_1(m,\lambda)} \frac{1}{c\psi_2 (N-m(K-1)+1)} \sum_{i = 0}^{N - m(K-1)} \mathbb{E} [w_i^{*(n,m)} | \mathcal{F}_{ \frac{i}{N}} ] - \left( IV + \frac{\psi_1 \omega^{2}}{c^{2} \psi_2} \right) = o_p(N^{-1/4})
\end{equation*}
(which is the counterpart of \eqref{convsigma}). But the afore-mentioned estimate follows immediately from Lemma \ref{noisemom} and \eqref{convsigma}. The other difficulty arises when we have to deal with the counterpart of the identity \eqref{nullidentity}, which in this case would involve a noise term (and have a slightly different form). However, this type of identity remains true, because  $\left( W, V, u \right) \overset{d}{=} - \left( W, V, u \right)$ since the marginal distribution of $u$ is assumed to be symmetric around $0$.

Since $C(p)^N$ contains at most $(p+1)m(K-1)$ summands and each summand is of order $1/N$, we also obtain
\begin{equation*}
\lim_{p\rightarrow \infty } \limsup_{N\rightarrow \infty } P( N^{1/4} |C(p)^N|>\epsilon )=0.
\end{equation*}
Now notice that the summands in the definition of $N(p)^N$ are uncorrelated and each summand is of order $K/N$. Consequently, we obtain
\begin{equation*}
\mathbb{E} \Big[|N(p)^N|^2\Big]\leq \frac{C}{p \sqrt N},
\end{equation*}
 which implies that
\begin{equation*}
\lim_{p\rightarrow \infty } \limsup_{N\rightarrow \infty } P(N^{1/4}|N(p)^N|>\epsilon )=0.
\end{equation*}
We are left to proving the CLT for $N^{1/4}  M(p)^N$, where $M(p)^N = (M(p)^N_1, \ldots, M(p)^N_k)$ and each $M(p)^N_l$ is associated with $\lambda_l\in (1/2, 1)$. Set
\begin{equation*}
M(p)^N_l= \sum_{j=0}^{j_N(p)} \zeta(p,1)_{j,l}^N
\end{equation*}
to emphasize the dependence of $\zeta(p,1)_{j,l}^N$ on $\lambda_l$ (we also associate $\Upsilon_{j,i}^N (l)$ with $\lambda_l$).

As in Part I of the proof of Theorem \ref{Thm:QRVcentrallimit} we obtain
\begin{eqnarray*}
N^{1 / 4} \sum_{j=0}^{j_N(p)} \mathbb{E} [\zeta(p,1)_{j,l}^N \Delta W(p)_j^N|\mathcal F_{\frac{a_j(p)}{N}}]  = 0,
\end{eqnarray*}
where $\Delta Y(p)_j^N = Y_{ \frac{b_j(p)}{N}} - Y_{ \frac{a_j(p)}{N}}$ for any process $Y$. As in \cite{jacod-li-mykland-podolskij-vetter:09a} (see (5.58)) we have that
\begin{eqnarray*}
N^{1 / 4} \sum_{j=0}^{j_N(p)} \mathbb{E} [\zeta(p,1)_{j,l}^N \Delta H(p)_j^N|\mathcal F_{\frac{a_j(p)}{N}}] \overset{p}{ \to}  0
\end{eqnarray*}
for any bounded martingale $H$ that is orthogonal to $W$.

Now notice that \eqref{Eqn:Ybaerep} implies the identities
\begin{equation} \label{eqcor}
N^{1 / 2} \mathbb{E} [\overline{W}_j^{N} \overline{W}_i^{N}] = c w_{h}\Big( \frac{|j - i|}{K} \Big)  + O(K^{-1}), \quad N^{1 / 2} \mathbb{E} [\overline{u}_j^{N} \overline{u}_i^{N}] = \frac{1}{c} w_{h'} \Big( \frac{|j-i|}{K} \Big) \omega^2 + O(K^{-1}), \quad |j-i|<K-1,
\end{equation}
where the functional $w_{f} (u)$ is given in Definition \ref{def2}. When $|j-i|\geq K-1$ the above covariances are $0$. Assume now that $j>i$ with $j=i +(l-1)(K-1) +d$ for some $1\leq l\leq m$ and $0\leq d\leq K-2$, and $a_z(p)\leq i,j\leq b_z(p) -1 $. Due to the identities in \eqref{eqcor} we obtain
$$\mathbb{E} [\Upsilon_{j,a_z(p)}^N (l_1) \Upsilon_{i,a_z(p)}^N (l_2)| \mathcal F_{\frac{a_z(p)}{N}} ]= f_{m,l,\sigma_{a_z(p)}, \frac{d}{K}} (\lambda_{l_1},  \lambda_{l_2}) + O_p(K^{-1})~,$$
where $f_{m,l,x,u}(\lambda_{l_1},  \lambda_{l_2})$ is given in Definition \ref{def2} ($1\leq l_1,l_2\leq k$). This implies that
\begin{eqnarray*}
&&N^{1 / 2} \sum_{j=0}^{j_N(p)} \mathbb{E} [\zeta(p,1)_{j, l_1}^N \zeta(p,1)_{j, l_2}^N|\mathcal F_{\frac{a_j(p)}{N}}] \\[1.5 ex]
&&\overset{p}{ \to} \frac{p}{p + 1} \frac{2 \nu_{1, m}^{-1} \left( \lambda_{l_1} \right) \nu_{1, m}^{-1} \left( \lambda_{l_2} \right) } {  c \psi_2^2} \sum_{l =1}^{m}  \Big(1- \frac{l - 1}{pm} \Big)\int_{0}^{1} \int_{0}^{1} f_{m, l, \sigma_{t}, u} \left( \lambda_{l_1},  \lambda_{l_2} \right) \text{\upshape{d}}t \text{\upshape{d}}u = \Gamma(p)_{l_1, l_2}.
\end{eqnarray*}
Now, we deduce by Theorem IX 7.28 in \citet*{jacod-shiryaev:03a}:
\begin{equation*}
\sqrt N  M(p)^N \overset{d_{s}}{\to} Z_p = MN \left( 0, \Gamma(p) \right)
\end{equation*}
for any fixed $p$. On the other hand, we have that $Z_p \overset{p}{ \to} Z=MN \left( 0, \frac{2}{c\psi_2^2} \Sigma_m (\lambda_1, \ldots, \lambda_k) \right)$ when $p\rightarrow \infty $. This completes the proof of Theorem \ref{Thm:QRV*centrallimit}. \qed
\end{proof}

\begin{proof}[Proof of Proposition \ref{prop:QRV*feasible}]

Due to the polarization identity it is sufficient to prove Proposition \ref{prop:QRV*feasible} for $k=1$, $\overline \lambda = \lambda \in (1/2,1)$. Recall the definition of $w_i^{*(n,m)}$ in (\ref{wprime}). As in the proof of Theorem \ref{Thm:QRV*consistency} we have that
\begin{eqnarray*}
&&\frac{\nu_1^{-2}(m,\lambda)} {c \psi_2^2 (K-1) (N-3m(K-1)+3)} \sum_{i = m(K-1)-1}^{N - 2m(K-1) + 1} \left( q^\ast_i(m,\lambda) \left(\sum_{j=i-m(K-1)+1}^{i+m(K-1)-1} \{q^\ast_j(m,\lambda) - q^\ast_{i+m(K-1)}(m,\lambda)\} \right) \right. \\[1.5 ex]
&&\left.- w_i^{*(n,m)} \left(\sum_{j=i-m(K-1)+1}^{i+m(K-1)-1} \{w_j^{*(n,m)} - w_{i+m(K-1)}^{*(n,m)}\} \right) \right) \overset{p}{ \to} 0.
\end{eqnarray*}
Next, a straightforward (but somewhat tedious) calculation shows that
\begin{align*}
&\frac{\nu_1^{-2}(m,\lambda)} {c \psi_2^2 (K-1) (N-3m(K-1)+3)} \times \\[1.5 ex]
& \sum_{i = m(K-1)-1}^{N - 2m(K-1) + 1} \mathbb{E} \left[ w_i^{*(n,m)} \left(\sum_{j=i-m(K-1)+1}^{i+m(K-1)-1} \{w_j^{*(n,m)} - w_{i+m(K-1)}^{*(n,m)}\} \right) \Big|\mathcal F_{\frac{i-m(K-1)+1}{N}} \right] \\[1.5 ex]
&\overset{p}{ \to} \frac{2 \nu_{1, m}^{-2} ( \lambda )}{  c\psi_2^2} \sum_{l = 1}^{m} \int_{0}^{1} \int_{0}^{1} f_{m, l, \sigma_{t}, u} \left( \lambda \right) \text{\upshape{d}}t \text{\upshape{d}}u.
\end{align*}
On the other hand, we deduce that
\begin{eqnarray*}
&&\frac{\nu_1^{-2}(m,\lambda)} {c \psi_2^2 (K-1) (N-3m(K-1)+3)} \sum_{i = m(K-1)-1}^{N - 2m(K-1) + 1} \left(w_i^{*(n,m)} \left(\sum_{j=i-m(K-1)+1}^{i+m(K-1)-1} \{w_j^{*(n,m)} - w_{i+m(K-1)}^{*(n,m)}\} \right) \right. \\[1.5 ex]
&&\left.- \mathbb{E} \left[w_i^{*(n,m)} \left(\sum_{j=i-m(K-1)+1}^{i+m(K-1)-1} \{w_j^{*(n,m)} - w_{i+m(K-1)}^{*(n,m)}\} \right) \Big|\mathcal F_{\frac{i-m(K-1)+1}{N}} \right] \right) \overset{p}{ \to} 0
\end{eqnarray*}
as in (\ref{pr}). This completes the proof of Proposition \ref{prop:QRV*feasible}. \qed
\end{proof}

\pagebreak

\renewcommand{\baselinestretch}{1.0}
\small
\bibliographystyle{rfs}
\bibliography{userref}

\end{document}